\documentclass[journal]{IEEEtran}
\usepackage[numbers]{natbib}
\usepackage{amsmath,amsfonts}
\usepackage{array,algorithmic}
\usepackage{textcomp}
\usepackage{multirow}
\usepackage{stfloats}
\usepackage{url}
\usepackage{verbatim}
\usepackage{graphicx}
\usepackage{textcomp}
\usepackage{subfigure}
\usepackage{mathrsfs}
\usepackage{pifont}
\usepackage{diagbox}
\usepackage{booktabs}
\usepackage{amsmath}
\usepackage{amsthm}
\usepackage[linesnumbered,ruled,vlined]{algorithm2e}
\newtheorem{definition}{Definition}
\newtheorem{lemma}{Lemma}

\newtheorem{proposition}{Proposition}
\newtheorem{corollary}{Corollary}
\SetKwInput{KwInput}{Input}
\SetKwInput{KwOutput}{Output} 
\ifCLASSINFOpdf
\else
\fi
\hyphenation{op-tical net-works semi-conduc-tor}
\begin{document}
\title{Complex Network Modelling with Power-law Activating Patterns and Its Evolutionary Dynamics}
\author{Ziyan~Zeng,~\IEEEmembership{Student Member,~IEEE,}
        Minyu~Feng,~\IEEEmembership{Senior Member,~IEEE,}
        Pengfei~Liu,
        and~J\"{u}rgen~Kurths
\thanks{Ziyan Zeng and Minyu Feng are with the College of Artificial Intelligence, Southwest University, Chongqing 400715, China (e-mail: myfeng@swu.edu.cn). }
\thanks{Pengfei Liu is with the School of Computer Science and Engineering,
University of Electronic Science and Technology of China, and Yangtze Delta Region Institute (Quzhou), University of Electronic Science and Technology of China. }
\thanks{J\"{u}rgen Kurths is with the Potsdam Institute for Climate Impact Research, 14437 Potsdam, Germany, and the Institute of Physics, Humboldt University of Berlin, Berlin, Germany. }
\thanks{Manuscript received , 2024; revised , 2024. This work was supported by the National Natural Science Foundation of China (NSFC) (Grant No. 62206230), the Natural Science Foundation of Chongqing (Grant No. CSTB2023NSCQ-MSX0064), and the Graduate Research Innovation Project of Southwest University (Grant No. SWUS24190). (Corresponding author: Minyu Feng.)}}

\markboth{IEEE TRANSACTIONS ON SYSTEMS, MAN, AND CYBERNETICS: SYSTEMS, ~Vol. , No. , August~2023}%
{Zeng et al. : Complex Networks with Power-law Activating Patterns and Dynamics}
\maketitle
\begin{abstract}
\small
Complex network theory provides a unifying framework for the study of structured dynamic systems. The current literature emphasizes a widely reported phenomenon of intermittent interaction among network vertices. In this paper, we introduce a complex network model that considers the stochastic switching of individuals between activated and quiescent states at power-law rates and the corresponding evolutionary dynamics. By using the Markov chain and renewal theory, we discover a homogeneous stationary distribution of activated sizes in the network with power-law activating patterns and infer some statistical characteristics. To better understand the effect of power-law activating patterns, we study the two-person-two-strategy evolutionary game dynamics, demonstrate the absorbability of strategies, and obtain the critical cooperation conditions for prisoner's dilemmas in homogeneous networks without mutation. The evolutionary dynamics in real networks are also discussed. Our results provide a new perspective to analyze and understand social physics in time-evolving network systems. 
\end{abstract}
\begin{IEEEkeywords}
Complex Networks, Evolutionary Games, Power-law Activating Patterns, Fixation Probability
\end{IEEEkeywords}
\IEEEpeerreviewmaketitle
\section{Introduction}\label{sec: I}
\small
Since the development of the random graph theory, an increasing number of researchers have been paying attention to network science. At the end of the last century, \cite{watts1998collective} and \cite{barabasi1999emergence} proposed the small-world and scale-free network models, speeding up the development of network science \cite{albert2002statistical}. Generally, the complex network theory provides a framework for the study of individual relationships and dynamic processes in structured populations \cite{boccaletti2006complex}. In the past few years, more specific network concepts, e.g., temporal \cite{holme2015modern} and higher-order networks \cite{benson2016higher}, have been employed to meet the needs of studying social physics \cite{jusup2022social}, including evolutionary dynamics \cite{li2020evolution,pi2024memory} and epidemic propagation \cite{li2023coevolution}. Besides, some articles noticed that some real-world networks do not simply show growth and preferential attachment characteristics of vertices, but the death of existing vertices is also observed as a common phenomenon \cite{zhang2016random, feng2024information}. Feng {\it et al.} \cite{feng2022heritable} noticed that the random birth \& death of individuals breaks the scale-free property induced by preferential attachment and further proposed the heritable mechanism to maintain the power-law characteristic of the evolving birth \& death network on continuous time stamps. Besides, the phase transitions in corresponding dynamic processes have also been shown to be strongly related to the birth \& death property \cite{zeng2022spatial,li2021protection}. The nonlinear dynamic research in complex networks reveals the phase transition \cite{biroli2007new} and collective behavior \cite{da2023sociophysics} of dynamic processes in structured populations, such as synchronization \cite{lu2019quad}, the evolutionary game \cite{guo2023third, yao2023inhibition}, spreading \cite{ji2023signal}, and percolation \cite{li2021percolation}. Additionally, the mathematical theorems of complex networks to describe and control diverse real network systems are essential as well \cite{reed2023role,kyriakis2020effects,yang2021hidden}. 

In the study of dynamic processes, the inter-event time between two adjacent events during the interaction is significant in the theoretical analysis of population behaviors, especially in the temporal networks and dynamics. In this circumstance, an individual does not always interact with an arbitrary neighbor, as this interaction is sometimes activated but dormant at other time. A common assumption is that individual actions are uniformly distributed over a period of time and thus lead to the Poisson statistics \cite{miritello2011dynamical,Zeng2023temporal}. However, the empirical evidence has shown that the inter-event time of individuals in a system can follow a power-law distribution and result in heavy-tailed statistics \cite{barabasi2005origin}. For example, Roberts {\it et al.} \cite{roberts2015heavy} found that neuronal oscillations exhibit non-Gaussian heavy-tailed probability distributions. Malmgren {\it et al.} \cite{malmgren2008poissonian} demonstrated that the approximate power-law scaling of the inter-event time distribution is a consequence of circadian and weekly cycles of human activity, which explains the e-mail communication data precisely. Han {\it et al.} \cite{han2023impact} found that the heterogeneity of activation time in time-varying networks significantly affects the spreading threshold. A more comprehensive empirical analysis of the heavy tails in human dynamics is given by V{\'a}zquez {\it et al.} \cite{vazquez2006modeling}, and found that the inter-event time in web browsing, email activity patterns, library loans, trade transactions, and the correspondence patterns of Einstein, Darwin, and Freud can all be well approximated by a power-law distribution with different parameters. Recent studies in brain dynamics \cite{wang2019non,lombardi2020critical} also show that given certain correlations or anti-correlations between consecutive bursts in $\theta$ and $\delta$, cortical rhythms exhibit complex temporal organization, leading to power-law distributed duration for active states. Based on the power-law inter-event assumption, researchers have studied the dynamic processes including epidemic propagation \cite{min2011spreading} and random walks \cite{masuda2017random}. Considering the inter-event time on both the nodes and edges, \cite{dos2020generative} proposed a generative model that produces distributions of inter-event times for both nodes and edges that resemble heavy-tailed distributions across some scales. 

This study mainly focuses on the network modelling of the aforementioned power-law activation and its effect on evolutionary dynamics, which reveals the formation and fixation of individual strategies \cite{perc2017statistical}. Since the introduction of spatial chaos in evolutionary game theory \cite{nowak1992evolutionary}, it has become widely accepted that spatial structures enhance the emergence of cooperative behaviors in society \cite{nowak1994more}. Later, it was found that the spatial structure often inhibits the emergence of cooperation \cite{hauert2004spatial}, which challenges the conventional understanding of the relationship between complex networks and evolutionary games. \cite{santos2005scale} studied the weak prisoner's dilemma and snowdrift game based on scale-free network theory and found that scale-free networks significantly improve cooperation density through the influence of hub vertices \cite{santos2006evolutionary}. To uncover the condition for cooperation, \cite{ohtsuki2006simple} used the pair approximation method and discovered that population preference for cooperation is strongly correlated with the average number of neighbors. However, this condition is not particularly effective in heterogeneous populations, prompting \cite{konno2011condition} to identify a closer condition for cooperation in scale-free networks. Considering the concept of stochastic game theory, an evolutionary game model with game transitions was proposed by introducing the switching of high and low games, which helps to overcome social dilemmas \cite{su2019evolutionary, 10269140}. In addition to the above studies, recent articles primarily focus on the evolutionary dynamics of temporal networks \cite{li2020evolution}, providing an innovative avenue for emergent behaviors. 

As mentioned previously, the inter-event time of individuals' activities in a complex network is essential for both network structure and dynamic processes. However, to the best of our knowledge, the effect of stochastic inter-event time between activities on closed network structure and topological properties has been largely overlooked, as well as its evolutionary dynamics. Additionally, most real systems show the power-law intermittent properties among individuals as stated previously. Therefore, based on the assumption of power-law inter-event time, this study aims to model temporal networks with power-law intermittent activities and to provide a new perspective for understanding power-law activating patterns and evolutionary game theory in closed, networked populations. 

Therefore, we assume that there are two switching states for vertices in a network structure, namely the activated and quiescent states. Additionally, for each individual, these two states switch at random intervals that follow a power-law distribution, based on the aforementioned findings in real-world data sets. As shown in Ref. \cite{feng2018evolving}, we consider the number of activated vertices a topological property of the network. The theoretical analysis primarily relies on Markov chain and renewal theory \cite{cinlar1969markov}.

The evolution of cooperation in networked systems reveals the mechanisms underlying the propagation of altruistic behaviors, which directly enhances the overall fitness of the population. To measure the impact of power-law activation patterns on the spatial evolution of cooperation, we study the prisoner's dilemma game \cite{axelrod1980effective} by incorporating the power-law intermittent interactions of individuals into both modelling and simulation. We assume that only activated vertices participate in the spatial evolutionary game, including interacting with each other, obtaining fitness, and updating strategies. An activated vertex gains payoff from all its activated neighbors and translates the total payoff into fitness. Various strategy updating rules can be considered to drive the evolutionary process of individual strategies. In this article, we employ the death-birth process \cite{kaveh2015duality}, which assumes that individuals with higher payoffs are more likely to spread their strategies to their neighbors.

The main contributions of this paper are as follows:
\begin{enumerate}
\item A complex network model incorporating individuals' power-law activating patterns is introduced to describe the state switching of vertex activation. Theoretical analysis using renewal theory is conducted to prove the stability of activated individuals.
\item The evolutionary dynamics of the proposed power-law activating patterns in network structures are analyzed. The absorption of cooperative behaviors in a closed network structure is demonstrated, and the critical condition for cooperation in the prisoner's dilemma game is derived.
\item The proposed theorems are validated through comparative experiments. The fixation probability of evolutionary dynamics is examined through computer simulations. Furthermore, applications in real-world network data sets are discussed.
\end{enumerate}

The organization of this article is as follows: In Sec. \ref{sec: II}, we introduce the complex network model with power-law activating patterns and describe the evolutionary dynamics. In Sec. \ref{Sec: III}, we present the simulation methods and conduct experiments to validate our propositions. In Sec. \ref{Sec: IV}, we conclude our work and provide suggestions for future research.
\section{Network Model and Evolutionary Dynamics Analysis}\label{sec: II}
\small
As stated previously, individuals in a networked system often exhibit intermittent activity patterns, where the inter-event time is stochastic and follows a specific probability distribution, typically a power-law distribution. In this section, we introduce our complex network model with power-law activating patterns and analyze its evolutionary game dynamics to examine the impact of intermittent activity patterns on cooperation. Important notations are presented in Tab. \ref{tab: notations}.
\begin{table}[h]
\centering
\caption{Notations}\label{tab: notations}
\begin{tabular}{cc}
   \toprule
   Symbol & Definition\\
   \midrule
   $\lambda$ & The power-law parameter for quiescent states\\
   $\mu$ & The power-law parameter for activated states\\
   $\mathcal{G}=(\mathcal{V},\mathcal{E})$ & The underlying network\\
   $\mathcal{V}_1(t)$ & The activated vertex set\\
   $X_i(t)$ & The state of the vertex $i$, quiescent or activated\\
   $N$&The number of vertices in the underlying network \\
   $N_1(t)$ & The size of activated subgraph\\
   $M$ & The payoff matrix of two-player and two-strategy game \\
   $\pi_i(\mathbf{s})$ & The payoff of the vertex $i$\\
   $f_i(\mathbf{s})$ & The fitness of the vertex $i$ \\
   $w$ & The selection intensity \\
    $\rho_C$ ($\rho_D$)& The fixation probability of the cooperation (defection)\\
    $b/c$& The benefit-to-cost ratio of the prisoner's dilemma game\\
     $l_{ij}$&The one step random walk probability from the vertex $i$ to $j$ \\
    $D$& The instantaneous change of the cooperation proportion\\
    $\tau_{ij}$&The expected coalescence time from vertices $i$ and $j$\\
    $\tau^{(n)}$&The expectation of $\tau_{ij}$ of all $n$-step random walks\\
   \bottomrule
\end{tabular}
\end{table}
\subsection{Complex network with power-law activating patterns}\label{sec: II(A)}
\small
\subsubsection{Network Modelling Description}
\small
\begin{algorithm}
	\caption{Realization of Power-law Activating Patterns}
	\label{alg:1}
	\begin{algorithmic}[1]
		\STATE Initialize $\mathcal{G}$, $\mathcal{V}_1(0)$, $t=0$, and the max time $T$. 
            \STATE Initialize the next state transition time $\Gamma$ for each vertex. 
            \REPEAT
            \STATE Find the smallest time $t_{temp}$ in $\Gamma$ and the corresponding vertex $i$. 
            \IF{$i$ is activated}
            \STATE Turn $i$ into quiescent. 
            \STATE Sample $i$'s next state transition time to the activated state based on a power-law distribution with the parameter $\lambda$. 
            \ELSE
            \STATE Turn $i$ into activated. 
            \STATE Sample $i$'s next state transition time to the quiescent state based on a power-law distribution with the parameter $\mu$. 
            \ENDIF
            \UNTIL{$t\geq T$}
	\end{algorithmic}  
\end{algorithm}
In order to describe the network structure with both activated and quiescent vertices, we introduce the stochastic switching mechanism of a vertex between activated and quiescent states at power-law rates using a continuous-time Markov chain. We consider an unweighted, undirected network $\mathcal{G}=(\mathcal{V}, \mathcal{E})$ with vertex set $\mathcal{V}$, edge set $\mathcal{E}$, and $N$ vertices. We assume that the network $\mathcal{G}$ represents a closed population with no new arrivals or departures. As described previously, two types of vertices in $\mathcal{V}$ undergo a phase transition over time, including both activated and quiescent vertices. We denote the set of activated vertices (\textit{resp.}, the set of quiescent vertices) at time $t$ as $\mathcal{V}_1(t)$ (\textit{resp.} $\mathcal{V}_2(t)$), respectively.

Initially, there are $\mathcal{V}_1(0)$ activated vertices and $\mathcal{V}_2(0)$ quiescent vertices in the network $\mathcal{G}$. As mentioned, the intermittent behaviors of individuals typically follow a power-law inter-event time distribution \cite{barabasi2005origin, roberts2015heavy, malmgren2008poissonian, vazquez2006modeling}, e.g., email communications, neural activities, and human interactions. Therefore, we use two independent power-law distributions here to describe the alternating phase transitions of individual states. If an activated vertex becomes quiescent at a certain moment, it remains quiescent for a time $t$, which follows a continuous power-law distribution with the probability density function
\begin{small}
\begin{equation}\label{eq: power-law distribution}
f(t; \lambda)=\frac{\lambda-1}{t_0}(\frac{t}{t_0})^{-\lambda}, \lambda>2, 
\end{equation}
\end{small}
where $t_0 > 0$ is the lower bound of the power-law distribution. In contrast, if a quiescent vertex becomes activated at a certain moment, it remains activated for a time $t$ that follows the continuous power-law distribution $f(t; \mu)$ before becoming quiescent again, where $\mu > 2$. We refer to $\lambda$ and $\mu$ as the power-law rates for activation and quiescence, respectively. In this case, the state transition process of each individual is considered a cyclic but not strictly periodic process. In this paper, we simulate networks with power-law activating patterns using Algorithm \ref{alg:1}.

\begin{figure}                                              
\centering
\subfigure[State Transition of a Single Vertex]{\includegraphics[scale=0.17]{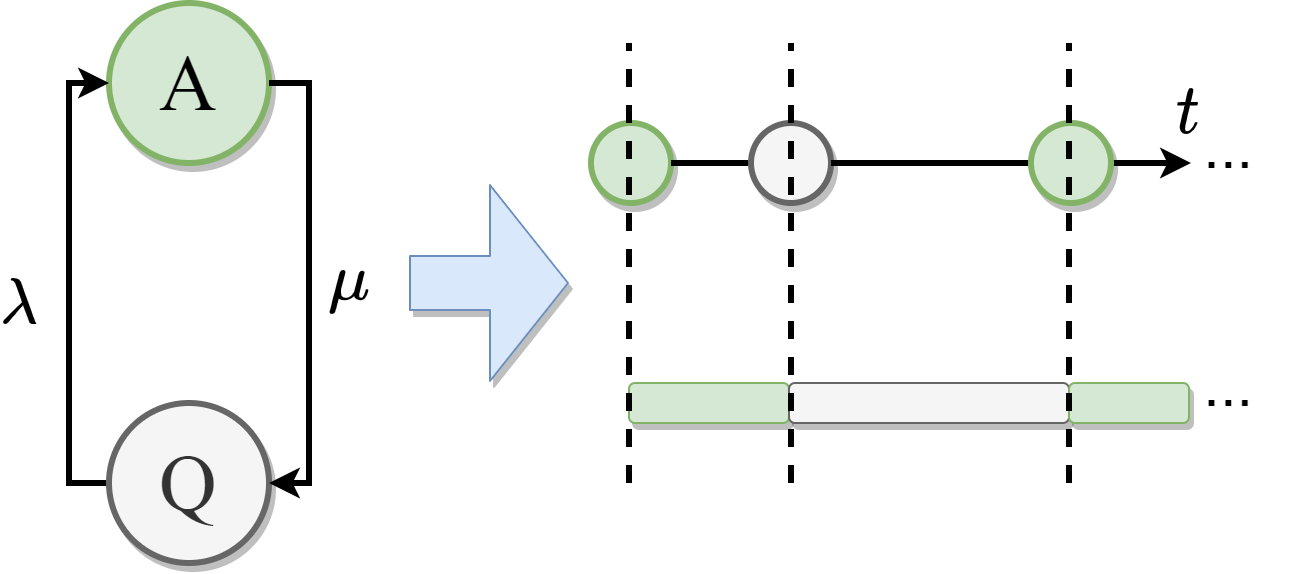}}
\vspace{-3.5mm}
\subfigure[Evolving Network Structure]{\includegraphics[scale=0.09]{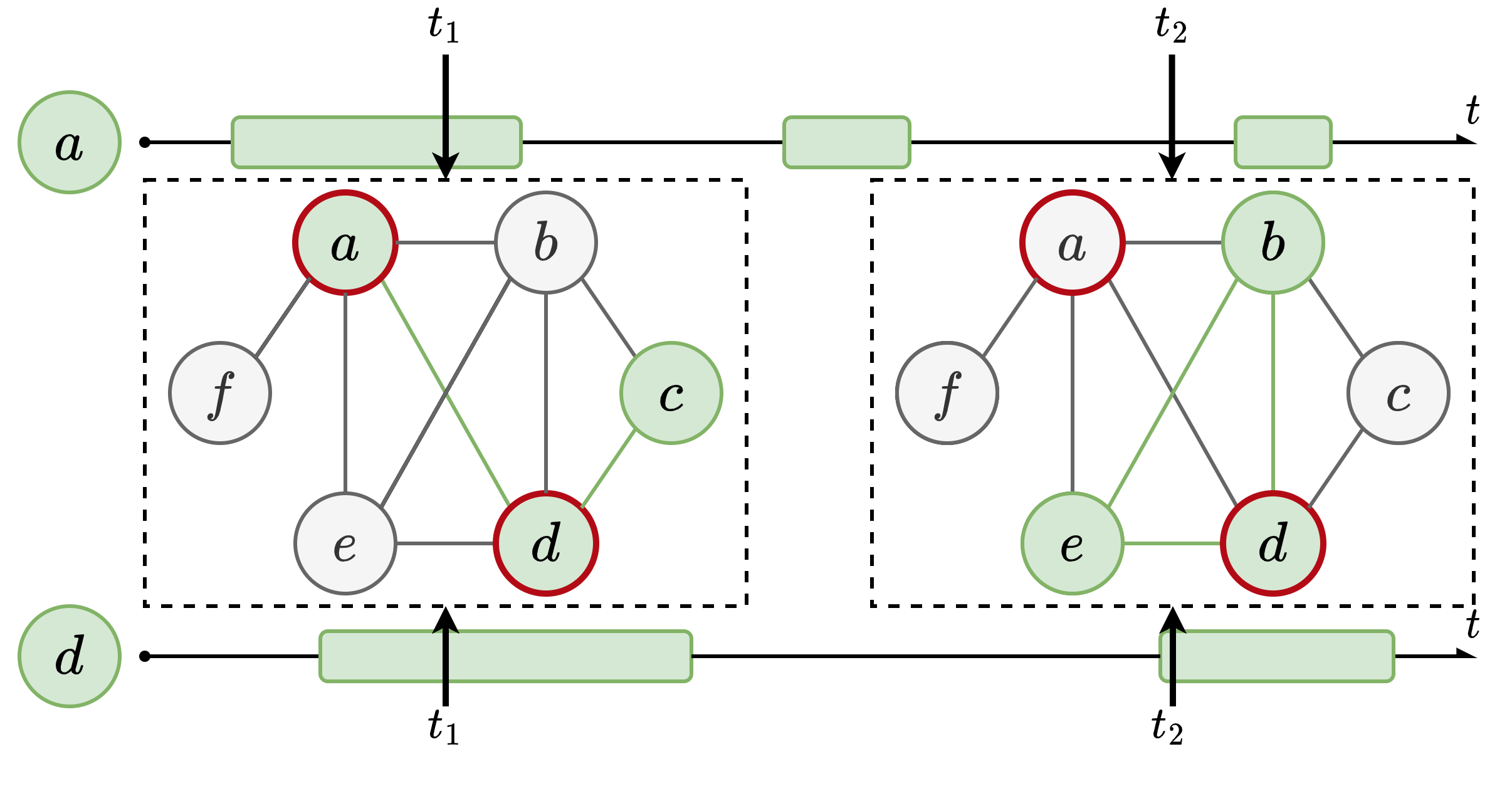}}

\caption{\textbf{An example of the proposed network model. } Green and grey vertices are in activated and quiescent states respectively. (a) The state transition of a single vertex with power-law activating patterns. An activated vertex becomes quiescent after a power-law period with the parameter $\mu$, while a quiescent vertex turns activated after another power-law period with the parameter $\lambda$. (b) The evolving of network structure with the power-law activating patterns of vertices. We consider a network with six vertices and nine edges. We show the state transition time stamps of nodes $a$ and $d$ and the network snapshots at $t_1$ and $t_2$. Green periods indicate the activated duration. If both ends of an edge are activated, this edge is then activated. (Color online)}\label{fig: 1}
\end{figure}

For a better presentation, an example of the proposed network model with power-law activating patterns is shown in Fig. \ref{fig: 1}. 

\subsubsection{Theoretical Analysis of the Activated Subgraph}
For convenience, we introduce the following model definitions. 
\begin{definition}
The activated subgraph at time $t$ is $\mathcal{G}_1(t) = (\mathcal{V}_1(t), \mathcal{E}_1(t))$, where two ends of each element in $\mathcal{E}_1(t)$ are both activated vertices. The quiescent subgraph at time $t$ is $\mathcal{G}_2(t) = (\mathcal{V}_2(t), \mathcal{E}_2(t))$, where one end of each element in $\mathcal{E}_2(t)$ is a quiescent vertex. 
\end{definition}
Apparently, we have $\mathcal{G}=\mathcal{G}_1(t)\cup \mathcal{G}_2(t)$. Then, studying the properties of $\mathcal{G}_1(t)$ can easily induce the properties of $\mathcal{G}_2(t)$. In our work, we are particularly concerned about the activated subgraph size. Therefore, we denote the following stochastic process to describe the activated scale of the network. 
\begin{definition}
$\{X_i(t), t\geq 0, i\in \mathcal{V}\}$ is a continuous time stochastic process with the state space $\Omega_i=\{0, 1\}$ that describes the state of the individual $i$, where $0$ and $1$ denote the quiescent and activated states respectively. 
\end{definition}
\begin{definition}
$\{N_1(t), t\geq 0\}$ is a continuous time stochastic process with the state space $\Omega=\{0, 1, 2, \cdots, N\}$ that describes the number of activated vertices, i.e., $N_1(t)=\vert \mathcal{V}_1(t)\vert$. 
\end{definition}
Apparently, $N_1(t)$ is the number of individuals that satisfies $X_i(t)=1$. Besides, the stochastic processes $X_i(t)$ and $N_1(t)$ are continuous-time Markov chains. Before we do further analysis, we introduce the following lemmas. 
\begin{lemma}\label{lemma: 1}
A recurrent Markov chain is regenerative. 
\end{lemma}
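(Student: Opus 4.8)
The plan is to prove Lemma~\ref{lemma: 1} by exhibiting an explicit regeneration structure for a recurrent Markov chain, using the successive return times to a fixed state as the regeneration epochs. Concretely, let $\{Y_n\}_{n\ge 0}$ (or a continuous-time chain $\{Y(t)\}$) be a recurrent Markov chain on a countable state space, and fix any reference state $x$ with $Y_0 = x$. Define $T_0 = 0$ and, inductively, $T_{k+1} = \inf\{n > T_k : Y_n = x\}$ the time of the $(k{+}1)$-th visit to $x$. Recurrence guarantees $T_k < \infty$ almost surely for every $k$, so the epochs $T_k$ are well-defined and finite with probability one. The cycles are the blocks $\xi_k = (Y_{T_{k-1}}, Y_{T_{k-1}+1}, \dots, Y_{T_k-1})$ together with their lengths $\tau_k = T_k - T_{k-1}$.

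The key step is to invoke the strong Markov property at each $T_k$: since $T_k$ is a stopping time and $Y_{T_k} = x$ deterministically, the post-$T_k$ process $\{Y_{T_k + n}\}_{n\ge 0}$ is a copy of the chain started afresh from $x$, independent of the pre-$T_k$ history $\mathcal{F}_{T_k}$. This immediately yields the two defining properties of a regenerative process: (i) the cycles $\xi_1, \xi_2, \dots$ are independent, and (ii) the cycles $\xi_2, \xi_3, \dots$ are identically distributed (the first cycle may differ only if $Y_0$ is not started from $x$, but here we have taken $Y_0 = x$, so in fact all cycles are i.i.d.; the general "delayed" case follows the same way). Hence the process regenerates at the times $T_k$, which is exactly the assertion. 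For the continuous-time formulation used later in the paper, one replaces the $n$-step return time by the first return to $x$ after the chain has left $x$ at least once (to avoid the trivial instantaneous "return"), i.e.\ $T_{k+1} = \inf\{t > T_k : Y(t^-)\neq x,\ Y(t) = x\}$, or equivalently works with the embedded jump chain; recurrence of the jump chain plus a.s.\ finiteness of holding times again makes the $T_k$ finite a.s., and the strong Markov property at the jump times gives i.i.d.\ cycles.

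The main obstacle, and the only point requiring care, is the almost-sure finiteness of all the successive return times $T_k$, together with the subtlety of what "recurrent" means in the continuous-time setting. For a recurrent discrete-time chain, $T_1 < \infty$ a.s.\ is the definition of recurrence of the state $x$ (in an irreducible recurrent chain every state is recurrent; if the chain is recurrent but not irreducible one simply picks $x$ in a recurrent class that the chain actually visits), and then iterating via the strong Markov property gives $T_k < \infty$ a.s.\ for all $k$. In continuous time one must additionally rule out explosion — but a recurrent chain (in particular the finite-state chains $X_i(t)$ and $N_1(t)$ that the paper applies this to) is non-explosive, so the process is defined for all $t\ge 0$ and the argument goes through. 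I would therefore organize the proof as: (1) reduce to a single recurrent reference state $x$ and start the chain there; (2) define the return-time sequence $T_k$ and note $T_k<\infty$ a.s.\ by recurrence plus non-explosion; (3) apply the strong Markov property at each $T_k$ to conclude the inter-regeneration blocks are i.i.d.; (4) conclude the chain is regenerative by definition, remarking that this is precisely the structure needed to apply the renewal-reward theorem to $N_1(t)$ in the sequel.
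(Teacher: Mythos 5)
The paper states Lemma~\ref{lemma: 1} without any proof, treating it as a standard fact from renewal theory, so there is no in-paper argument to compare against; your proof supplies exactly the standard justification --- regeneration epochs at the successive returns to a fixed recurrent state, almost surely finite by recurrence (plus non-explosion in continuous time), with i.i.d.\ cycles obtained from the strong Markov property --- and it is correct, including the correct handling of the delayed first cycle. One remark relevant to how the lemma is applied downstream: the processes $X_i(t)$ in the paper have power-law (non-exponential) holding times, so they are alternating renewal / semi-Markov processes rather than continuous-time Markov chains in the strict sense; your construction still goes through verbatim because the regeneration times are the successive entries into a fixed state of the two-state cycle, whose sojourn times are i.i.d.\ by assumption.
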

A regenerative stochastic process is allowed to restart at some random time. Once the process restarts, we say that this regenerative process undergoes a cycle since the last restart. The following lemma shows the long-range property of a regenerative process. 
\begin{lemma}\label{lemma: 2}
For a regenerative process $X(t)$, the limiting probability to find the process in state $i$ is \cite{ross2014introduction}
\begin{small}
\begin{equation}
\lim_{t\rightarrow\infty} P\{X(t)=i\}=\frac{E[T_i]}{E[T_c]},
\end{equation}
\end{small}
where $E[T_i]$ is the expected time that the process stays in the state $i$ during the cycle, and $E[T_c]$ denotes the expected cyclic time of $X(t)$. 
\end{lemma}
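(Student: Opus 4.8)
The plan is to derive the limiting state probability from the renewal structure of $X(t)$ by the classical renewal-equation argument (as in \cite{ross2014introduction}). Let $0 = S_0 < S_1 < S_2 < \cdots$ be the successive regeneration epochs supplied by the regenerative property, let $T_c^{(n)} = S_n - S_{n-1}$ be the $n$th cycle length, and let $F$ be the common distribution of a generic cycle time $T_c$. Writing $h(t) = P\{X(t) = i\}$ and conditioning on whether the first regeneration has occurred by time $t$, the regenerative property gives the renewal equation
\[
h(t) = g(t) + \int_0^t h(t-s)\,dF(s), \qquad g(t) := P\{X(t) = i,\ S_1 > t\} ,
\]
where $g(t)$ is the probability of being in state $i$ before the first regeneration. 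Its unique bounded solution is the convolution $h = g * U$ with $U = \sum_{n\ge 0} F^{*n}$ the renewal measure.

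Next I would invoke the key renewal theorem. In our model each cycle length is a finite sum of absolutely continuous (power-law) holding times, so $F$ is non-lattice; moreover $g$ is nonnegative, bounded by $1$, and integrable, since $\int_0^\infty g(s)\,ds \le \int_0^\infty P\{S_1 > s\}\,ds = E[T_c] < \infty$, hence $g$ is directly Riemann integrable. The key renewal theorem then yields
\[
\lim_{t\to\infty} h(t) = \frac{1}{E[T_c]}\int_0^\infty g(s)\,ds .
\]
It remains to identify the integral with $E[T_i]$. By Tonelli's theorem,
\[
\int_0^\infty g(s)\,ds = E\!\left[\int_0^\infty \mathbf{1}\{X(s)=i,\ S_1>s\}\,ds\right] = E\!\left[\int_0^{S_1}\mathbf{1}\{X(s)=i\}\,ds\right] = E[T_i] ,
\]
the expected time spent in state $i$ during one cycle. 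Combining the last two displays yields $\lim_{t\to\infty} P\{X(t)=i\} = E[T_i]/E[T_c]$, as claimed.

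The only real subtlety --- hence the main obstacle --- is the justification of the key renewal theorem, i.e.\ checking that $F$ is non-lattice and that $g$ is directly Riemann integrable. If $F$ were lattice the pointwise limit would hold only along the lattice, and in full generality one retains merely the weaker time-average (renewal-reward) statement $\lim_{t\to\infty}\frac1t\int_0^t\mathbf{1}\{X(u)=i\}\,du = E[T_i]/E[T_c]$ almost surely. In the present model the power-law holding times are continuous, so the non-lattice condition is automatic and this obstacle evaporates; alternatively, because the statement is quoted from \cite{ross2014introduction}, one may simply cite it, and a self-contained proof is needed only if one wants the argument in full --- in which case the renewal-equation route above is the most economical.
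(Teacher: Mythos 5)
The paper does not prove this lemma at all: it is stated as a known result and attributed directly to \cite{ross2014introduction}, so there is no in-paper argument to compare yours against. Your renewal-equation derivation is precisely the standard proof of that cited theorem (condition on the first regeneration epoch to get $h = g + h * F$, solve as $h = g * U$, apply the key renewal theorem, and identify $\int_0^\infty g(s)\,ds$ with $E[T_i]$ by Tonelli), and it is correct. The one point worth tightening is the direct Riemann integrability of $g$: being bounded and Lebesgue-integrable is not by itself sufficient, but since $g(s)\le P\{S_1>s\}$ and the latter is nonincreasing and integrable (as $E[T_c]<\infty$), domination by a d.R.i.\ function together with the a.e.\ continuity of $g$ (which holds here because the holding times are absolutely continuous) closes the gap; you correctly flag this, and the non-lattice condition, as the only real subtleties.
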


Based on Lemmas \ref{lemma: 1} and \ref{lemma: 2}, we next carry out the following proposition on the activated subgraph. 
\begin{proposition}\label{proposition: 1}
If a network $\mathcal{G}$ with $N$ vertices has power-law rates $\lambda$ and $\mu$ to be activated and quiescent respectively, the stationary distribution of the stochastic process $N_1(t)$ is irrelevant to the initial state and follows 
\begin{small}
\begin{equation}\label{eq: proposition 1}
P_i=\frac{C_N^i [(\mu-1)(\lambda-2)]^i [(\lambda-1)(\mu-2)]^{N-i}}{[(\mu-1)(\lambda-2)+(\lambda-1)(\mu-2)]^N}, 
\end{equation}
\end{small}
where $P_i$ denotes the probability of finding $i$ activated individuals in the stationary state and $C_N^i$ is the combinatorial operator. 
\end{proposition}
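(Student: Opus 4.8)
The plan is to exploit the independence of the vertices and reduce the problem to a single-vertex calculation, then recombine. First I would observe that, by construction (Algorithm~\ref{alg:1}), the processes $\{X_i(t)\}_{i\in\mathcal{V}}$ evolve independently of one another: vertex $i$ alternates between a quiescent sojourn whose length is drawn from $f(\cdot;\lambda)$ and an activated sojourn whose length is drawn from $f(\cdot;\mu)$, all draws being mutually independent. Hence $X_i(t)$ is an alternating renewal process, and the instants at which $i$ enters the activated state are regeneration epochs; by Lemma~\ref{lemma: 1} the process is regenerative, so Lemma~\ref{lemma: 2} applies to it.

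Next I would compute the two ingredients of Lemma~\ref{lemma: 2} for a single vertex. Since $\lambda,\mu>2$, the density \eqref{eq: power-law distribution} has a finite mean, and a direct integration gives
\begin{small}
\begin{equation*}
\int_{t_0}^{\infty} t\,f(t;\lambda)\,dt=\frac{(\lambda-1)t_0}{\lambda-2},
\end{equation*}
\end{small}
and likewise $\frac{(\mu-1)t_0}{\mu-2}$ for the activated sojourn. A regeneration cycle of $X_i(t)$ consists of exactly one activated sojourn followed by one quiescent sojourn, so $E[T_c]=\frac{(\mu-1)t_0}{\mu-2}+\frac{(\lambda-1)t_0}{\lambda-2}$ while the expected time spent activated within a cycle is $E[T_1]=\frac{(\mu-1)t_0}{\mu-2}$. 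Lemma~\ref{lemma: 2} then yields
\begin{small}
\begin{equation*}
p:=\lim_{t\to\infty}P\{X_i(t)=1\}=\frac{(\mu-1)(\lambda-2)}{(\mu-1)(\lambda-2)+(\lambda-1)(\mu-2)},
\end{equation*}
\end{small}
independently of the initial state of vertex $i$, with $1-p=\frac{(\lambda-1)(\mu-2)}{(\mu-1)(\lambda-2)+(\lambda-1)(\mu-2)}$.

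Finally, because the $N$ single-vertex processes are independent, in the limit $t\to\infty$ the state vector $(X_1(t),\dots,X_N(t))$ converges in distribution to a vector of i.i.d.\ Bernoulli$(p)$ entries, so $N_1(t)=\sum_{i\in\mathcal V} X_i(t)$ converges to a Binomial$(N,p)$ variable; substituting the expression for $p$ into $\binom{N}{i}p^i(1-p)^{N-i}$ gives exactly \eqref{eq: proposition 1}, and the independence of the limit from the initial configuration is inherited from the single-vertex statement.

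The main obstacle is the rigorous justification of the regenerative/limit step rather than the algebra. One must verify that Lemma~\ref{lemma: 2} is legitimately applicable: the cycle length has finite expectation (guaranteed by $\lambda,\mu>2$) but, for $\lambda$ or $\mu$ in $(2,3]$, infinite variance, so care is needed to note that finite mean already suffices for the key renewal / Blackwell argument behind Lemma~\ref{lemma: 2}; moreover the \emph{limiting} (as opposed to time-averaged) probability requires the inter-event distribution to be non-lattice, which holds since $f(\cdot;\lambda)$ and $f(\cdot;\mu)$ are continuous. A second delicate point is that, once the sojourns are non-exponential, $X_i(t)$ is a semi-Markov (renewal-reward) process rather than a genuine continuous-time Markov chain, so the argument should rest on its regenerative structure; I would make this explicit. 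Passing from the marginal convergence of each $X_i(t)$ to the joint convergence of the product law is then immediate by independence.
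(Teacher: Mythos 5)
Your proposal is correct and follows essentially the same route as the paper's own proof: establish that each single-vertex process $X_i(t)$ is regenerative via Lemma~\ref{lemma: 1}, compute the expected sojourn times $\frac{(\lambda-1)t_0}{\lambda-2}$ and $\frac{(\mu-1)t_0}{\mu-2}$, apply Lemma~\ref{lemma: 2} to get the limiting activation probability $p$, and combine the $N$ independent vertices into a Binomial$(N,p)$ law. The only difference is that you make explicit some technical hypotheses (non-lattice cycle distribution, finite mean despite possibly infinite variance, the semi-Markov rather than Markov nature of $X_i(t)$) that the paper passes over silently; this is a welcome refinement but not a different argument.
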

\begin{proof}
According to our previous model description, for an arbitrary vertex $i$, the stochastic process $X_i(t)$ is recurrent because both states are repeated infinitely as $t\rightarrow\infty$. Therefore, according to Lemma \ref{lemma: 1}, the process $X_i(t)$ is regenerative. 

A cycle contains two periods, including both activated and quiescent periods. Therefore, for an arbitrary vertex, the expectation of the quiescent time in a cycle is 
\begin{small}
\begin{equation}
E[T_0]=\int_{t_0}^{+\infty} t f(t;\lambda)\mathrm{d} t=\frac{\lambda-1}{\lambda-2}t_0.
\end{equation}
\end{small}
Similarly, the expectation of the activated time in a cycle is $E[T_1]=\frac{\mu-1}{\mu-2}t_0$. Then, according to Lemma \ref{lemma: 2}, in the long-range time, we can find a single vertex in the quiescent state with the probability
\begin{small}
\begin{equation}
\lim_{t\rightarrow\infty} P\{X(t)=0\}=\frac{E[T_0]}{E[T_c]}=\frac{\frac{\lambda-1}{\lambda-2}}{{\frac{\mu-1}{\mu-2}+\frac{\lambda-1}{\lambda-2}}}, 
\end{equation}
\end{small}
and in the activated state with the probability 
\begin{small}
\begin{equation}
\lim_{t\rightarrow\infty} P\{X(t)=1\}=\frac{E[T_1]}{E[T_c]}=\frac{\frac{\mu-1}{\mu-2}}{{\frac{\mu-1}{\mu-2}+\frac{\lambda-1}{\lambda-2}}}. 
\end{equation}
\end{small}
Therefore, the probability to find $i$ activated vertices in the long-range time follows the binomial form
\begin{small}
\begin{equation}
\begin{aligned}
P_i&=C_N^i (\frac{E[T_1]}{E[T_c]})^i (\frac{E[T_0]}{E[T_c]})^{N-i}\\
&=C_N^i (\frac{\frac{\mu-1}{\mu-2}}{{\frac{\mu-1}{\mu-2}+\frac{\lambda-1}{\lambda-2}}})^i (\frac{\frac{\lambda-1}{\lambda-2}}{{\frac{\mu-1}{\mu-2}+\frac{\lambda-1}{\lambda-2}}})^{N-i}\\
&=\frac{C_N^i [(\mu-1)(\lambda-2)]^i [(\lambda-1)(\mu-2)]^{N-i}}{[(\mu-1)(\lambda-2)+(\lambda-1)(\mu-2)]^N}. 
\end{aligned}
\end{equation}
\end{small}
Results follow. 
\end{proof}
Proposition \ref{proposition: 1} shows that the number of activated vertices is stable with power-law activating patterns in a closed population and gives the probability distribution of the activated subgraph size of a network $\mathcal{G}$ with given $\lambda$ and $\mu$. Accordingly, we have the following corollaries for further analysis and simulation. 
\begin{corollary}\label{corollary: 2}
The expected number of activated vertices is 
\begin{small}
\begin{equation}
E[N_1]=\frac{N(\mu-1)(\lambda-2)}{(\mu-1)(\lambda-2)+(\lambda-1)(\mu-2)}, 
\end{equation}
\end{small}
and the variance is 
\begin{small}
\begin{equation}
D[N_1]=\frac{N(\mu-1)(\lambda-2)(\lambda-1)(\mu-2)}{[(\mu-1)(\lambda-2)+(\lambda-1)(\mu-2)]^2}. 
\end{equation}
\end{small}
\end{corollary}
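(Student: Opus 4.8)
The plan is to obtain both formulas as an immediate specialization of Proposition~\ref{proposition: 1}. That proposition already identifies the stationary law of $N_1(t)$: writing $p=\frac{E[T_1]}{E[T_c]}=\frac{(\mu-1)(\lambda-2)}{(\mu-1)(\lambda-2)+(\lambda-1)(\mu-2)}$ for the long-run probability that a given vertex is activated, and correspondingly $1-p=\frac{E[T_0]}{E[T_c]}=\frac{(\lambda-1)(\mu-2)}{(\mu-1)(\lambda-2)+(\lambda-1)(\mu-2)}$, equation~\eqref{eq: proposition 1} reads $P_i=C_N^i\, p^i (1-p)^{N-i}$, i.e. $N_1\sim\mathrm{Bin}(N,p)$ in stationarity. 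So the task reduces to computing the mean and variance of a binomial random variable and then substituting $p$.

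First I would record the standard binomial moments $E[N_1]=Np$ and $D[N_1]=Np(1-p)$. If a self-contained derivation is wanted rather than a citation, either route works: differentiate the probability generating function $G(z)=(1-p+pz)^N$ once and twice at $z=1$; or, more in the spirit of the model, write $N_1=\sum_{i\in\mathcal{V}}\mathbf{1}\{X_i(t)=1\}$ as a sum of i.i.d.\ Bernoulli$(p)$ indicators—the independence across vertices being inherited from the independence of the individual renewal processes postulated in the model—and apply linearity of expectation together with additivity of variance over independent summands. Then I would substitute the explicit $p$ and $1-p$ above into $Np$ and $Np(1-p)$; the mean is immediate, and $Np(1-p)$ collapses to $\frac{N(\mu-1)(\lambda-2)(\lambda-1)(\mu-2)}{[(\mu-1)(\lambda-2)+(\lambda-1)(\mu-2)]^2}$ once the common denominator is squared out.

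I do not expect a genuine obstacle: the corollary is a direct consequence of Proposition~\ref{proposition: 1}, and the only computation of any substance is the one-line algebraic simplification of $Np(1-p)$. The single point deserving explicit mention is that the product form of $P_i$ in~\eqref{eq: proposition 1}—and hence the identification with a genuine $\mathrm{Bin}(N,p)$ whose variance is $Np(1-p)$ with no cross-covariance corrections—relies on the independence of the vertexwise state processes $\{X_i(t)\}_{i\in\mathcal{V}}$, which I would note before reading off $D[N_1]$.
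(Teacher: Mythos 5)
Your proposal is correct and matches the paper's (implicit) reasoning exactly: the paper presents Corollary~\ref{corollary: 2} as an immediate consequence of the binomial form in Proposition~\ref{proposition: 1}, and reading off $E[N_1]=Np$ and $D[N_1]=Np(1-p)$ with $p=\frac{(\mu-1)(\lambda-2)}{(\mu-1)(\lambda-2)+(\lambda-1)(\mu-2)}$ is precisely the intended derivation. Your remark that the variance formula relies on the independence of the vertexwise processes is a worthwhile addition, since the paper only makes that assumption explicit later, after Corollary~\ref{corollary: 3}.
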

Apparently, if we consider a subgraph of $\mathcal{G}$, the deduction in Proposition \ref{proposition: 1} still holds. Therefore, we have the following corollary. 
\begin{corollary}\label{corollary: 3}
For an arbitrary non-empty subgraph $\mathcal{G}'$ of $\mathcal{G}$ with the size $N'$, the number of activated individuals is $i$ with the probability
\begin{small}
\begin{equation}\label{eq: corollary 3}
P(N^{'},i)=\frac{C_{N'}^i [(\mu-1)(\lambda-2)]^i [(\lambda-1)(\mu-2)]^{N'-i}}{[(\mu-1)(\lambda-2)+(\lambda-1)(\mu-2)]^{N'}}. 
\end{equation}
\end{small}
\end{corollary}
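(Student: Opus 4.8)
The plan is to recognize that Corollary~\ref{corollary: 3} follows from the proof of Proposition~\ref{proposition: 1} essentially verbatim, because that argument is purely local: it analyzes each vertex in isolation and only at the end multiplies single-vertex probabilities together. In the model, the switching rule for vertex $i$ (Algorithm~\ref{alg:1}) references only vertex $i$'s current state and its own next-transition time, and the inter-event durations are sampled from the power-law laws $f(\cdot;\lambda)$ and $f(\cdot;\mu)$ independently of everything else. Hence the marginal law of the state process $\{X_i(t)\}$ is unaffected by which neighbors $i$ has, and is the same whether we regard $i$ inside $\mathcal{G}$ or inside a subgraph $\mathcal{G}'$.

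So first I would fix a non-empty subgraph $\mathcal{G}'$ with $N'$ vertices and, for each of them, replay the single-vertex steps of Proposition~\ref{proposition: 1}: $X_i(t)$ is recurrent and hence regenerative (Lemma~\ref{lemma: 1}), the per-cycle expected quiescent and activated times are still $E[T_0]=\frac{\lambda-1}{\lambda-2}t_0$ and $E[T_1]=\frac{\mu-1}{\mu-2}t_0$, and Lemma~\ref{lemma: 2} gives the limiting activation probability
\begin{small}
\begin{equation}
p=\frac{E[T_1]}{E[T_0]+E[T_1]}=\frac{(\mu-1)(\lambda-2)}{(\mu-1)(\lambda-2)+(\lambda-1)(\mu-2)}.
\end{equation}
\end{small}
Since the $N'$ processes $\{X_i(t)\}_{i\in\mathcal{V}'}$ are driven by mutually independent families of power-law clocks, they are independent, so in the stationary regime the number of activated vertices among them is a sum of $N'$ independent $\mathrm{Bernoulli}(p)$ indicators, i.e. $\mathrm{Binomial}(N',p)$. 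Substituting $p$ and $1-p$ yields exactly \eqref{eq: corollary 3}, and the expression is manifestly independent of the initial configuration.

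There is no genuinely hard step here; the only point that deserves a sentence is the \emph{locality} of the dynamics, namely that passing to $\mathcal{G}'$ changes neither the marginal law of any $X_i(t)$ nor their mutual independence. This is immediate from the model, where state switching is a per-individual renewal process with no coupling across vertices: the full collection $\{X_i(t)\}_{i\in\mathcal{V}}$ is a product of independent two-state renewal processes, so any marginal over a subset $\mathcal{V}'$ is again such a product with $N'$ factors, and Proposition~\ref{proposition: 1} applied with $N'$ in place of $N$ gives the claim.
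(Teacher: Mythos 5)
Your proposal is correct and matches the paper's reasoning: the paper itself dispatches this corollary with the remark that the deduction in Proposition~\ref{proposition: 1} still holds for any subgraph, which is exactly the locality-and-independence argument you spell out. Your version is simply a more explicit rendering of the same proof.
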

Corollary \ref{corollary: 2} presents the moments of the activated subgraph size. Corollary \ref{corollary: 3} indicates the self-similarity of the entire network and the local parts. For example, consider a subgraph of $\mathcal{G}$ with a vertex and its $N^{'}$ neighbors, the expected number of activated neighbors follows binomial distribution and can be calculated according to Corollary \ref{corollary: 3}. Here we note that we assume the switching between activation and quiescent does not correlate for each vertex, i.e., the activation transition of vertex $x$ does not affect the vertex $y$. This conclusion on the local topology of one single vertex helps analyze dynamic processes. This ends the network modelling section. 
\subsection{Spatial evolutionary game dynamics}\label{sec: II(B)}

As shown in many previous articles, network structures have fundamental influences on the evolution of cooperation. In addition, in time-varying networks, the change of network structure becomes even more crucial. Based on previous assumptions and conclusions on network structure with power-law activation patterns, we next analyze the evolutionary dynamics to show the impact of power-law intermittent behaviors among individuals on cooperation evolution.
\subsubsection{Evolutionary Game Model in Networks}
We consider the evolutionary game dynamics with two strategies on the complex network with individuals' power-law activating patterns. The strategies of individuals are cooperation (C) and defection (D). If two individuals play the game with each other, mutual cooperation brings each individual the reward $R$, while mutual defection brings each individual the punishment $P$. If only one individual cooperates, the defector receives the temptation $T$, while the cooperator receives the sucker's payoff $S$. Accordingly, we have the payoff matrix of a game between two players as follows. 

\begin{small}
\begin{equation}\label{eq: payoff}
    M=\left(
    \begin{array}{cc}
    R & S \\
    T & P \\
    \end{array}
\right). 
\end{equation}
\end{small}

We assume the most representative prisoner's dilemma game model (PDG), where $T>R>P>S$. In a network $\mathcal{G}$, each vertex is regarded as an individual in the evolutionary game and plays PDG with its neighbors. Note that in this paper, considering the power-law activating order of individuals, a player only interacts with its activated neighbors. That is, the strategy dynamics only occur in the activated subgraph $\mathcal{G}_1(t)$ at time $t$. Define a vector $\mathbf{s}=(s_i)_{i\in \mathcal{G}}\in\{0,1\}^{\mathcal{G}}$ that describes the strategy of each vertex in the network, where 1 and 0 present the cooperator and defector respectively. For an individual $i$, it derives the payoff from all its activated neighbors, denoted as
\begin{small}
\begin{equation}
    \pi_i(\mathbf{s})=Rs_is_i^{(1)}+Ss_i(1-s_i^{(1)})+T(1-s_i)s_i^{(1)}+P(1-s_i)(1-s_i^{(1)}), 
\end{equation}
\end{small}
where $s_i^{(n)}$ denotes the expected strategy of the vertex from $n$-step random walk away from the focal individual $i$. 
Then, we translate the payoff into fitness through $f_i(\mathbf{s})=1+w\pi_i(\mathbf{s})$, where $w\geq 0$ indicates the intensity of selection. We are especially concerned about weak selection, i.e., $0<w\ll 1$. 

Regarding strategy updating, we assume that only activated individuals update their strategies during the evolution process. We then introduce a Poisson process to illustrate the updating order due to the continuous-time setting. If an arbitrary individual becomes activated at some moment, it updates its strategy at a Poisson rate $\delta$; that is, the timestamps for strategy updates follow a Poisson process with parameter $\delta$ until the individual becomes quiescent. We use the Poisson process here since it has minimal impact on the results of cooperation evolution, allowing us to focus on the effects of power-law activation patterns. As mentioned previously, the expected activation time of an individual is $E[T_1] = \frac{\mu-1}{\mu-2}t_0$. The expected strategy update frequency of a vertex during activation is $\frac{(\mu-1)t_0}{(\mu-2)\delta}$. The strategy update rule we assume is similar to death-birth updating. If an arbitrary activated individual $x$ updates its strategy, each activated neighbor spreads its strategy to $x$ with a probability proportional to fitness. In this case, $x$'s activated neighbors with high fitness are more likely to spread their strategies to $x$. We note that, due to the continuous-time setting, the probability that at least two different vertices update their strategies at the same moment tends to zero and can be neglected. An example of the strategy update is shown in Fig. \ref{fig: 2}.

Additionally, mutation can exist in evolutionary dynamics. If we consider mutation when the strategy update event occurs, with probability $v$, the focal updating vertex becomes a cooperator or defector with equal probability regardless of the fitness, and with probability $1-v$, this vertex undergoes the previously mentioned strategy update process. 

\begin{figure}                                              
\centering
\subfigure[Strategy Update Based on Fitness]{\includegraphics[scale=0.11]{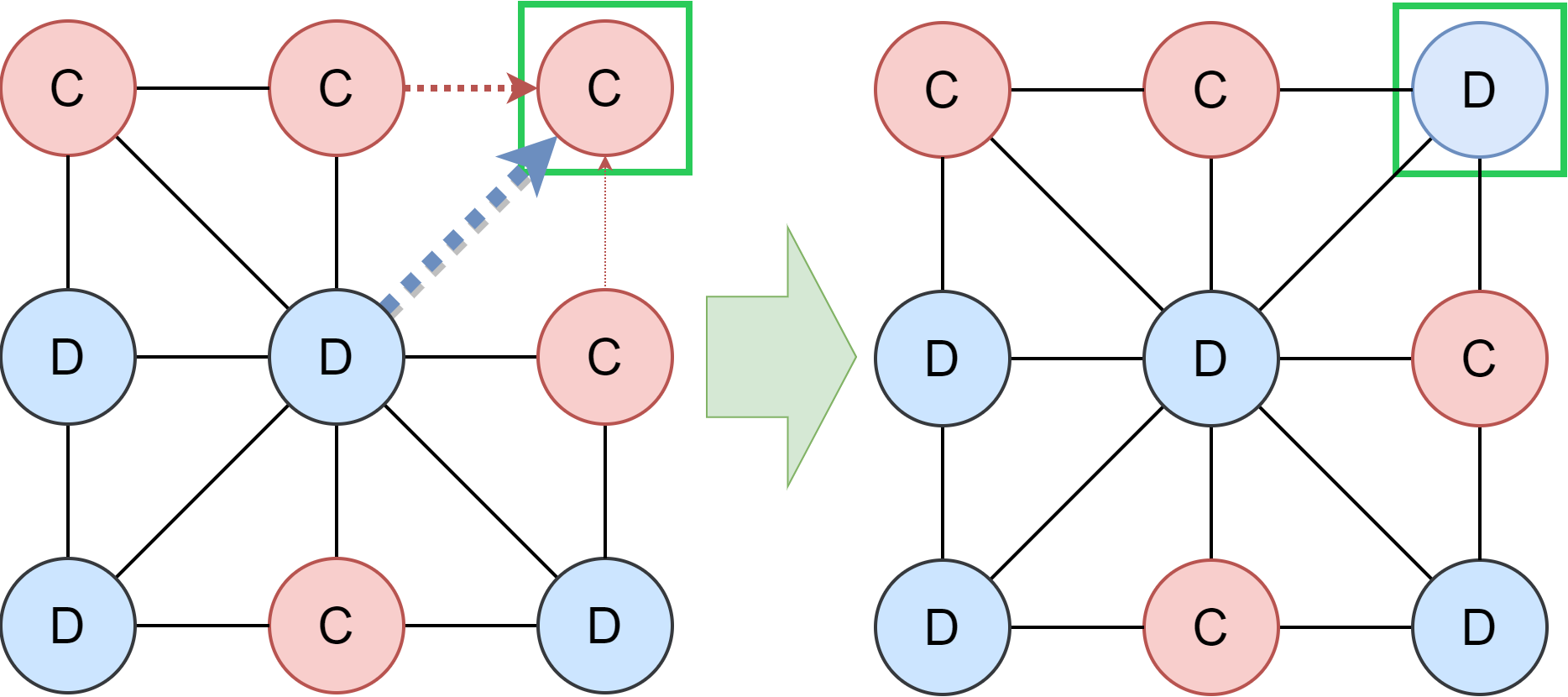}}
\subfigure[Strategy Update Order]{\includegraphics[scale=0.14]{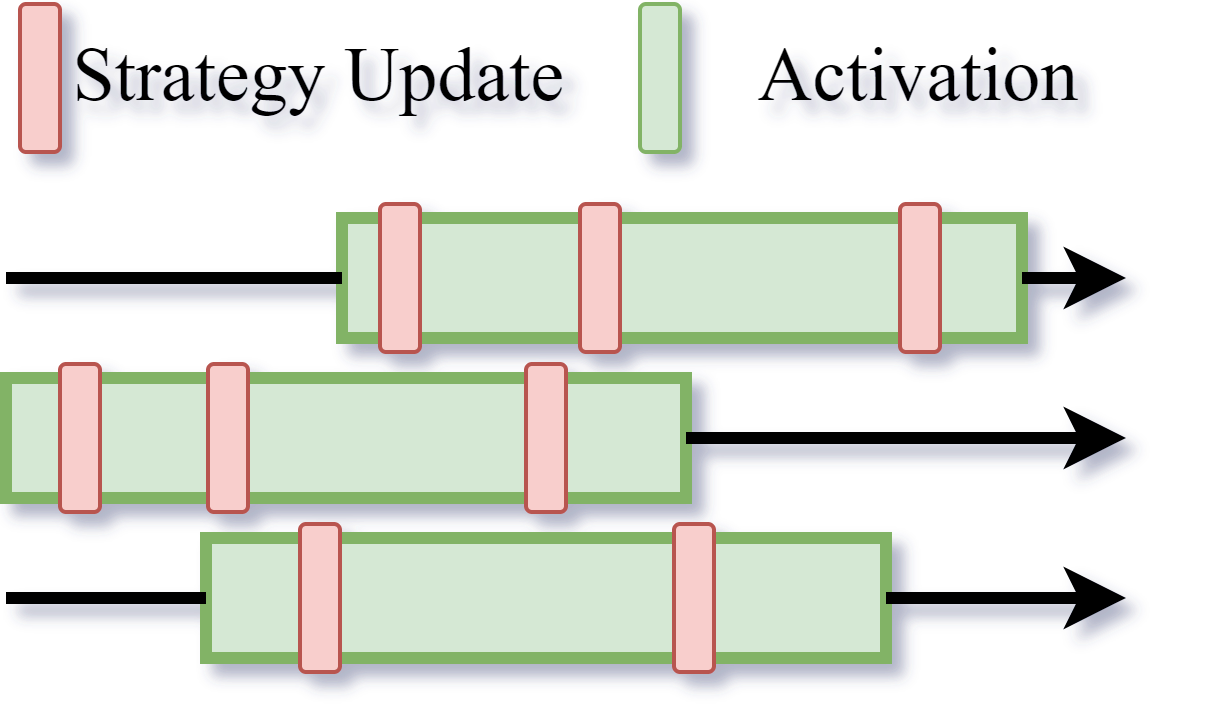}}
\vspace{-3.5mm}
\caption{\textbf{An example of the strategy update. } (a) The strategy update process of a $C$ player in the green square considers its neighbors. In a prisoner's dilemma game, the center vertex with the strategy $D$ possesses the highest payoff and fitness in the community because it connects to the largest number of cooperators. Therefore, the center vertex has the highest probability of spreading its strategy to the individual in the green square. (b) The time axis shows the strategy update timestamp of three individuals during activation. They update strategies by independent Poisson processes, thus only one vertex can be chosen to update at one certain moment. (Color online)}\label{fig: 2}
\end{figure}
\subsubsection{Theoretical Analysis of Evolutionary Game Dynamics}
Next, we provide some analysis of the evolutionary dynamic. In the following derivation, we prove the absorptivity of the population strategies and determine the critical cooperation condition.

\begin{proposition}\label{proposition: 4}
For an arbitrary two-strategy evolutionary game defined by Eq. \ref{eq: payoff} in the network $\mathcal{G}$ without mutation, the strategies of vertices in $\mathcal{G}$ reach the pure cooperative state or the pure defective state after sufficient time $t$ with power-law activating patterns. 
\end{proposition}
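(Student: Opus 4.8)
The plan is to show that the joint process consisting of the strategy configuration $\mathbf{s}(t)\in\{0,1\}^{\mathcal{V}}$ and the activation configuration $\mathbf{X}(t)\in\{0,1\}^{\mathcal{V}}$ is eventually absorbed in one of the two monomorphic strategy states, and that these are the only absorbing states; throughout I assume $\mathcal{G}$ is connected (otherwise the statement is read component-wise). Two elementary facts come first. (i) The all-cooperate state $\mathbf{s}\equiv 1$ and the all-defect state $\mathbf{s}\equiv 0$ are absorbing: without mutation every update replaces the updater's strategy by that of one of its activated neighbours, so if all strategies already coincide nothing changes. (ii) Under weak selection every fitness $f_i(\mathbf{s})=1+w\pi_i(\mathbf{s})$ is strictly positive, and since there are only finitely many strategy configurations it is bounded below by some $f_{\min}>0$; hence at any update event every activated neighbour of the updater is copied with probability at least some constant $q_0>0$.

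The core step is a uniform escape estimate: there exist $\Delta,\Delta',\epsilon>0$ such that from \emph{any} non-monomorphic configuration the process reaches a monomorphic configuration within time $\Delta$ with probability at least $\epsilon$, uniformly in the starting configuration. I would realise this through the event $E$ that (a) within time $\Delta$ all $N$ vertices become simultaneously activated and stay activated for a sub-window of length at least $\Delta'$; (b) during that sub-window at least $N$ strategy updates occur; and (c) the first $N$ (or fewer) of these updates realise a prescribed sequence driving the configuration to, say, all-cooperate. For (c): while all vertices are activated the activated subgraph equals $\mathcal{G}$, the dynamics is plain death--birth updating on $\mathcal{G}$, and the next vertex to update is uniform over $\mathcal{V}$; as long as the configuration is not all-cooperate, connectedness yields a defecting vertex adjacent to a cooperator, so we prescribe ``that defector updates and copies that cooperator,'' which strictly increases the number of cooperators, and after at most $N$ such steps the configuration is all-cooperate (which is absorbing, so later updates are harmless). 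Each prescribed step has probability at least $q_0/N$, so the prescribed sequence has probability at least $(q_0/N)^N$; given the all-activated sub-window, (b) has positive probability depending only on $N,\delta,\Delta'$; granting that (a) has probability bounded below by some $\epsilon_0>0$ uniformly, one multiplies the three to obtain $P(E)\ge\epsilon>0$.

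Given the escape estimate, I would finish by a routine argument: observe the process at the times $\Delta,2\Delta,3\Delta,\dots$; at each step it is either already absorbed or absorbed within the next window with probability at least $\epsilon$, so the probability of surviving past time $n\Delta$ is at most $(1-\epsilon)^n\to 0$. Hence with probability one the strategies of all vertices of $\mathcal{G}$ reach either the pure cooperative or the pure defective state, which is the assertion.

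The main obstacle is the continuous-time, non-Markovian nature of the activation mechanism: the inter-event times are power-law rather than exponential, so $(\mathbf{s}(t),\mathbf{X}(t))$ is only semi-Markov, the residual activation/quiescence lifetimes being part of the true state; consequently step (a) --- that from an arbitrary state all vertices become simultaneously activated and stay so long enough with probability bounded away from $0$ --- is the delicate point. I would handle it by invoking the positive recurrence implicit in Proposition~\ref{proposition: 1}: its stationary law assigns positive weight to the ``all activated'' configuration, so that configuration is visited infinitely often almost surely, and a conditional Borel--Cantelli / renewal argument upgrades ``infinitely often'' to ``a suitable all-activated window occurs within bounded time with positive probability,'' which is exactly what (a) needs. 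An alternative, closer to the duality cited in \cite{kaveh2015duality}, is to represent death--birth updating by coalescing ancestral random walks on $\mathcal{G}$: on a finite connected graph these coalesce to a single lineage almost surely, and once all lineages have coalesced every vertex carries the initial strategy of the common ancestor, i.e.\ the configuration is monomorphic.
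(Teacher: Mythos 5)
Your argument is correct in outline but takes a genuinely different route from the paper's. The paper works with the projected cooperator-frequency process: using pair-approximation conditional frequencies $q_{X|Y}$ and the activated-neighbour distribution from Corollary \ref{corollary: 3}, it writes down the probabilities $P(\Delta p_C=\pm 1/N)$ over an infinitesimal window (Eqs. \ref{eq: pc incease} and \ref{eq: pc decrease}), observes that both vanish exactly when $p_C=0$ or $p_D=0$, and concludes that the two pure states are the only absorbing states while all others are transient and ``visited finitely.'' You instead stay on the full configuration space and build an explicit positive-probability path to a monomorphic state from any non-monomorphic one (an all-activated window, enough Poisson update events, and a prescribed chain of defector-copies-cooperator steps along edges supplied by connectedness), then close with a geometric-trials/Borel--Cantelli argument. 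Your route supplies rigour precisely where the paper is thinnest --- the assertion that transient states are visited finitely often \emph{is} the absorption claim, and for a semi-Markov process it requires exactly the escape estimate you construct --- while the paper's route yields explicit transition probabilities that it reuses as scaffolding for Proposition \ref{proposition: 5}. One caveat on your version, which you partly flag yourself: the uniform-in-starting-state bound for your event (a) genuinely fails, because a quiescent vertex can carry an arbitrarily large residual sojourn time (the power-law laws have unbounded support), so no fixed $\Delta$ gives $P(E)\ge\epsilon$ uniformly over the semi-Markov state, and the observation scheme at times $\Delta,2\Delta,\dots$ does not close the proof as written. Your proposed patch is the right one and should be promoted to the main line: since $\lambda,\mu>2$ the sojourn times have finite mean, the (state, residual-lifetime) vector is positive recurrent, all-activated windows of length $\Delta'$ occur infinitely often almost surely, and conditional Borel--Cantelli applies because the conditional probability of executing the prescribed update sequence within each such window is bounded below uniformly over strategy configurations. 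The coalescing-ancestral-walk alternative you mention needs extra care here, since fitness-weighted copying makes the ancestral lines non-uniform and the exact voter-model duality is lost, but the direct construction does not need it.
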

\begin{proof}
Let us define $q_{X|Y}$ as the conditional probability to find an $X$ player given that the adjacent individual plays $Y$ and consider the strategy update of a $D$ individual. Here, both $X$ and $Y$ stand for $C$ or $D$. Among the neighbors of the $D$ individual waiting for a strategy update, a $C$ player has the fitness
\begin{small}
\begin{equation}
f_C=1+\sum_{i=1}^{k}wP(k, i)\{(i-1)q_{C|C}R+[(i-1)q_{D|C}+1]S\}. 
\end{equation}
\end{small}
Similarly, a $D$ player has the fitness 
\begin{small}
\begin{equation}
f_D=1+\sum_{i=1}^{k}wP(k, i)\{(i-1)q_{C|D}T+[(i-1)q_{D|D}+1]P\}. 
\end{equation}
\end{small}
The term $P(k, i)$ is because of the Corollary \ref{corollary: 3}. The probability to find $k_C$ cooperators among the $D$ individual's $i$ activated neighbors is $\frac{i!}{(k_C!k_D!)}q_{C|D}^{k_C}q_{D|D}^{k_D}$. Based on the model description, the probability that a $D$ individual is influenced by its activated neighbors and becomes a $C$ individual with the probability $\frac{k_Cf_C}{k_Cf_C+k_Df_D}$. Therefore, using Taylor's formula, during a short time $\Delta t$, the frequency of cooperators increases by $1/N$ with the probability
\begin{small}
\begin{equation}\label{eq: pc incease}
\begin{aligned}
&P(\Delta p_C=\frac{1}{N})=\delta\Delta t\sum_{j=1}^{Np_D}P(Np_D, j)p_D\sum_{i=1}^{k}P(k, i)\\
&\times\sum_{k_C+k_D=i}\frac{i!}{(k_C!k_D!)}q_{C|D}^{k_C}q_{D|D}^{k_D}\frac{k_Cf_C}{k_Cf_C+k_Df_D}+o(\Delta t), 
\end{aligned}
\end{equation}
\end{small}
where $p_D$ is the frequency of the defectors. 

Then, we consider the update of a $C$ player. Correspondingly, its one $C$ neighbor has the fitness
\begin{small}
\begin{equation}
g_C=1+\sum_{i=1}^{k}wP(k, i)\{[(i-1)q_{C|C}+1]R+(i-1)q_{D|C}S\}, 
\end{equation}
\end{small}
and one $D$ neighbor has the fitness
\begin{small}
\begin{equation}
g_D=1+\sum_{i=1}^{k}wP(k, i)\{[(i-1)q_{C|D}+1]T+(i-1)q_{D|D}P\}. 
\end{equation}
\end{small}
The probability to find a configuration with $k_C$ cooperators in $i$ activated neighbors is $\frac{i!}{k_C!k_D!}q_{C|C}^{k_C}q_{D|C}^{k_D}$. According to our strategy updating rule, the probability that a given $C$ player studies the strategy $D$ is $\frac{k_Dg_D}{k_Cg_C+k_Dg_D}$. Accordingly, during $\Delta t$, the probability to find the frequency of cooperators decreases by $1/N$ is 
\begin{small}
\begin{equation}\label{eq: pc decrease}
\begin{aligned}
&P(\Delta p_C=-\frac{1}{N})=\delta\Delta t\sum_{j=1}^{Np_C}P(Np_C, j)p_C\sum_{i=1}^{k}P(k, i)\\
&\times\sum_{k_C+k_D=i}\frac{i!}{(k_C!k_D!)}q_{C|C}^{k_C}q_{D|C}^{k_D}\frac{k_Dg_D}{k_Cg_C+k_Dg_D}+o(\Delta t), 
\end{aligned}
\end{equation}
\end{small}
where $p_C$ is the frequency of cooperators. 

According to Eqs. \ref{eq: pc incease} and \ref{eq: pc decrease}, it is apparent that if $p_C=0$ or $p_D=0$, the probability to change the cooperation frequency is zero. Therefore, the pure cooperative and the pure defective states are both the only absorption states, while the other states are all transient ones. Besides, the transient states are visited finitely, which directly leads to the conclusion. 

Result follows. 
\end{proof}

Proposition \ref{proposition: 4} provides us with evidence of studying the fixation probability of different strategies because the absorption states are always reached. Before deriving the cooperation condition, we present the following lemma on a one-step random walk in the complex network with power-law activation patterns. 

\begin{lemma}\label{lemma: 3}
The expected one-step random walk probability from vertex $i$ to $j$ is
\begin{small}
\begin{equation}
l_{ij}=[1-(1-(\frac{(\mu-1)(\lambda-2)}{(\mu-1)(\lambda-2)+(\lambda-1)(\mu-2)})^{k_i}]/{k_i},
\end{equation}
\end{small}
where $k_i$ is the degree of vertex $i$. 
\end{lemma}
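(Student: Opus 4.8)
The plan is to compute $l_{ij}$ as the probability that vertex $i$ hands the random walk to $j$ in one step, where a ``step'' occurs only after vertex $i$ has become activated. Since strategy-spreading (and hence the random walk that tracks strategy propagation) only occurs on the activated subgraph $\mathcal{G}_1(t)$, a walker currently at $i$ cannot move until $i$ is activated, and then it moves to one of $i$'s \emph{activated} neighbors chosen uniformly. So the first step is to set up the conditional structure: given that $i$ is activated and has degree $k_i$, let $A$ be the (random) number of activated neighbors of $i$. By Corollary \ref{corollary: 3} applied to the subgraph consisting of $i$'s $k_i$ neighbors, $A$ is binomial with parameters $k_i$ and $p := \frac{(\mu-1)(\lambda-2)}{(\mu-1)(\lambda-2)+(\lambda-1)(\mu-2)}$, the stationary activation probability of a single vertex.

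Next I would condition on $A$. If $A = m \ge 1$, then among those $m$ activated neighbors the walk picks $j$ with probability $\frac{1}{m}$, but only in the case that $j$ itself is one of the activated neighbors. So I would write
\begin{small}
\begin{equation}
l_{ij} = \sum_{m=1}^{k_i} P(j \text{ activated, } A = m \mid i \text{ activated}) \cdot \frac{1}{m}.
\end{equation}
\end{small}
Fixing that $j$ is among the activated set and summing over the number of the \emph{other} $k_i - 1$ neighbors that are activated, this equals $\sum_{m=1}^{k_i} \binom{k_i-1}{m-1} p^{m} (1-p)^{k_i - m} \cdot \frac{1}{m}$. The key combinatorial identity is $\frac{1}{m}\binom{k_i-1}{m-1} = \frac{1}{k_i}\binom{k_i}{m}$, which collapses the sum to $\frac{1}{k_i}\sum_{m=1}^{k_i}\binom{k_i}{m} p^m (1-p)^{k_i-m} = \frac{1}{k_i}\bigl(1 - (1-p)^{k_i}\bigr)$ by the binomial theorem (the $m=0$ term being $(1-p)^{k_i}$). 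Substituting $p = \frac{(\mu-1)(\lambda-2)}{(\mu-1)(\lambda-2)+(\lambda-1)(\mu-2)}$, so that $1 - p = \frac{(\lambda-1)(\mu-2)}{(\mu-1)(\lambda-2)+(\lambda-1)(\mu-2)}$, yields the claimed formula.

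One subtlety I would want to address explicitly: strictly speaking the statement writes the complement using $\bigl(1-(\frac{(\mu-1)(\lambda-2)}{\cdots})\bigr)^{k_i}$, i.e. $(1-p)^{k_i}$, which matches the derivation above; I would just make sure the independence assumption flagged at the end of Section~\ref{sec: II(A)} (that activation states of distinct vertices are uncorrelated) is invoked so that conditioning on ``$j$ activated'' does not disturb the binomial count over the remaining $k_i - 1$ neighbors. The main obstacle is not any deep step but getting the conditioning right — in particular remembering that we must condition on $i$ being activated (otherwise no step happens at all), that $j$ must also be activated to receive the walk, and that the $\frac{1}{m}$ uniform choice is over the activated neighbors only; the algebra then reduces to the single identity $\frac{1}{m}\binom{k_i-1}{m-1} = \frac{1}{k_i}\binom{k_i}{m}$ plus the binomial theorem, which is routine.
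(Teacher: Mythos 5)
Your proposal is correct and follows essentially the same route as the paper's proof: condition on $j$ being activated (with stationary probability $p=\frac{(\mu-1)(\lambda-2)}{(\mu-1)(\lambda-2)+(\lambda-1)(\mu-2)}$ from Lemma \ref{lemma: 2}/Corollary \ref{corollary: 3}), sum over the number $d$ of activated neighbors with the binomial weight $\binom{k_i-1}{d-1}p^{d-1}(1-p)^{k_i-d}$ and uniform choice $1/d$, and collapse the sum via the identity $\frac{1}{d}\binom{k_i-1}{d-1}=\frac{1}{k_i}\binom{k_i}{d}$ and the binomial theorem. Your additional remarks about conditioning on $i$ being activated and invoking the independence of vertex states are sensible clarifications of points the paper leaves implicit, but they do not change the argument.
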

\begin{proof}
A random walk starts from $i$ and ends at the neighbor $j$ if and only if $j$ is activated. According to Lemma \ref{lemma: 2}, $j$ is activated with the expected probability $(\mu-1)(\lambda-2)/[(\mu-1)(\lambda-2)+(\lambda-1)(\mu-2)]$. If there are $d$ activated neighbors (including $j$) around $i$, the probability that the focal random walk steps into $j$ is $(1/d)$. Therefore, summing all the probability from $d=1$ to $k_i$, we have
\begin{small}
\begin{equation}
\begin{aligned}
l_{ij}&=\frac{E[N_1]}{N}\sum_{d=1}^{k_i}\frac{1}{d}C_{k_i-1}^{d-1}\times\\
&\frac{[(\mu-1)(\lambda-2)]^{d-1}[(\lambda-1)(\mu-2)]^{k_i-d}}{[(\mu-1)(\lambda-2)+(\lambda-1)(\mu-2)]^{k_i-1}}. \\
\end{aligned}
\end{equation}
\end{small}
By $(1/d)C_{k_i-1}^{d-1}=(1/k_i)C_{k_i}^{d}$, we have
\begin{small}
\begin{equation}
\begin{aligned}
l_{ij}&=\sum_{d=1}^{k_i}\frac{1}{k_i}C_{k_i}^{d}\frac{[(\mu-1)(\lambda-2)]^{d}[(\lambda-1)(\mu-2)]^{k_i-d}}{[(\mu-1)(\lambda-2)+(\lambda-1)(\mu-2)]^{k_i}}\\
&=[1-(1-(\frac{(\mu-1)(\lambda-2)}{(\mu-1)(\lambda-2)+(\lambda-1)(\mu-2)})^{k_i}]/{k_i}.\\
\end{aligned}
\end{equation}
\end{small}
Results follow. 
\end{proof}

We assume one single cooperator (\textit{resp. }defector) invades a network that all others are defectors (\textit{resp. }cooperator), and cooperation (\textit{resp. }defection) finally occupies the whole network with probability $\rho_C$ (\textit{resp. }$\rho_D$). If $\rho_C>\rho_D$, we say that cooperation is favored in this networked population. Next, we show the property of the critical cooperation condition in homogeneous graphs. 

\begin{proposition}\label{proposition: 5}
For the prisoner's dilemma game with $R=b-c$, $S=-c$, $T=b$, and $P=0$, $\rho_C>\rho_D$ if 
\begin{small}
\begin{equation}\label{eq: condition for cooperation}
(\frac{b}{c})>\frac{N-2}{N[1-(1-(\frac{(\mu-1)(\lambda-2)}{(\mu-1)(\lambda-2)+(\lambda-1)(\mu-2)})^{k}]/{k}-2}. 
\end{equation}
\end{small}
\end{proposition}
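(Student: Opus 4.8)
The plan is to run the classical Ohtsuki--Nowak weak-selection argument for death--birth updating on a regular graph, modified so that a focal vertex interacts with, and is replaced through, only its \emph{activated} neighbours; consequently the uniform one-step random walk of the classical derivation is everywhere replaced by the activation-weighted walk $l_{ij}$ of Lemma~\ref{lemma: 3}. Throughout write $p=\frac{(\mu-1)(\lambda-2)}{(\mu-1)(\lambda-2)+(\lambda-1)(\mu-2)}$ for the stationary probability that a vertex is activated (Lemma~\ref{lemma: 2}) and, in the homogeneous graph of degree $k$, $l=l_{ij}=[1-(1-p)^{k}]/k$ for a neighbour pair. Observe that the target condition in Eq.~\ref{eq: condition for cooperation} is precisely the finite-population rule $b/c>(N-2)/(Nl-2)$, i.e.\ the familiar $b/c>(N-2)/(N/k-2)$ with $1/k$ replaced by $l$, so the whole argument amounts to carrying the factor $l$ through the standard computation.

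First I would form the net drift $D=P(\Delta p_C=1/N)-P(\Delta p_C=-1/N)$ of the cooperator frequency from Eqs.~\ref{eq: pc incease} and \ref{eq: pc decrease} (dropping the common $\delta\,\Delta t$), substitute $f_i=1+w\pi_i$, and expand the replacement ratios $k_C f_C/(k_C f_C+k_D f_D)$ to first order in the selection intensity $w$. The zeroth-order pieces describe neutral drift and cancel, leaving $D=w\,D_1(\mathbf{s})+o(w)$ with $D_1$ linear in the payoff entries $R,S,T,P$ and in the local strategy correlations $q_{X|Y}$ (weighted over activated-neighbourhood sizes through Corollary~\ref{corollary: 3}). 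Next I would invoke the standard weak-selection equivalence that $\rho_C>\rho_D$ iff $\left.\partial_w(\rho_C-\rho_D)\right|_{w=0}>0$, and that this derivative equals the expectation of $D_1$ under the \emph{neutral} stationary measure, so that every correlation appearing in $D_1$ is to be evaluated along the neutral process. Since the activation process is independent of strategies (as noted after Corollary~\ref{corollary: 3}), the activation randomness can be averaged out first, and the relevant neutral object is the coalescence of two independent copies of the activation-weighted random walk with step kernel $l_{ij}$.

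I would then express the neutral correlations through the expected coalescence (meeting) times $\tau_{ij}$ and their $n$-step averages $\tau^{(n)}$: identity-by-descent of a pair of vertices is a fixed decreasing function of $\tau_{ij}$, so $D_1$ collapses to a linear inequality among $\tau^{(0)}=0,\tau^{(1)},\tau^{(2)},\tau^{(3)}$ whose coefficients follow from inserting $R=b-c$, $S=-c$, $T=b$, $P=0$, giving a relation of the Ohtsuki--Nowak type $-c\,\tau^{(2)}+b\,(\tau^{(3)}-\tau^{(1)})>0$. On the homogeneous graph $\tau^{(n)}$ depends only on $n$ and satisfies a one-step recursion obtained by advancing one of the two walkers, closed by the global identity coming from recurrence of the finite chain; the probability-$(1-p)^{k}$ event that a walker sees no activated neighbour merely rescales the time unit and cancels in the ratio defining $b/c$. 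Solving this linear system yields $\tau^{(1)},\tau^{(2)},\tau^{(3)}$ as explicit functions of $N,k,l$ in which every classical $1/k$ is replaced by $l$; substituting into the payoff inequality and simplifying collapses, exactly as when $l=1/k$, to $b/c>(N-2)/(Nl-2)=(N-2)/\bigl(N[1-(1-p)^{k}]/k-2\bigr)$, which is Eq.~\ref{eq: condition for cooperation}.

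The main obstacle is the coalescence step: correctly setting up and solving the recursion for the \emph{sub-stochastic}, ``lazy'' activation-weighted walk, and in particular verifying that the no-activated-neighbour event of probability $(1-p)^{k}$ only rescales real time --- so that it cancels in the benefit-to-cost threshold rather than shifting the structure coefficient. A secondary subtlety is the rigorous reduction above: in the continuous-time Poisson-update setting one must justify that activation and strategies decouple, so that the neutral coalescent driven by $l_{ij}$ is indeed the correct object controlling the sign of $\rho_C-\rho_D$.
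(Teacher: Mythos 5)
Your proposal takes essentially the same route as the paper: a weak-selection expansion of $\rho_C-\rho_D$, reduced via the neutral coalescing random walk with the activation-weighted kernel $l_{ij}$ of Lemma~\ref{lemma: 3} to the condition $-c\,\tau^{(2)}+b\,(\tau^{(3)}-\tau^{(1)})>0$, and then closed by the one-step recurrence for $\tau^{(n)}$ on the homogeneous graph, yielding the classical rule with $1/k$ replaced by $l$. This is precisely the argument in the paper (Eqs.~\ref{eq: rhoc}--\ref{eq: fixation probability2}), so the plan is correct and matches the published proof.
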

\begin{proof}
Suppose the networked evolutionary dynamic starts from the state $\mathbf{s_0}\in \{0,1\}^{\mathcal{G}}$. We focus on the proportion of cooperators at time $t$ as $\bar{S}(t)=\sum_{i\in \mathcal{G}}S_i(t)/N$. Following \cite{chen2013sharp} and \cite{allen2017evolutionary}, the fixation probability 
 of cooperation ($\rho_C$) is
 \begin{small}
\begin{equation}\label{eq: rhoc}
\rho_C=\frac{1}{N}+w\left \langle D' \right \rangle _{\mathbf{u}}^{\circ}+O(w^2),  
\end{equation}
\end{small}
where $D$ is a function of $\mathbf{s}$ that denotes the change of cooperation proportion, $\circ$ describes the limit under neutral drift ($w=0$), and $\mathbf{u}$ indicates that the initial state of $\mathbf{s}$ contains only one single cooperator selected from uniform distribution. The expected cooperation proportion change is
\begin{small}
\begin{equation}
\begin{aligned}
D(\mathbf{s})&=\sum_{i\in\mathcal{G}}\frac{s_i}{N}(\sum_{j\in\mathcal{G}}\frac{l_{ij}f_i(\mathbf{s})}{\sum_{k\in\mathcal{G}}l_{kj}f_k(\mathbf{s})}-1)\\
&=\frac{w}{N}\sum_{i\in\mathcal{G}}s_i(f_i^{(0)}(\mathbf{s})-f_i^{(2)}(\mathbf{s}))+O(w^2), 
\end{aligned}
\end{equation}
\end{small}
where $f_i^{(n)}$ denotes the expected payoff from $n$ steps away from $i$. Find the first derivative of $w$ and then we have
\begin{small}
\begin{equation}\label{eq: d'}
\begin{aligned}
D'(\mathbf{s})&=\frac{1}{N}\sum_{i\in\mathcal{G}}s_i(f_i^{(0)}(\mathbf{s})-f_i^{(2)}(\mathbf{s}))\\
&=\frac{1}{N}\sum_{i\in\mathcal{G}}s_i(-c(s_i^{(0)}-s_i^{(2)})+b(s_i^{(1)}-s_i^{(3)})). 
\end{aligned}
\end{equation}
\end{small}
Eq. \ref{eq: d'} can be calculated through coalescing random walk theory. Consider two random walkers starting from different vertices that step independently until their meeting, the time from the beginning to the end is the coalescence time. Denote the expected coalescence time from vertices $i$ and $j$ as $\tau_{ij}$. Taking expectation of $\tau_{ij}$ over all $n$-step random walk from $i$ to $j$, we have $\tau^{(n)}=\sum_{i,j\in\mathcal{G}}l_{ij}^{(n)}\tau_{ij}/N$, where $l_{ij}^{(n)}$ is the probability that an $n$-step random walk starts at $i$ and ends at $j$ in the network with power-law activation patterns. Results of spatial assortment condition \cite{allen2017evolutionary,allen2019mathematical} suggest that 
\begin{small}
\begin{equation}\label{eq: assortment condition}
\left \langle \frac{1}{N}\sum_{i\in\mathcal{G}}s_i (s_i^{(n_1)}-s_i^{(n_2)}) \right \rangle _{\mathbf{u}}^{\circ}=\frac{\tau^{(n_2)}-\tau^{(n_1)}}{2N}. 
\end{equation}
\end{small}
Since the coalescence time has the recurrence relation
\begin{small}
\begin{equation}
\tau_{ij}=\left\{
\begin{aligned}
&0 ,  &i=j \\
&1+\frac{1}{2}\sum_{k\in\mathcal{G}}(l_{ik}\tau_{jk}+l_{jk}\tau_{ik}) , &i\neq j,
\end{aligned}
\right.
\end{equation}
\end{small}
we have 
\begin{small}
\begin{equation}
\begin{aligned}
&\tau^{(n)}=\frac{1}{N}\sum_{i,j\in\mathcal{G}}l_{ij}^{(n)}+\frac{1}{2N}[\sum_{i,j,k\in\mathcal{G}}l_{ji}^{(n)}l_{ik}\tau_{jk}+\sum_{i,j,k\in\mathcal{G}}l_{ij}^{(n)}l_{jk}\tau_{ik}]\\
&-\frac{1}{N}\sum_{i\in\mathcal{G}}l_{ii}^{(n)}(1+\sum_{k\in\mathcal{G}}l_{ik}\tau{ik})\\
&=\frac{1}{N}\frac{(\mu-1)(\lambda-2)}{(\mu-1)(\lambda-2)+(\lambda-1)(\mu-2)}N+\\
&\frac{1}{2N}[\sum_{j,k\in\mathcal{G}}l_{jk}^{(n+1)}\tau_{jk}+\sum_{i,k\in\mathcal{G}}l_{ik}^{(n+1)}\tau_{ik}]-\frac{1}{N}\sum_{i\in\mathcal{G}}l_{ii}^{(n)}(1+\sum_{k\in\mathcal{G}}l_{ik}\tau_{ik}).\\
\end{aligned}
\end{equation}
\end{small}
Rewrite $\tau_i=1+\sum_{k\in\mathcal{G}}l_{ik}\tau_{ik}$ and we obtain the recurrence
\begin{small}
\begin{equation}\label{eq: rec}
\begin{aligned}
\tau^{(n+1)}&=\tau^{(n)}+\frac{1}{N}\sum_{i\in\mathcal{G}}l_{ii}^{(n)}\tau_i-\frac{(\mu-1)(\lambda-2)}{(\mu-1)(\lambda-2)+(\lambda-1)(\mu-2)}. 
\end{aligned}
\end{equation}
\end{small}
Note that if $n\rightarrow\infty$ in Eq. \ref{eq: rec}, we have 
\begin{small}
\begin{equation}
\frac{1}{N^2}\sum_{i\in\mathcal{G}}\tau_i=\frac{(\mu-1)(\lambda-2)}{(\mu-1)(\lambda-2)+(\lambda-1)(\mu-2)}. 
\end{equation}
\end{small}
Additionally, the recurrence relation shows that $\tau^{(0)}=0$, and 
\begin{small}
\begin{equation}\label{eq: tau1}
\tau^{(1)}=\frac{1}{N}\sum_{i\in\mathcal{G}}\tau_i-\frac{(\mu-1)(\lambda-2)}{(\mu-1)(\lambda-2)+(\lambda-1)(\mu-2)},
\end{equation}
\end{small}
\begin{small}
\begin{equation}\label{eq: tau2}
\tau^{(2)}=\frac{1}{N}\sum_{i\in\mathcal{G}}\tau_i-\frac{2(\mu-1)(\lambda-2)}{(\mu-1)(\lambda-2)+(\lambda-1)(\mu-2)},
\end{equation}
\end{small}
\begin{small}
\begin{equation}\label{eq: tau3}
\tau^{(3)}=\frac{1}{N}\sum_{i\in\mathcal{G}}\tau_i(1+l_{ii}^{(2)})-\frac{3(\mu-1)(\lambda-2)}{(\mu-1)(\lambda-2)+(\lambda-1)(\mu-2)}, 
\end{equation}
\end{small}
where we drop $l_{ii}^{(1)}=0$ since we do not consider self-loop. Combining Eqs. \ref{eq: rhoc}, \ref{eq: assortment condition}, and \ref{eq: tau1}-\ref{eq: tau3}, we get the fixation probability of cooperation under weak selection
\begin{small}
\begin{equation}\label{eq: fixation probability1}
\begin{aligned}
&\rho_C=\frac{1}{N}+\frac{w}{2N}\times\\
&(-c(\frac{1}{N}\sum_{i\in\mathcal{G}}\tau_i-\frac{2(\mu-1)(\lambda-2)}{(\mu-1)(\lambda-2)+(\lambda-1)(\mu-2)})+\\
&b(\frac{1}{N}\sum_{i\in\mathcal{G}}\tau_i l_{ii}^{(2)}-\frac{2(\mu-1)(\lambda-2)}{(\mu-1)(\lambda-2)+(\lambda-1)(\mu-2)}))+O(w^2). 
\end{aligned}
\end{equation}
\end{small}
Reversing the payoff matrix directly leads to the fixation probability of defection
\begin{small}
\begin{equation}\label{eq: fixation probability2}
\begin{aligned}
&\rho_D=\frac{1}{N}+\frac{w}{2N}\times\\
&(c(\frac{1}{N}\sum_{i\in\mathcal{G}}\tau_i-\frac{2(\mu-1)(\lambda-2)}{(\mu-1)(\lambda-2)+(\lambda-1)(\mu-2)})-\\
&b(\frac{1}{N}\sum_{i\in\mathcal{G}}\tau_i l_{ii}^{(2)}-\frac{2(\mu-1)(\lambda-2)}{(\mu-1)(\lambda-2)+(\lambda-1)(\mu-2)}))+O(w^2). 
\end{aligned}
\end{equation}
\end{small}

Therefore, based on Lemma \ref{lemma: 3}, Eqs. \ref{eq: fixation probability1} and \ref{eq: fixation probability2}, with the fact that $\rho_C>\rho_D \iff \rho_C>\frac{1}{N}>\rho_D$, we obtain the condition as Eq. \ref{eq: condition for cooperation}.

Result follows. 
\end{proof}

In Proposition \ref{proposition: 5}, we present the cooperation condition for the homogeneous network with degree $k$. This ends the model section. We have introduced the network model with individuals' power-law activating patterns and the evolutionary game dynamics, which have been theoretically analyzed via stochastic methods. 
\section{Simulation}\label{Sec: III}
\small
This section first illustrates technical details in our simulation and then shows the results to verify the proposed theorems. We first introduce our experimental methods in Sec. \ref{Sec: III(A)}. In Sec. \ref{Sec: III(B)}, we validate Proposition \ref{proposition: 1} by giving the statistical distribution of activated network sizes. In Sec. \ref{Sec: III(C)}, we provide the activated subgraph degree distributions and mean degrees of WSNs and BANs. In Sec. \ref{Sec: III(D)}, we analyze the resilience and robustness of the activated subgraph caused by the intermittent interaction. In Secs. \ref{Sec: III(E)} and \ref{Sec: III(F)}, we verify Propositions \ref{proposition: 4} and \ref{proposition: 5} by providing the results of fixation probability on RRGs and WSNs. In Sec. \ref{Sec: III(G)}, we present and discuss the results of evolutionary dynamics with our proposed power-law activation in real-world network data sets. 
\subsection{Methods}\label{Sec: III(A)}
\small
\subsubsection{Power-law activating patterns}
As mentioned in Sec. \ref{sec: II(A)}, an individual stays activated for a random time that follows the power-law distribution with the parameter $\mu$, and stays quiescent for a random time that follows the power-law distribution with the parameter $\lambda$. To simulate this on continuous time stamps, we consider the transformation method as follows \cite{press2007numerical} \cite{clauset2009power}. The complementary cumulative distribution function of a power-law distribution denoted in the same form as Eq. \ref{eq: power-law distribution} is 
\begin{small}
\begin{equation}
F(t;\alpha)=Pr(T\geq t)=\int_{t}^{\infty}f(t'; \alpha)\mathrm{d}t'=(\frac{t}{t_0})^{1-\alpha}. 
\end{equation}
\end{small}
Denote $r\sim U(0, 1)$ as a random number that follows a uniform distribution. According to the mentioned transformation method, a variable that follows a power-law distribution with the parameter $\alpha$ is
\begin{small}
\begin{equation}\label{eq: power-law random}
t=F^{-1}(1-r; \alpha)=t_0 (1-r)^{\frac{1}{1-\alpha}}=t_0 r^{\frac{1}{1-\alpha}}. 
\end{equation}
\end{small}
The last equality holds because when $r\sim U(0, 1)$, we have $1-r\sim U(0,1)$ as well. 

It is worth noting that the above method may generate very large numbers which affect the simulation of dynamic processes. We here set an upper bound of the power-law random number, say $\Gamma$, to avoid the overtime issue. We determine the value ranges of the power-law parameters ($\lambda$ and $\mu$) and $\Gamma$ to minimize the simulation error. The expectation of a power-law random number in the range $[t_0, \Gamma]$ is 
\begin{small}
\begin{equation}
E[t]=\int_{t_0}^{\Gamma}tf(t;\alpha)\mathrm{d}t=\frac{\alpha-1}{2-\alpha}t_0^{\alpha-1}\Gamma^{2-\alpha}+\frac{\alpha-1}{\alpha-2}t_0. 
\end{equation}
\end{small}
We set the upper bound $\Gamma=10^4$ and the power-law parameter $\alpha>2.5$ to ensure $\frac{\alpha-1}{2-\alpha}t_0^{\alpha-1}\Gamma^{2-\alpha}\rightarrow 0$, leading to acceptable errors. 

In our simulations, each individual becomes activated or quiescent initially with equal probability ($50\%$). If a vertex is activated in a moment, we generate a random time based on Eq. \ref{eq: power-law random} with $\alpha=\mu$ to keep its state, and then becomes quiescent. Similarly, if a vertex becomes quiescent, a random time with $\alpha=\lambda$ is generated for the quiescent state and then the vertex becomes activated. 
\subsubsection{Poisson Process}
We assume each activated vertex updates the strategy by a Poisson process with the parameter $\delta=1$. Accordingly, the inter-event time of the strategy update occurrence follows an exponential distribution with the parameter $\delta=1$. Once a vertex becomes activated from quiescent, we generate a set of exponentially distributed time stamps as the strategy update time until it becomes quiescent again. 
\subsubsection{Network Structure}
We consider three types of generative networks. \textit{(a) Random regular graphs (RRG)}: each vertex is connected to $k$ neighbors randomly.  \textit{(b) Watts-Strogatz small-world network (WSN)}: WSN is constructed by random reconnection (here with probability $p=0.40$) of a nearest-neighbor couple network with the mean degree $k$ \cite{watts1998collective}. \textit{(c) Barab\'{a}si-Albert scale-free network (BAN): }BAN is constructed by growth and preferential attachment, i.e., each new vertex connects to $m$ existing nodes with the probability proportion to degrees and yields the mean degree as $k=2m$ \cite{barabasi1999emergence}. 
\subsubsection{Experimental Environment}
We establish all the following experiments based on \textit{Python 3.8}. We employ the package \textit{networkx} to generate a network structure and \textit{numpy.random} to generate the random time sequences. 
\begin{figure*}                                              
\centering
\subfigure[$\lambda=2.60$]{\includegraphics[scale=0.26]{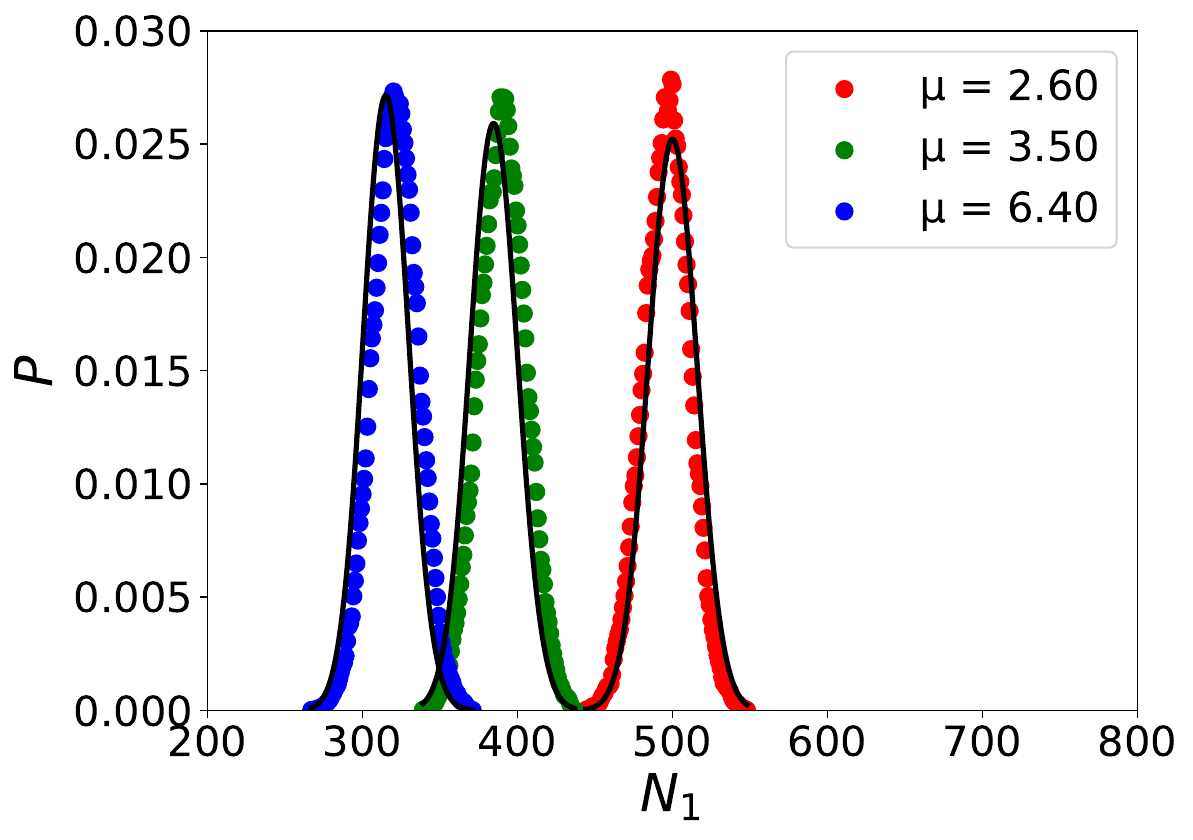}}
\subfigure[$\lambda=3.50$]{\includegraphics[scale=0.26]{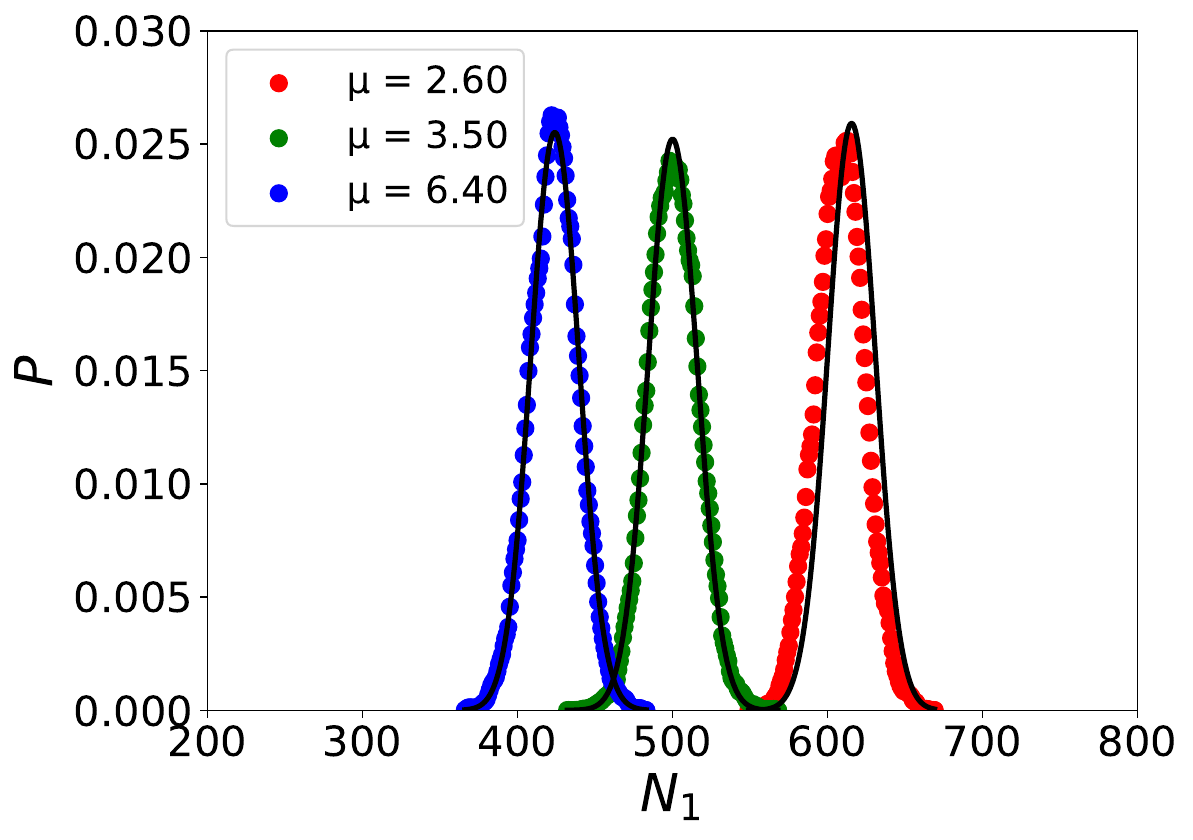}}
\subfigure[$\lambda=6.40$]{\includegraphics[scale=0.26]{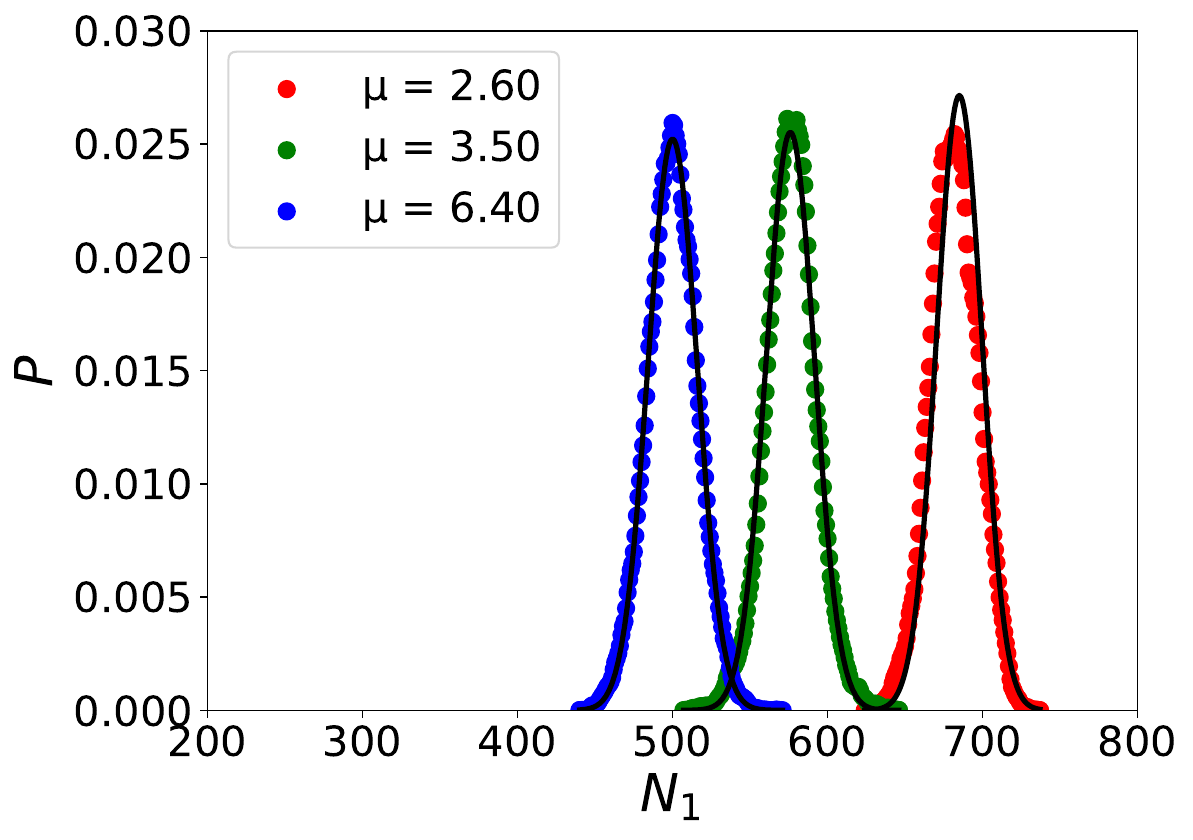}}
\vspace{-3.5mm}
\caption{\textbf{Size distributions of the activated subgraph. }The activated subgraph size shows binomial distribution as Proposition \ref{proposition: 1} suggests. (a) $\lambda=2.60$. (b) $\lambda=3.50$. (c) $\lambda=6.40$. We set $\mu=[2.60, 3.50, 6.40]$ and network size $N=1000$ for cross simulation. Initially, each vertex has the equal probability to be activated or quiescent. We collect the sizes of activated subgraphs if $t>50$ and calculate the frequencies until $t=600$. Results for $\mu=2.60$, $3.50$, and $6.40$ are shown in red, green, and blue respectively. Black plots indicate the theoretical distribution as Proposition \ref{proposition: 1}. (Color online)}\label{fig: 3}
\end{figure*}

\subsection{Statistical Characteristics of Activated Subgraph Size}\label{Sec: III(B)}

We first conduct simulations on the sizes of the activated subgraph ($\mathcal{G}_1(t)$) to validate Proposition \ref{proposition: 1}. Notably, the scale of the activated subgraph is independent of network type; therefore, this subsection does not treat network type as a variable. Fig. \ref{fig: 3} shows the frequencies of activated vertex numbers for an initial network size of $N = 1000$. We set the power-law rates as $[2.60, 3.50, 6.40]$ for cross-simulation. The red, green, and blue points represent the experimental size distributions for $\mu = [2.60, 3.50, 6.40]$, respectively, while the black solid lines correspond to the theoretical results derived from Eq. \ref{eq: proposition 1}. The close agreement between the theoretical and experimental results confirms the accuracy of Proposition \ref{proposition: 1} in describing real systems. The results in Fig. \ref{fig: 3} suggest that during the evolution process, the size of the activated subgraph follows a consistent probability distribution despite heterogeneous activation patterns. Specifically, if $\lambda$ is fixed, a smaller $\mu$ leads to a higher size expectation. Similarly, if $\mu$ is fixed, a larger $\lambda$ ensures a greater activated subgraph size.

\begin{table}[h]
\centering
\caption{\textbf{The Kullback-Leibler divergence of the theoretical and experimental distributions.}}
\begin{tabular}{c|c|c|c}
\toprule[2pt]
\hline
\diagbox{$\lambda$}{$\mu$}&2.60      &3.50        &6.40\\
\midrule
2.60&0.005&0.066&0.036\\
3.50&0.092&0.002&0.06\\
6.40&0.049&0.003&0.003\\
\midrule
\bottomrule[2pt]
\end{tabular}
\label{tab: 1}
\vspace{-4mm}
\end{table}
\begin{figure}[htbp]                                              
\centering
\subfigure[WSN, $\mu=2.60$]{\label{fig: 4(a)}
\includegraphics[scale=0.22]{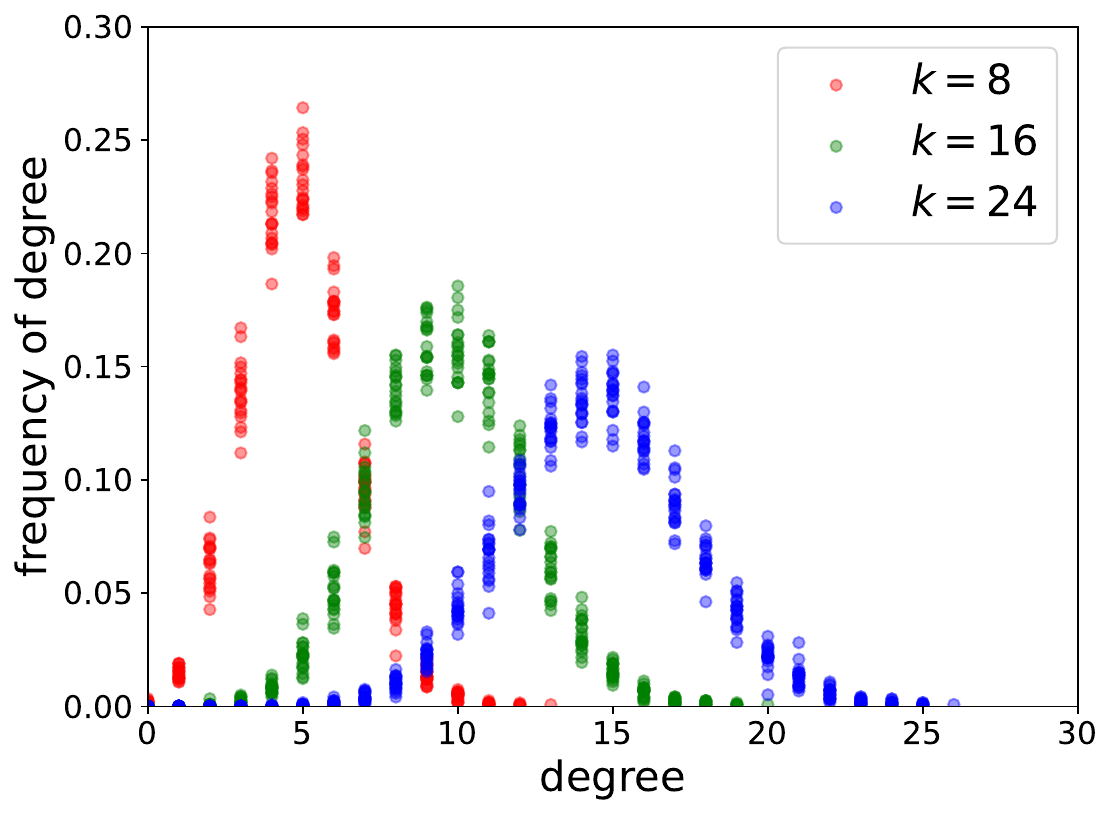}}
\subfigure[WSN, $\mu=3.70$]{\label{fig: 4(b)}
\includegraphics[scale=0.22]{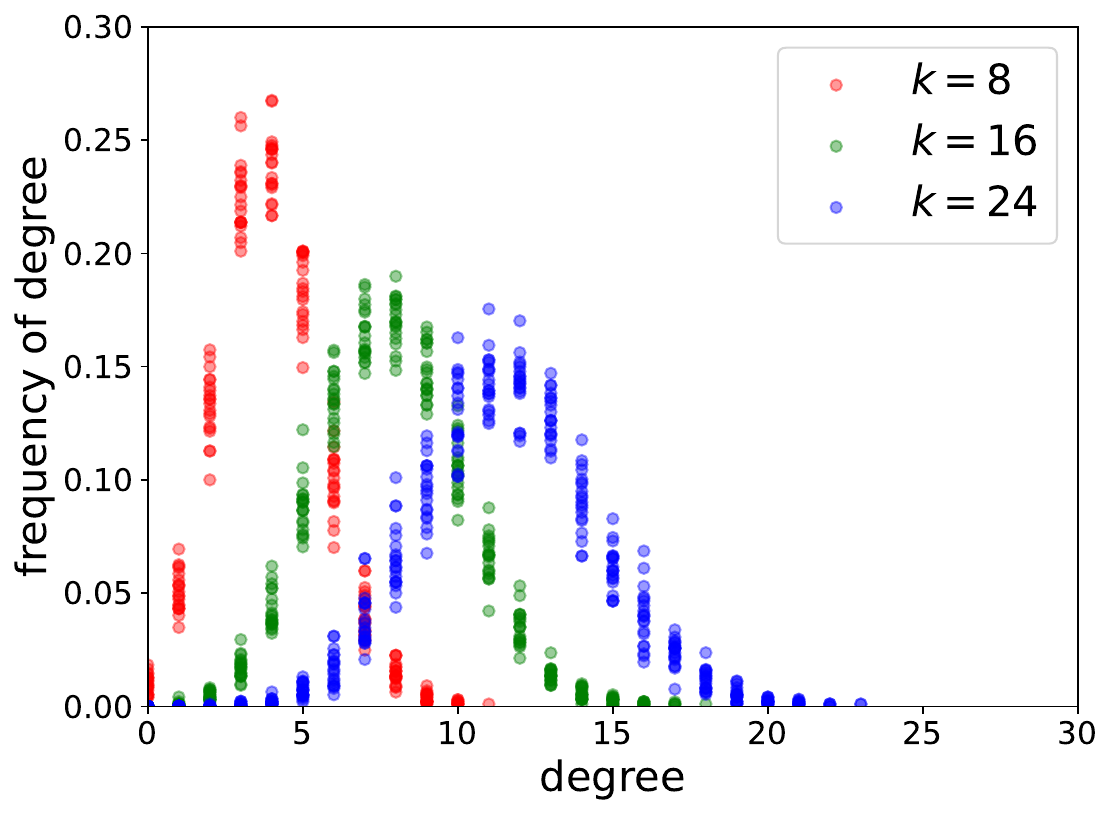}}
\vspace{-3.5mm}

\subfigure[BAN, $\mu=2.60$]{\label{fig: 4(c)}
\includegraphics[scale=0.22]{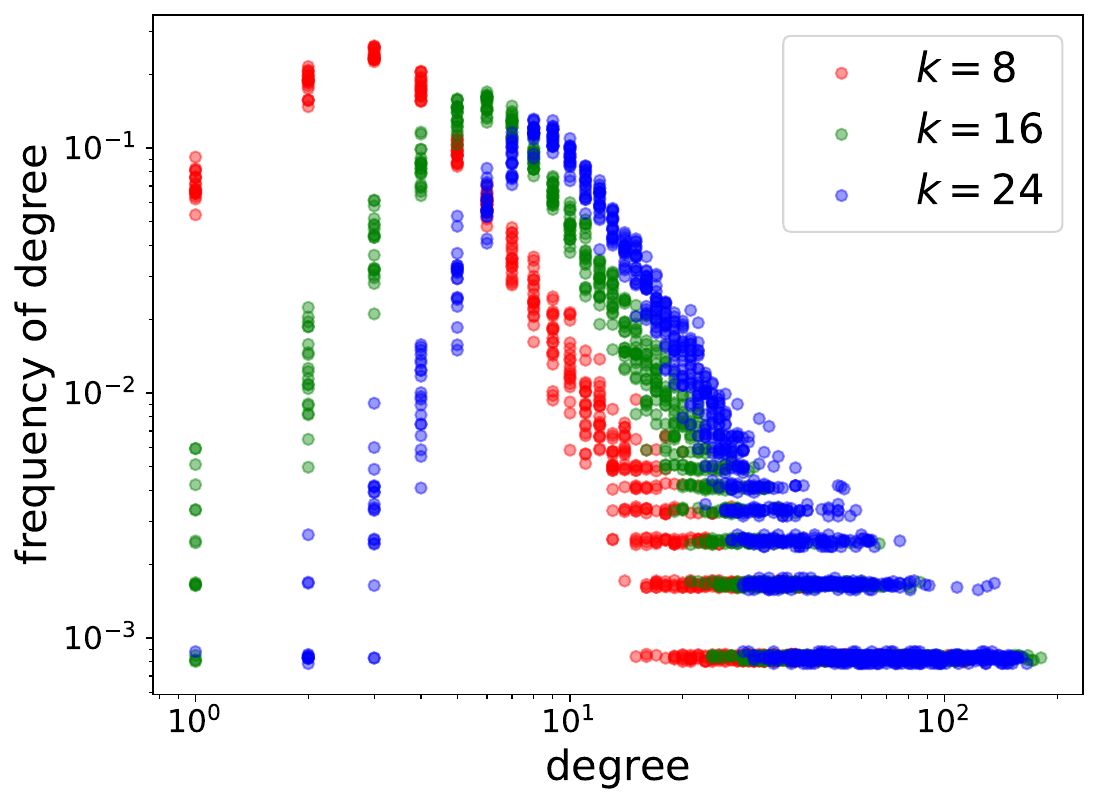}}
\subfigure[BAN, $\mu=3.70$]{\label{fig: 4(d)}
\includegraphics[scale=0.22]{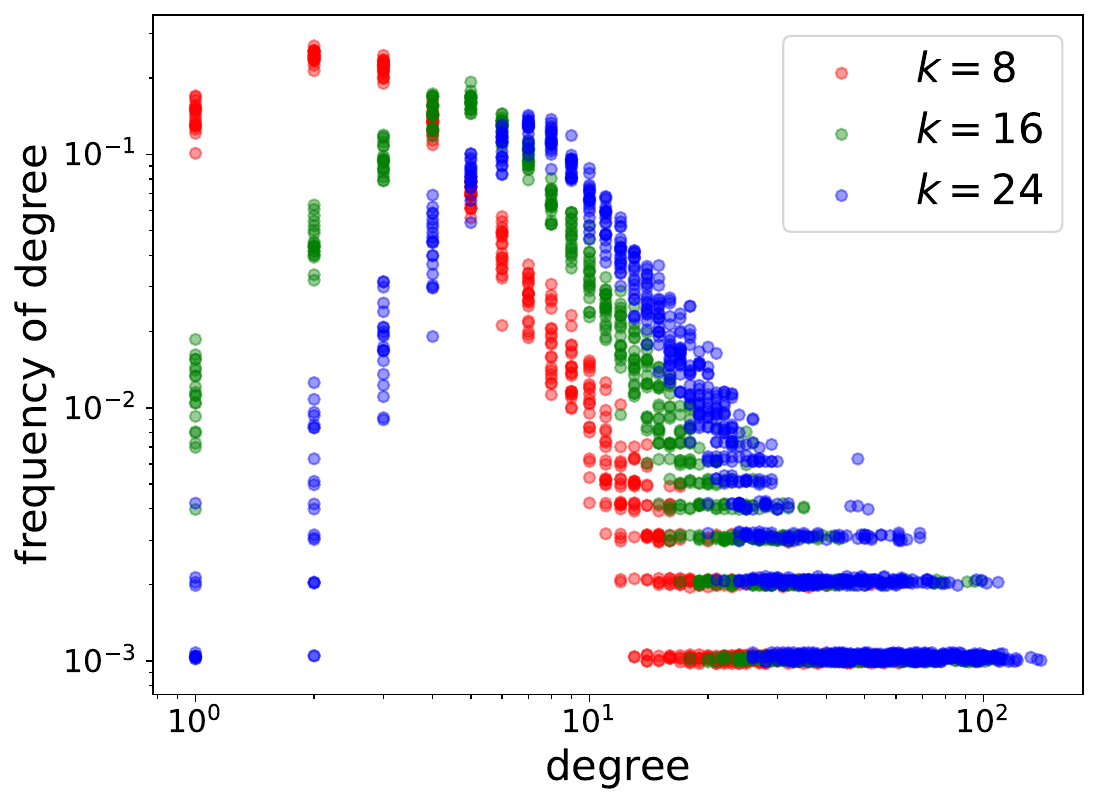}}
\vspace{-3.5mm}
\caption{\textbf{Degree distributions of the activated subgraph. } The power-law activation patterns maintain the homogeneity of WSN but break the heterogeneity of BAN. (a) WSN, $\mu=2.60$. (b) WSN, $\mu=3.70$. (c) BAN, $\mu=2.60$. (d) BAN, $\mu=3.70$. We fix the parameters $N=2000$, $\lambda=3.50$ and set $\mu=[2.60, 3.70]$, $k=[8, 16, 24]$ for cross experiments. We collect the degree distributions if $t>20$ every $10$ simulation time until $t=200$. Results for $k=8$, $16$, and $24$ are shown in red, green, and blue respectively. Degree distributions of WSNs and BANs are shown in double-linear axes and double-logarithmic axes respectively. (Color online)}\label{fig: 4}
\end{figure}
To further demonstrate the effectiveness of the mentioned proposition, we here employ the Kullback-Leibler (KL) divergence to describe the distances between two probability distributions, denoted as $KL_{Q\vert\vert P}=\sum_{i}Q_i \log \frac{Q_i}{P_i}$, where $P_i$ is denoted in Eq. \ref{eq: proposition 1} as the theoretical results, and $Q_i$ is the experimental frequency of activated subgraph. Calculating the KL divergence directly may exceed the computer's upper limit. Therefore, we transfer Eq. \ref{eq: proposition 1} and calculate $P_i$ after logarithmic mapping by 
\begin{small}
\begin{equation}\label{eq: KL divergence transformed}
\begin{aligned}
&\log{P_i}=\log{C_{N}^{i}}+i \log{(\mu-1)(\lambda-2)}+(N-i)\\
&\log{(\lambda-1)(\mu-1)})-N\log{(\mu-1)(\lambda-2)+(\lambda-1)(\mu-2)}. 
\end{aligned}
\end{equation}
\end{small}
In Tab. \ref{tab: 1}, we present values of the KL divergence between theoretical and experimental distributions. For all parameter sets, the results are close to zero, which indicates small distances. Accordingly, we can conclude that Proposition \ref{proposition: 1} is an accurate theoretical result. We note that the KL divergence results are obtained with the fact that our algorithm to generate power-law random time is biased to avoid the low probability overflow as mentioned above in Sec. \ref{Sec: III(A)}. 

\subsection{Network Topology of Activated Subgraph}\label{Sec: III(C)}

We now focus on the network topology properties of the activated subgraph. We employ two of the most representative network models as introduced in Sec. \ref{Sec: III(A)}, including the WSNs and BANs. As shown in Fig. \ref{fig: 4}, we obtain several groups of degree distributions of activated subgraph for different $k$ ($[8, 16, 24]$) and $\mu$ ($[2.60, 3.70]$) with fixed $\lambda=3.50$ and $N=2000$. The degree frequency functions of activated subgraphs are shown in $t\in (20, 200)$ with the step $\Delta t=10$. 
\begin{figure}                                              
\centering
\subfigure[WSN]{\label{fig: 5(a)}
\includegraphics[scale=0.21]{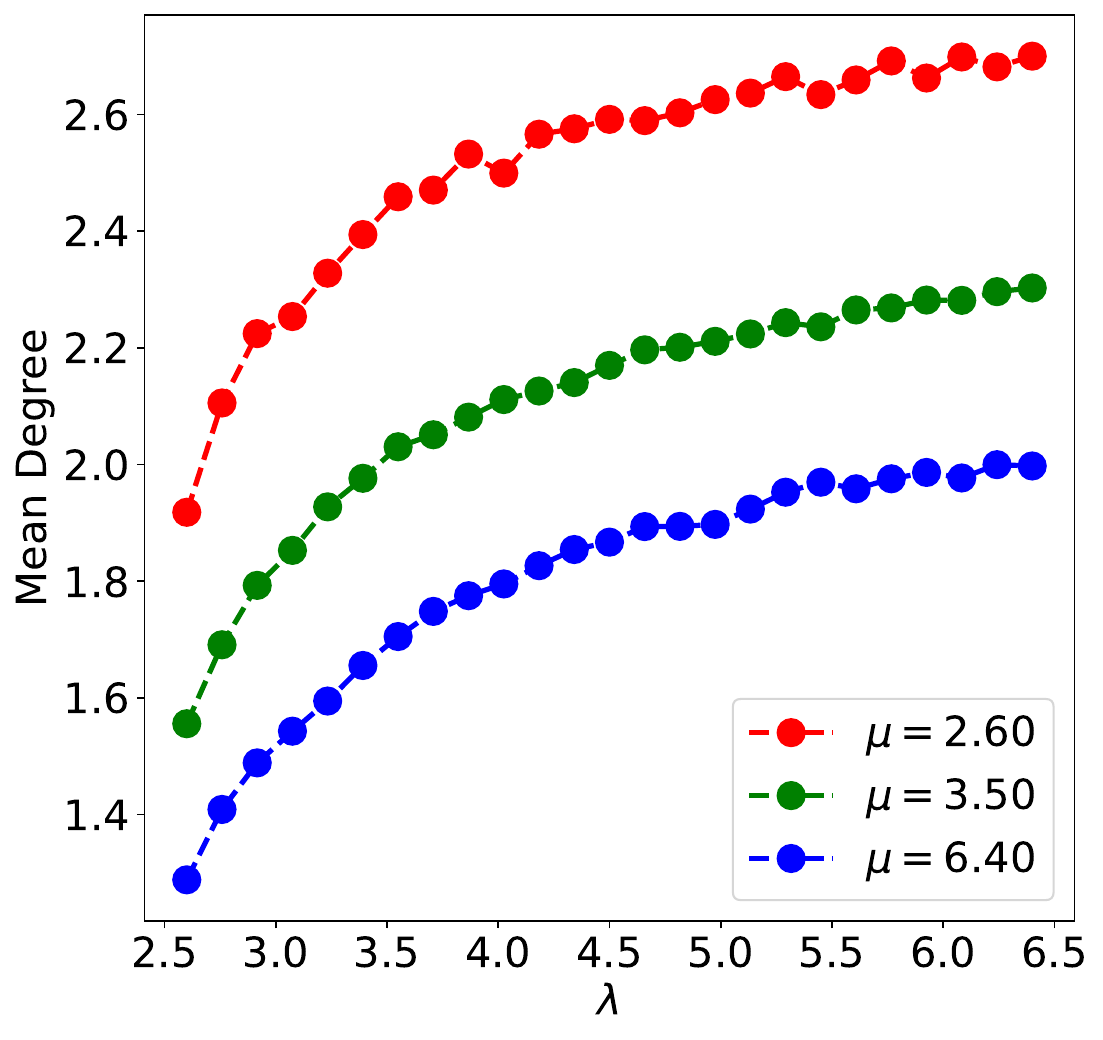}}
\subfigure[BAN]{\label{fig: 5(b)}
\includegraphics[scale=0.21]{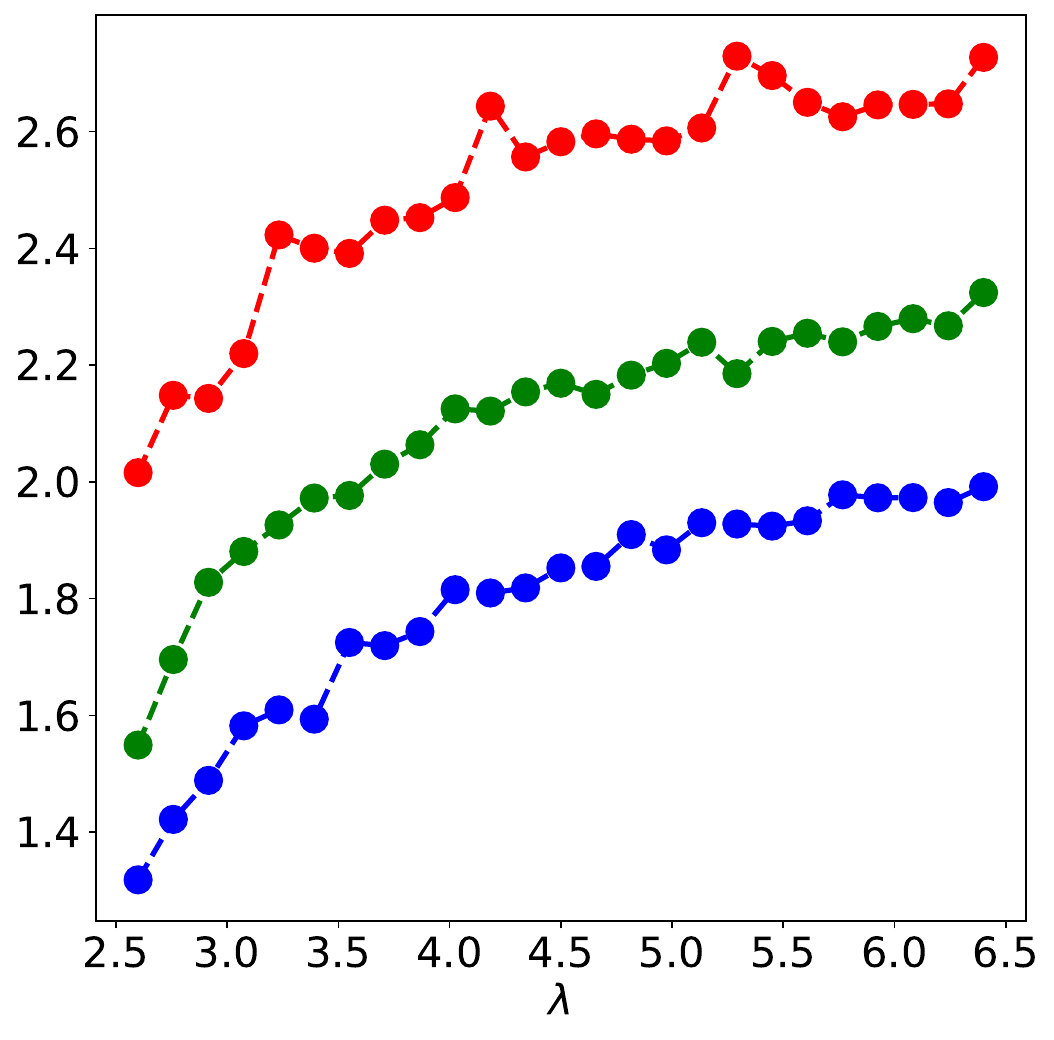}}
\vspace{-3.5mm}

\subfigure[Mouse Visual Cortex]{\label{fig: 5(c)}
\includegraphics[scale=0.21]{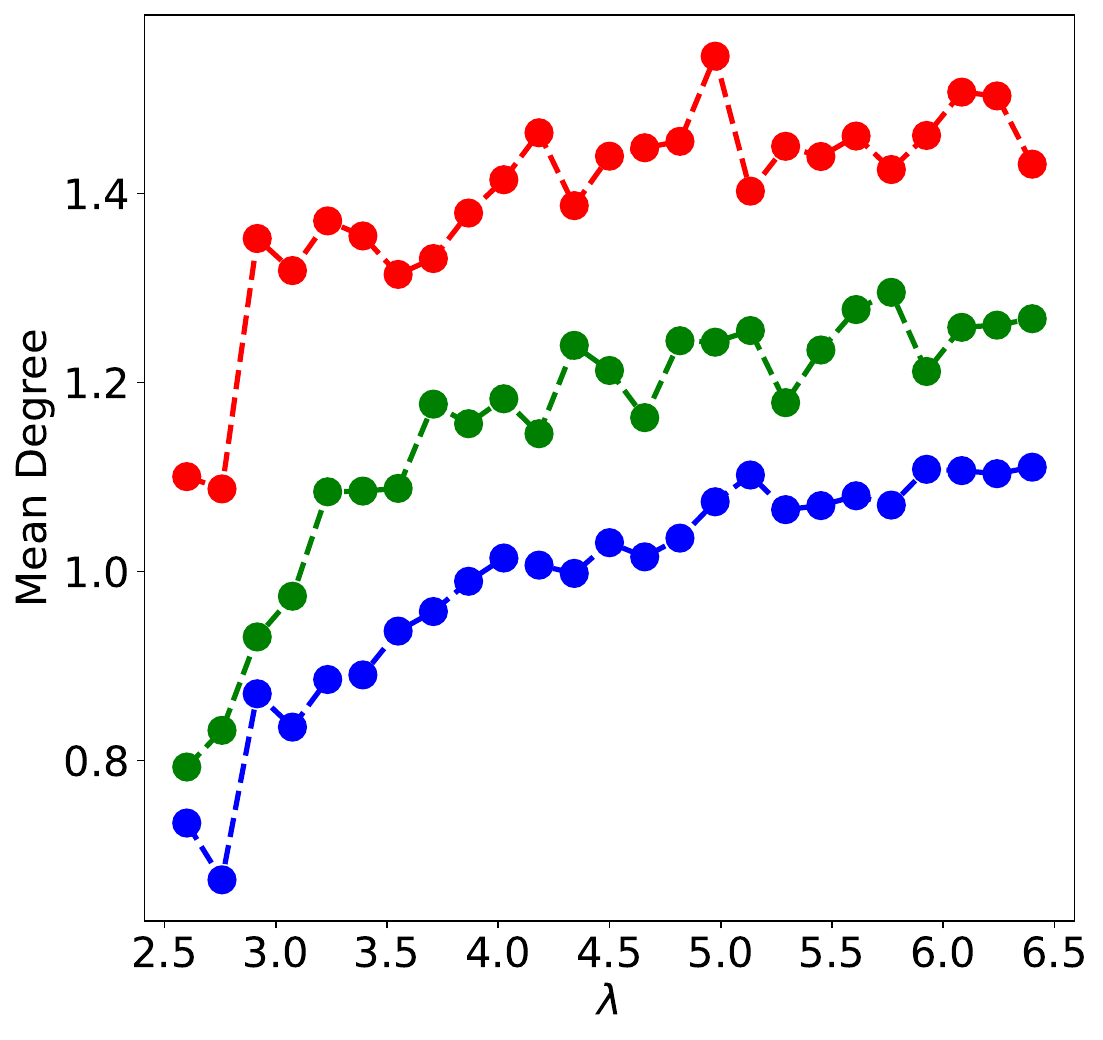}}
\subfigure[Infect Dublin]{\label{fig: 5(d)}
\includegraphics[scale=0.21]{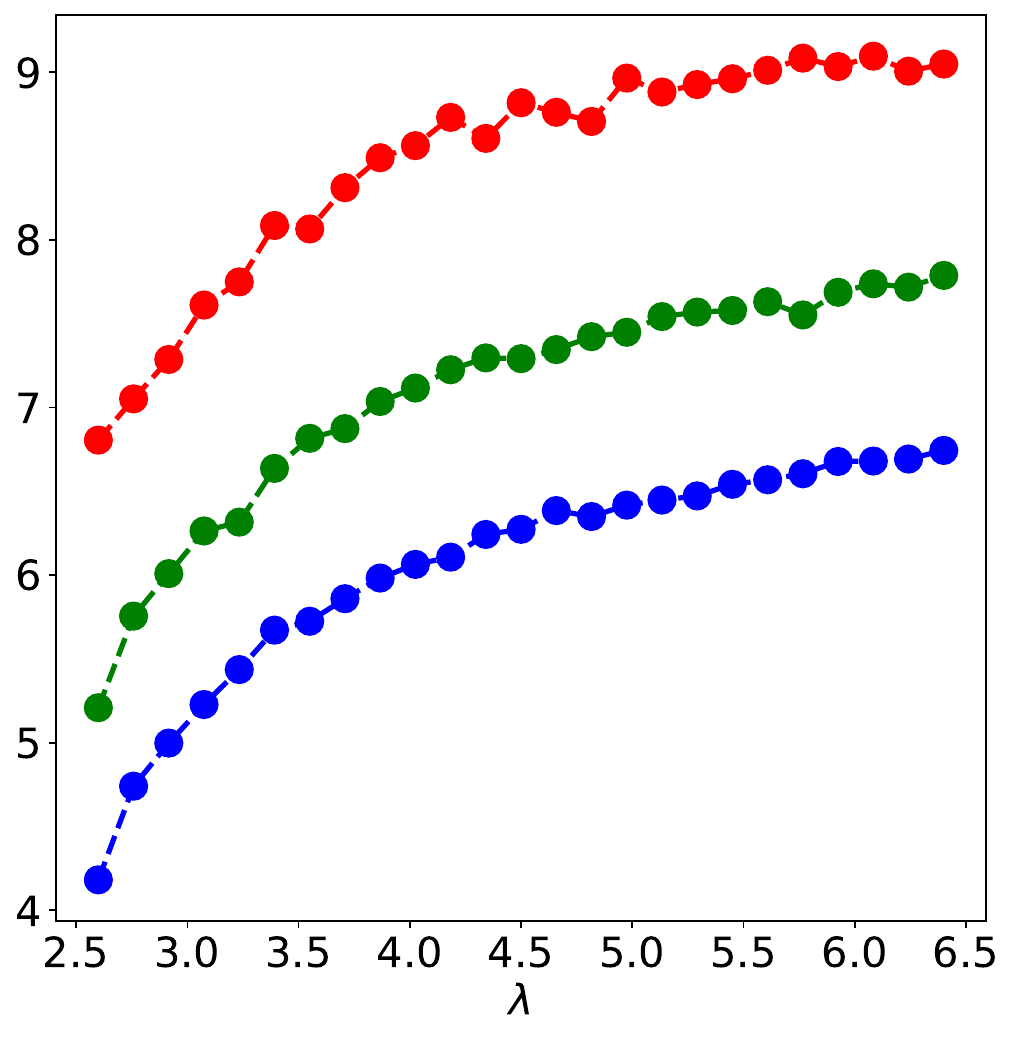}}
\vspace{-3.5mm}
\caption{\textbf{Mean degree as the function of $\lambda$. }Mean degrees of activated subgraphs have positive correlations to $\lambda$. (a) WSN, $N=500$, $k=4$. (b) BAN, $N=500$, $k=4$. (c) Mouse visual cortex network, $N=193$. (d) Infect Dublin, infectious contact network, $N=410$. We set $\lambda\in[2.60,6.40]$ and $\mu\in\{2.60, 3.50, 6.40\}$ for cross simulations. We collect the mean degrees of activated subgraphs in the time interval $t\in[50,150]$ and take the average for each data point. The results for $\mu=2.60, 3.50, 6.40$ are shown in red, green, and blue curves respectively.  (Color online)}\label{fig: 5}
\end{figure}

The degree distribution results are shown to remain stable over time. Figs \ref{fig: 4(a)} and \ref{fig: 4(b)} present the results based on WSNs using double-linear axes. In a WSN, the degree distribution is homogeneous with an expected value of $k$. Under the proposed power-law activation patterns for vertices, the degree distributions of the activated subgraph remain homogeneous but exhibit a smaller expectation. This phenomenon aligns with the theoretical results in Corollary \ref{corollary: 3}. Figs \ref{fig: 4(c)} and \ref{fig: 4(d)} display the degree frequencies of activated subgraphs for BANs using double-logarithmic axes to facilitate comparison with linear axes. The degree distributions of BANs appear as a straight line in double-logarithmic coordinates, indicating the heavy-tailed nature of these networks. However, the vertex power-law activation patterns reduce this heavy-tail property in the activated subgraph, resulting in a more homogeneous structure. Specifically, the degree distributions show an increase in low-degree vertices, while the heavy tail persists for high-degree vertices. Moreover, the probability of encountering a low-degree vertex is higher when $k$ is smaller. A plausible explanation for the reduced heterogeneity in BANs is that high-degree vertices in the underlying network tend to have many quiescent neighbors, which decreases their degree in the activated subgraph $\mathcal{G}_1(t)$. Additionally, since the minimum degree in a BAN corresponds to the new connection number $m$, several vertices with degree $m$ have independently and identically distributed activated neighbors, as indicated by Corollary \ref{corollary: 3}.

\begin{table}[h]
\centering
\caption{\textbf{The skewness and kurtosis of degree distributions. }}\label{tab: 2}
\vspace{-3mm}
\begin{tabular}{c|ccc|ccc}
\toprule[2pt]
\midrule
 & \multicolumn{3}{c|}{Skewness}                                                                        & \multicolumn{3}{c}{Kurtosis}                                                                        \\ \midrule
$k$     & \multicolumn{1}{c|}{$8$} & \multicolumn{1}{c|}{$16$} & $24$ & \multicolumn{1}{c|}{$8$} & \multicolumn{1}{c|}{$16$} & $24$ \\ \midrule
WSN, $\mu=2.60$                      & \multicolumn{1}{c}{0.237}                     & \multicolumn{1}{c}{0.152}                    & 0.097                     & \multicolumn{1}{c}{-0.008}                     & \multicolumn{1}{c}{0.032}                    & 0.005                    \\ 
WSN, $\mu=3.70$                     & \multicolumn{1}{c}{0.248}                     & \multicolumn{1}{c}{0.179}                    & 0.168                     & \multicolumn{1}{c}{0.005}                     & \multicolumn{1}{c}{-0.011}                    & -0.071                     \\ 
BAN, $\mu=2.60$                     & \multicolumn{1}{c}{6.899}                     & \multicolumn{1}{c}{5.768}                    & 4.876                     & \multicolumn{1}{c}{71.449}                     & \multicolumn{1}{c}{40.848}                    & 36.489                     \\
BAN, $\mu=3.70$                     & \multicolumn{1}{c}{5.952}                     & \multicolumn{1}{c}{5.010}                    & 4.193                     & \multicolumn{1}{c}{59.837}                     & \multicolumn{1}{c}{42.040}                    & 32.390                     \\
\midrule
\bottomrule[2pt]
\end{tabular}
\vspace{-4mm}
\end{table}
To further study the characteristics of the degree distributions, below we present the statistics of the degree distributions in Tab. \ref{tab: 2}. The skewness is the third-order normalized moment that describes the deviation direction and degree. The kurtosis is employed to describe the peak height at the average value of the curve, where $3.0$ is subtracted from the result to get $0.0$ for a normal distribution. As shown in the results for WSNs, the skewness and kurtosis are all close to zero, indicating a symmetric degree distribution and moderate kurtosis. Based on this, we conclude that the power-law activating patterns of vertices maintain the homogeneous property of the WSNs in the activated subgraph. In addition, the skewness of the activated subgraph degree distribution in WSNs decreases as the increase of $k$ and shows little right deviation. However, for BANs, the skewness and kurtosis are all greater than zero. The skewness decreases as $k$ increases and shows a stronger right deviation than the WSNs. Results in Fig. \ref{fig: 4} present that the activated subgraph of a BAN with power-law activating patterns shows a peak in degree distributions. Combined with the obtained kurtosis, the degree distributions are high and pointed. Accordingly, we conclude that the heterogeneity is split in BANs with the power-law activating patterns. 

The power-law activating patterns can affect the number of activated vertices around a single node, changing the mean degree of the activated subgraph. In Fig. \ref{fig: 5}, we present the mean degrees of activated subgraphs for four different networks, including two real network data sets.  Here, we discuss $\lambda\in[2.60, 6.40]$ and $\mu\in\{2.60, 3.50, 6.40\}$ for cross simulations. We note that the mean degrees of the activated subgraphs increase as the activation rates $\lambda$s grow. Additionally, a small quiescent rate $\mu$ ensures the large mean degree of the activated subgraph. This shows similar properties compared to the conclusion on the activated network sizes. The used real network data sets in Figs. \ref{fig: 5(c)} and \ref{fig: 5(d)} are from \cite{infect,bigbrain,nr-aaai15}. 

\subsection{Network Resilience and Robustness with Power-law Activation}
\label{Sec: III(D)}
\begin{figure}                                              
\centering
\subfigure[WSN]{\label{fig: 6(a)}
\includegraphics[scale=0.21]{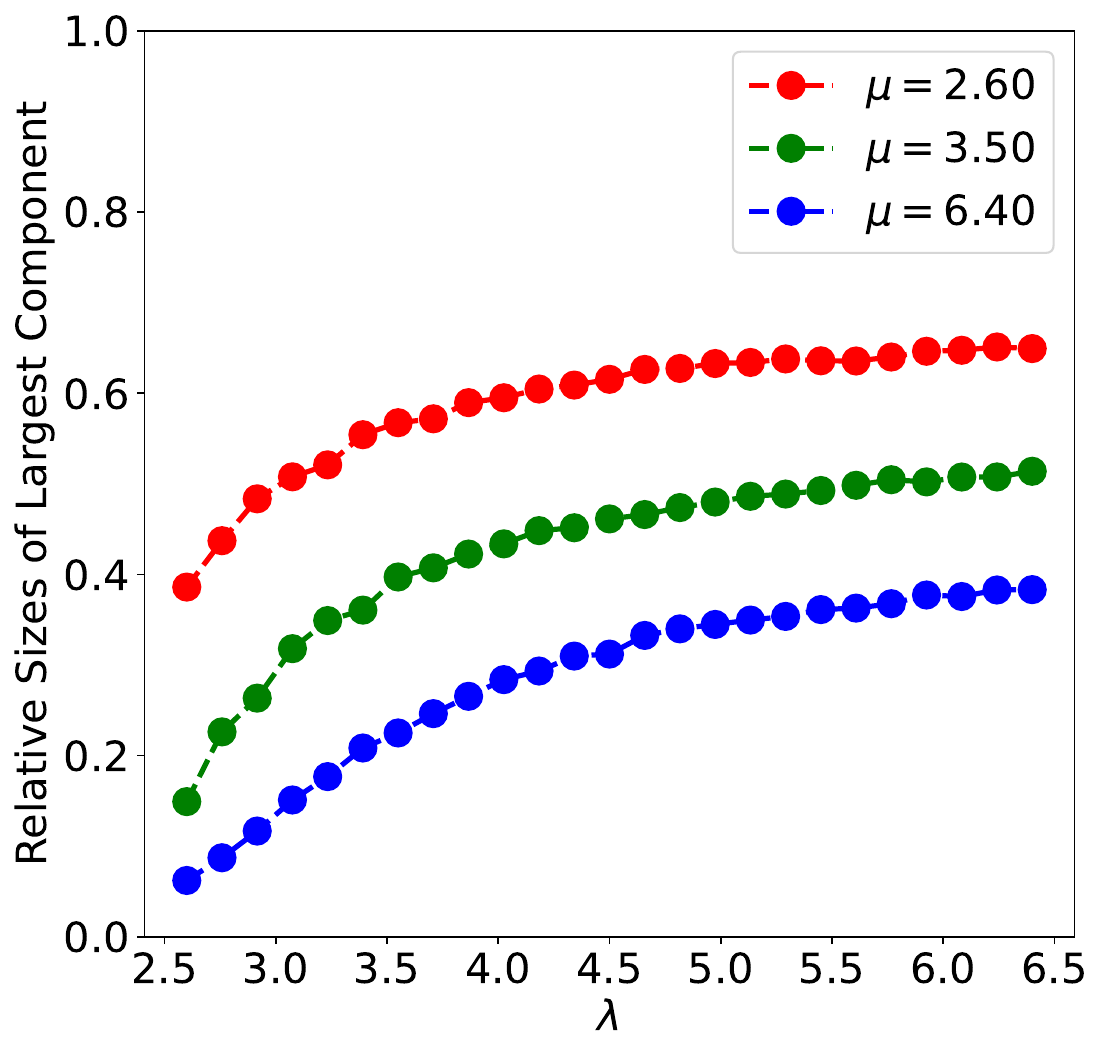}}
\subfigure[BAN]{\label{fig: 6(b)}
\includegraphics[scale=0.21]{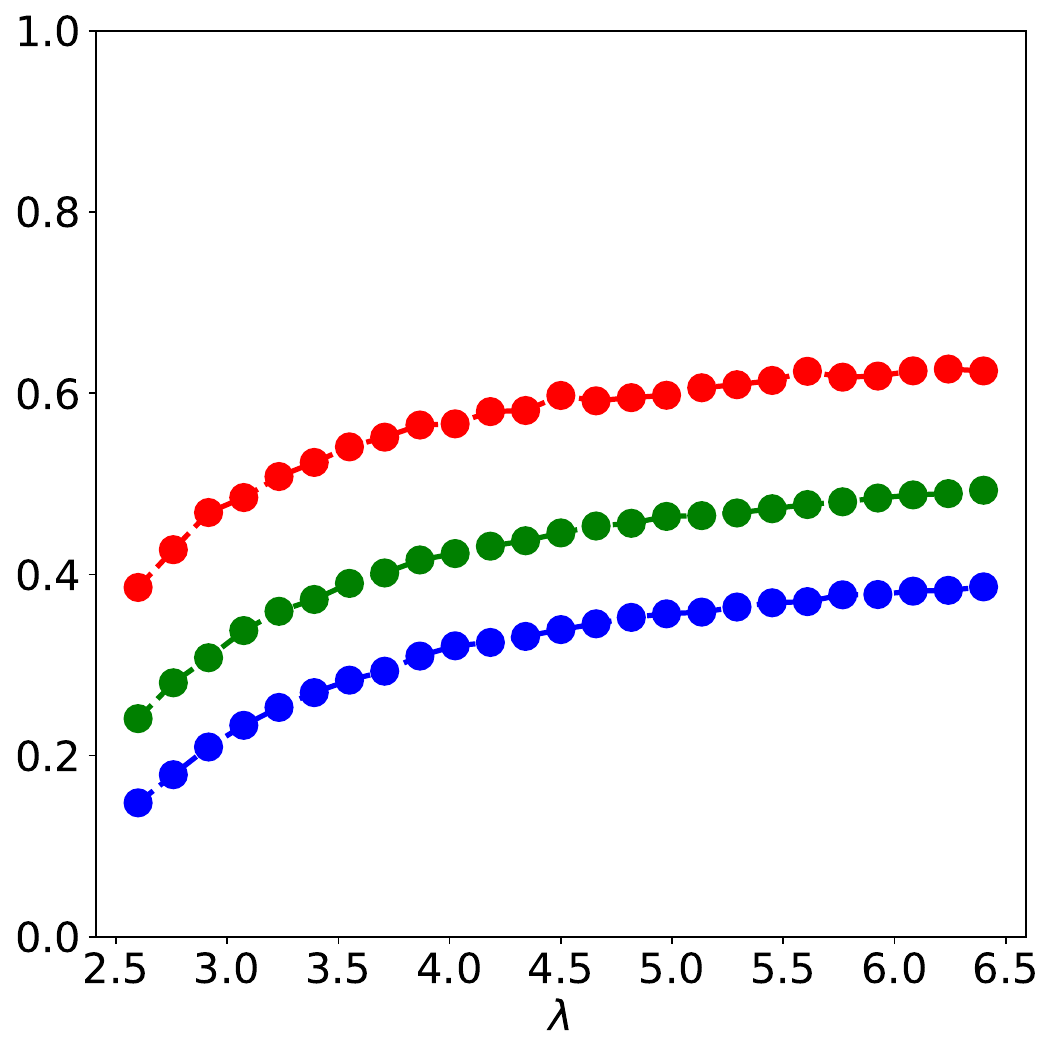}}
\vspace{-3.5mm}

\subfigure[Mouse Visual Cortex]{\label{fig: 6(c)}
\includegraphics[scale=0.21]{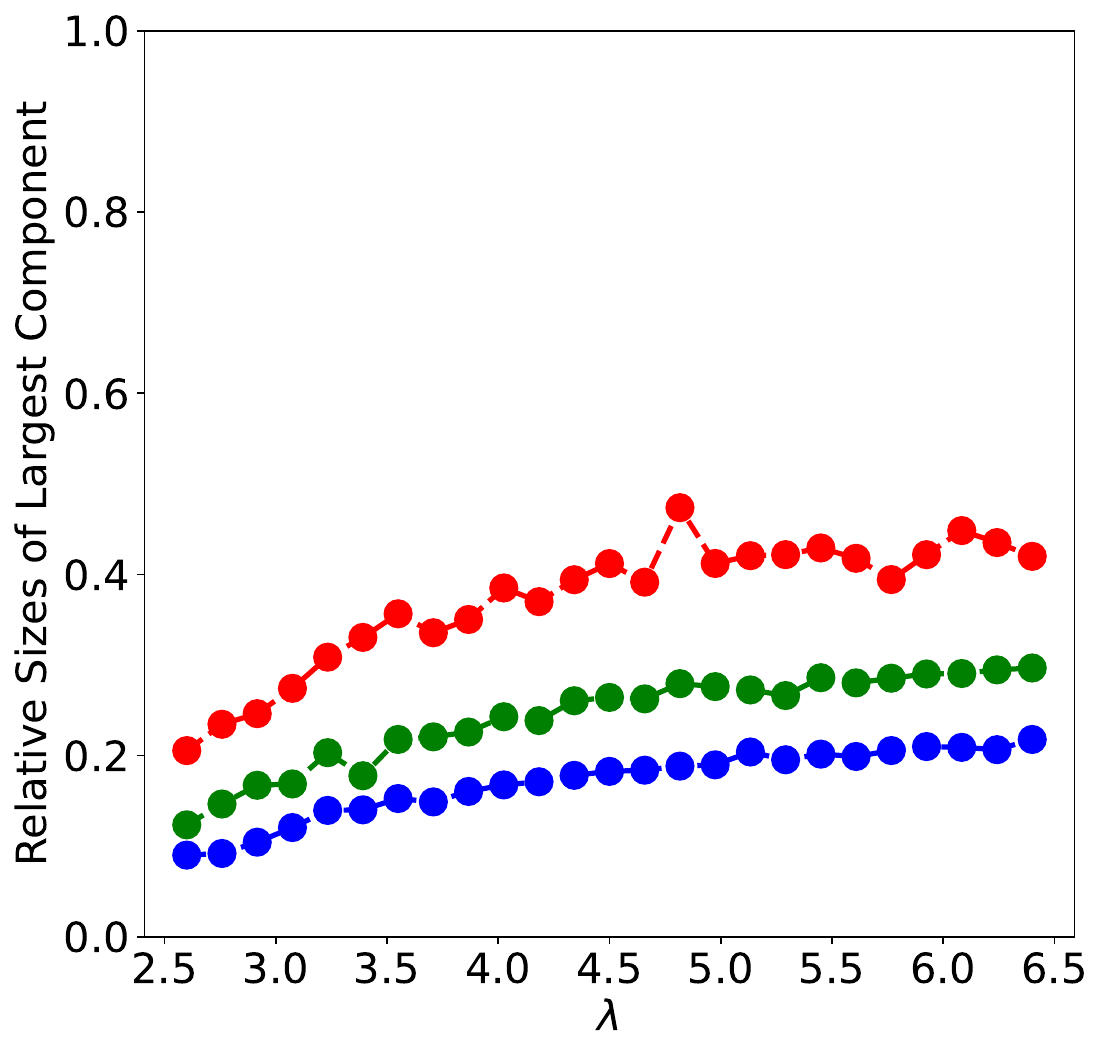}}
\subfigure[Infect Dublin]{\label{fig: 6(d)}
\includegraphics[scale=0.21]{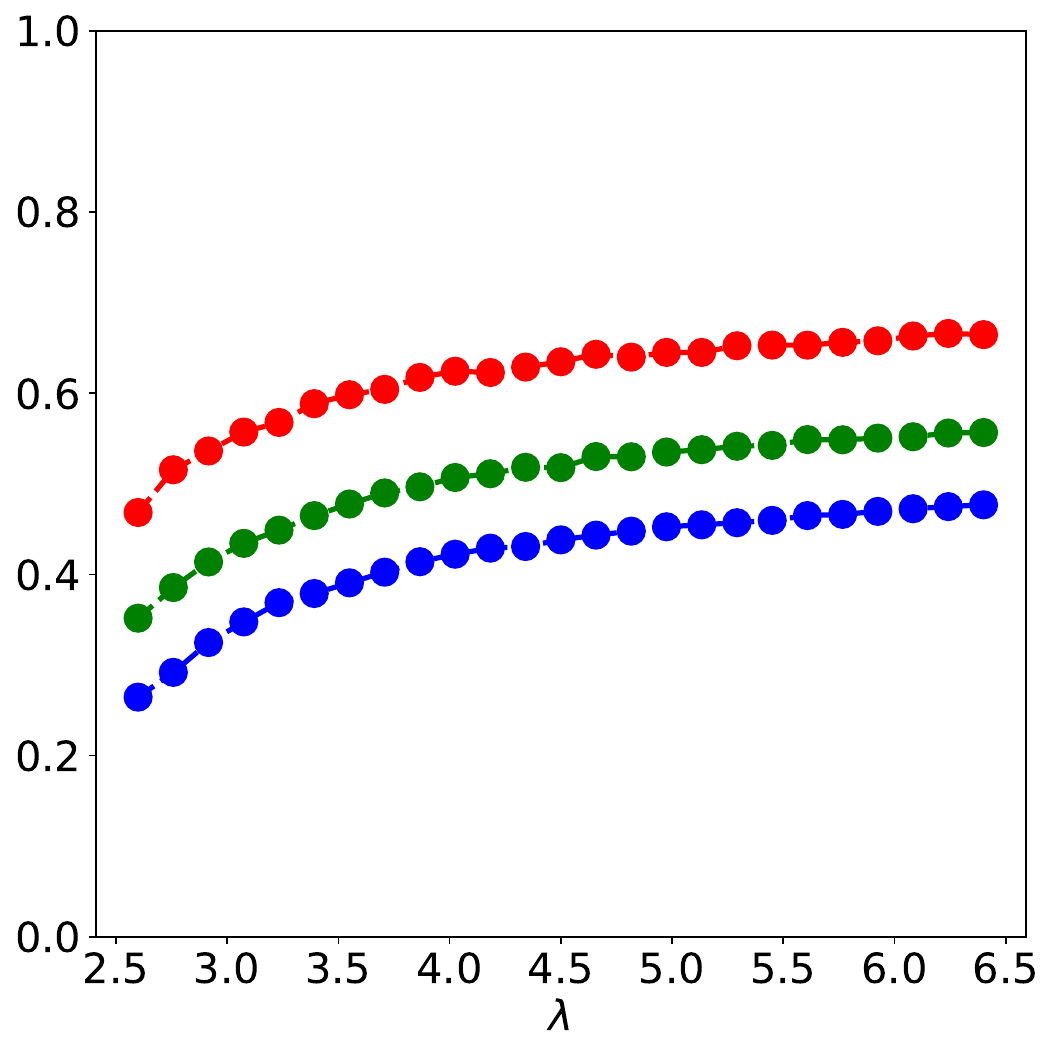}}
\vspace{-3.5mm}
\caption{\textbf{Relative sizes of the largest component.}  (a) WSN, $N=500$, $k=4$. (b) BAN, $N=500$, $k=4$. (c) Mouse visual cortex network, $N=193$. (d) Infect Dublin, infectious contact network, $N=410$. We set $\lambda\in[2.60,6.40]$ and $\mu\in\{2.60, 3.50, 6.40\}$, collect the relative sizes of the largest component in the time interval $t\in[50,150]$ and take the average for each data point. The relative size is the ratio of vertices in the largest component of the underlying network. (Color online)}\label{fig: 6}
\end{figure}
From Proposition \ref{proposition: 1}, an activated vertex is possible to be isolated. Additionally, the bridges in the network can also break, forming several components spontaneously due to the intermittent activation. Therefore, we can directly analyze the network resilience and robustness caused by the power-law activation patterns. In Fig. \ref{fig: 6}, we show the relative sizes of the largest component as functions of the activation rate $\lambda$. This quantity is calculated by the ratio of the vertices in the largest component of the activated subgraph. Due to the temporal links in the activated subgraph, we take the average of this quantity against time. Our results suggest that the relative size of the largest component grows as the increase of $\lambda$s, which has a positive correlation with the number of activated vertices suggested in Corollary \ref{corollary: 2}. 

From Figs. \ref{fig: 6(a)} and \ref{fig: 6(b)}, the relative sizes of WSNs are usually higher than ones in BANs. In the WSN, most vertices have degrees that are close to the expectation. Therefore, the probability for one single vertex to be isolated, as well as breaking the bridge in the network, is relatively lower. However, in the BAN, most vertices have minimal degrees, and hub vertices are connected with many neighbors with low clustering coefficients. Once the hub vertices are not activated, the center of the network breaks down and leaves several connected components. Accordingly, WSNs are likely to maintain more vertices in the largest components than BANs. The same is true in Figs. \ref{fig: 6(c)} and \ref{fig: 6(d)}. The mouse visual cortex network has more hub vertices and is more heterogeneous, leading to relatively weak robustness against the power-law activation patterns. 

\subsection{The Absorptivity of Evolutionary Dynamics}\label{Sec: III(E)}
\begin{figure}                                              
\centering
\subfigure[RRG, $C$ invade]{\label{fig: 7(a)}
\includegraphics[scale=0.21]{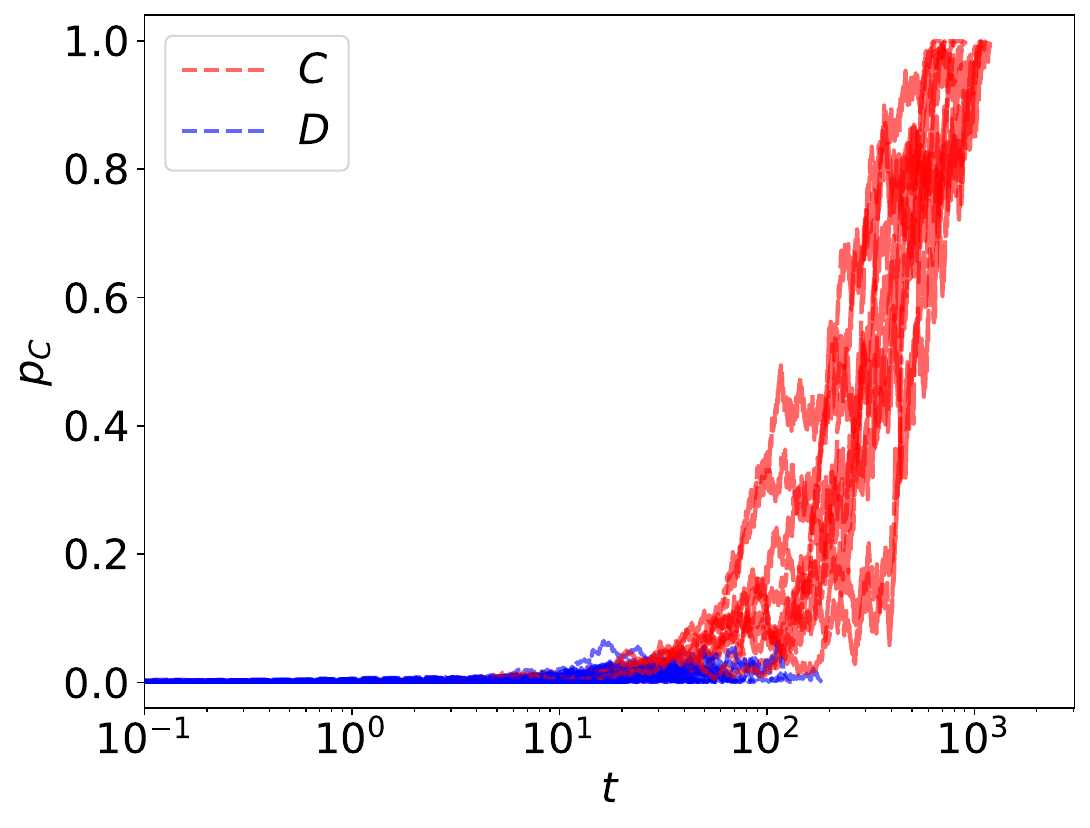}}
\subfigure[RRG, $D$ invade]{\label{fig: 7(b)}
\includegraphics[scale=0.21]{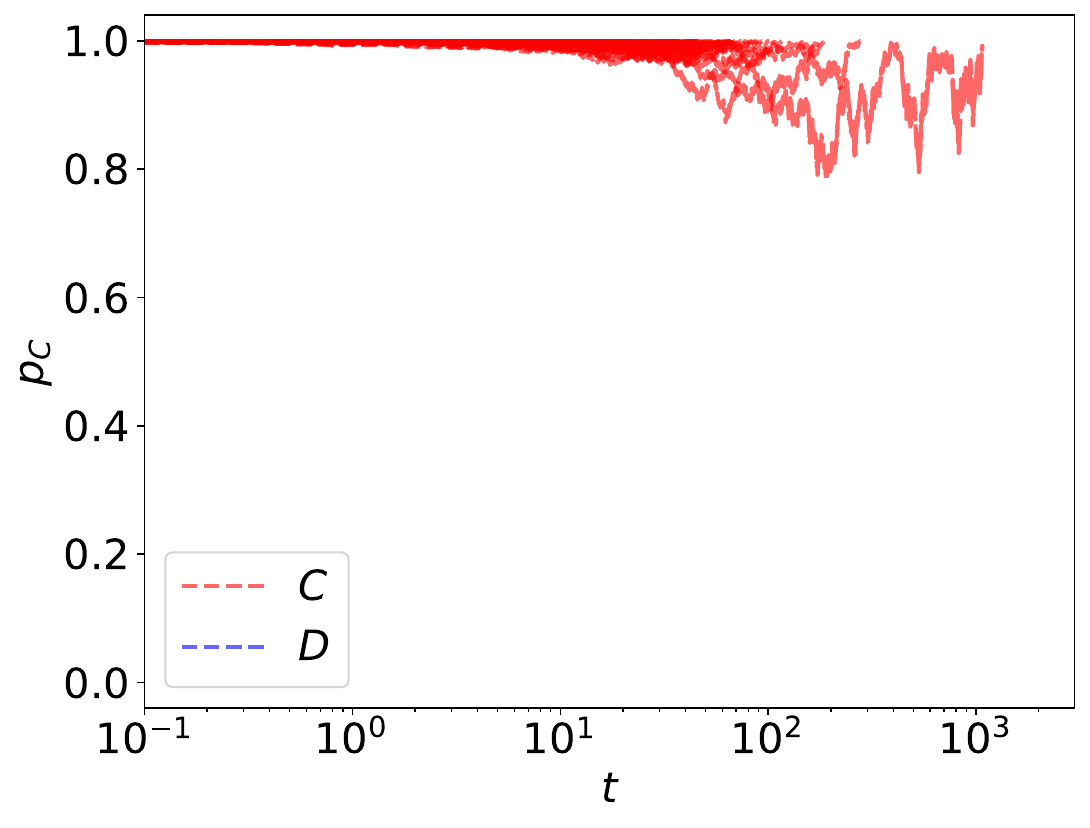}}
\vspace{-3.5mm}

\subfigure[WSN, $C$ invade]{\label{fig: 7(c)}
\includegraphics[scale=0.21]{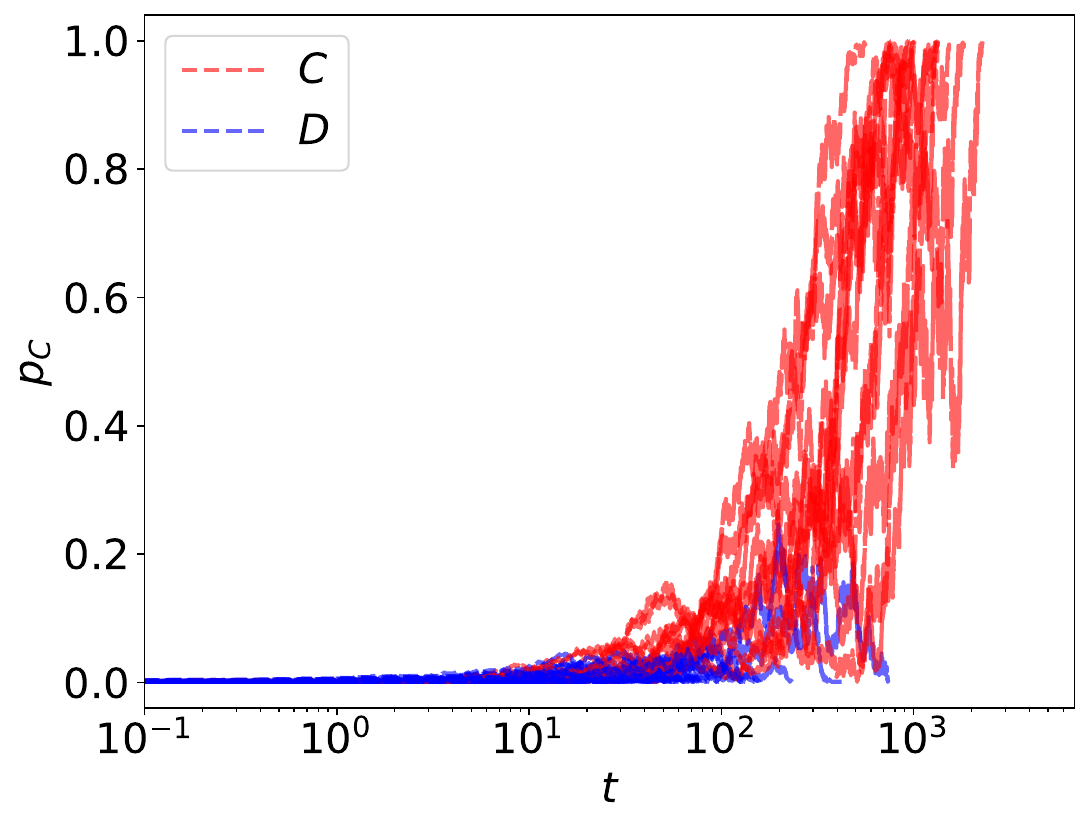}}
\subfigure[WSN, $D$ invade]{\label{fig: 7(d)}
\includegraphics[scale=0.21]{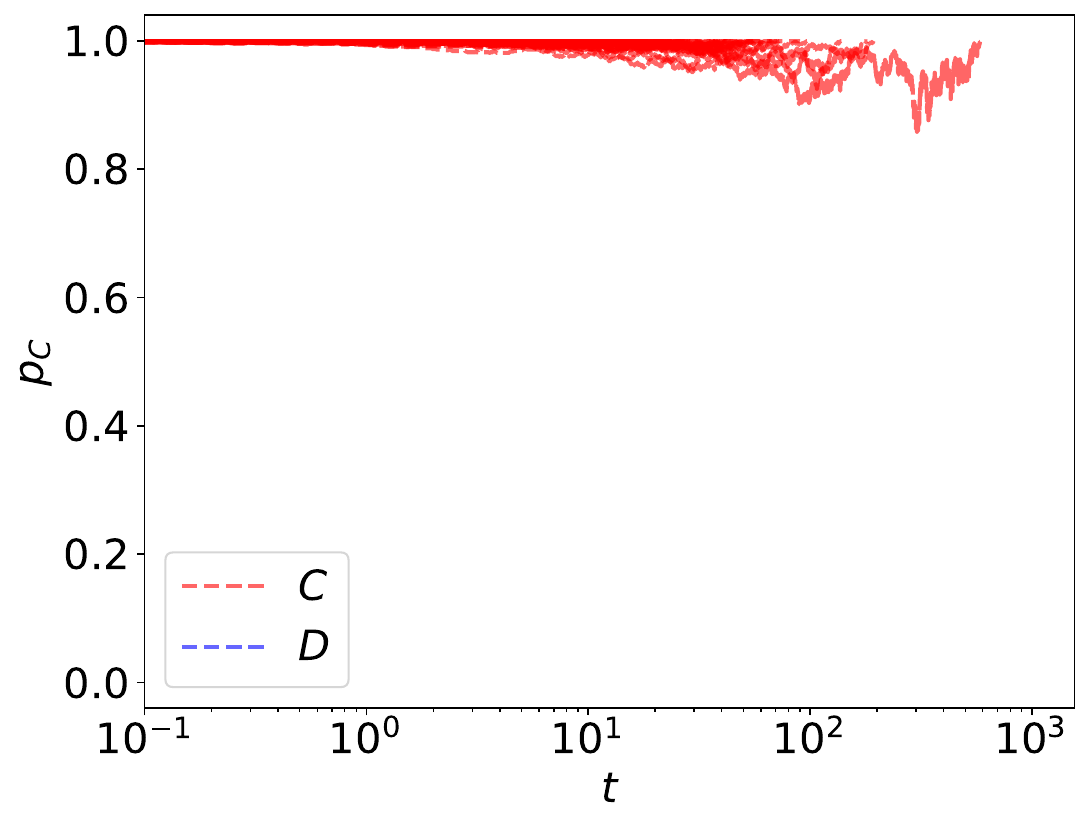}}
\vspace{-3.5mm}
\caption{\textbf{Cooperation density as the function of time. } The strategies of evolutionary dynamics evolve into pure absorb states without mutation. (a) RRG, cooperator invades in pure defection. (b) RRG, defector invades in pure cooperation. (c) WSN, cooperator invades in pure defection. (d) WSN, defector invades in pure cooperation. For parameters, we set $b=12$, $k=8$, $c=1$ for both RRG and WSN, and reconnection probability $0.4$ for WSN. The power-law activating patterns are with $\lambda=3.50$ and $\mu=3.70$. We plot the evolutionary trajectory of the cooperation frequency to demonstrate the absorptivity of the evolution process for $500$ times in each figure. We stop plotting until $p_C=1$ or $p_C=0$. Red and blue curves present the trajectory of the fixation of cooperation and defection respectively. (Color online)}\label{fig: 7}
\end{figure}
Before we further explore the fixation probability, we perform some repeated and independent experiments to show the absorptivity of the evolution processes, which guarantees that the strategy is absorbed instead of oscillating. We employ the RRG and WSN with $k=8$ and perform the simulation for cooperator invasion and defector invasion respectively. Results in Fig. \ref{fig: 7} show that the evolution processes converge to one of the pure cooperation and pure defection states. In this parameter setting, the cooperators are more likely to occupy the entire network than the defectors. Additionally, the absorption states are reached around $t=10^3$. 

\subsection{Evolution of Cooperation on Homogeneous Networks}\label{Sec: III(F)}
We employ the prisoner's dilemma game as stated in Sec. \ref{sec: II(B)} ($R=b-c$, $S=-c$, $T=b$, and $P=0$ with fixed $c=1$ and $b\in(1, 20]$) and consider network structures ($\mathcal{G}$) as RRG and WSN. For a better presentation, we denote present $\rho_C-\rho_D$ in the following experiments. Regarding other parameters, we fix $\lambda=3.50$ and $\mu=2.60$. Here we note that these sizes are sufficient for the study of fixation probability as \cite{ohtsuki2006simple}, \cite{su2019evolutionary}, and \cite{allen2017evolutionary} suggest. We set $k=[4, 8, 12, 16]$ and $N\in[100, 1000]$ for cross experiments. To approximate the fixation probability of cooperation, we randomly select one individual in the pure defective network as the invading cooperator for each group of parameters and run the evolutionary dynamics repeatedly and independently over $2\times10^3$ and $5\times10^4$ times for $N=1000$ and $100$ respectively. The frequency that cooperators take root in the population in the total experiment times is regarded as the fixation probability of the cooperation $\rho_C$. For the fixation probability of defection, we randomly select one individual as the invading defector in the pure cooperative population and perform similarly to approximate $\rho_D$. We have mentioned that $\rho_C>\rho_D$ is the condition for cooperation in a networked population. Therefore, we present $\rho_C-\rho_D$ as functions of $b$ to show the results in Fig. \ref{fig: 8}. 

As shown in Fig .\ref{fig: 8} and suggested in Proposition \ref{proposition: 5}, the power-law activating patterns can influence conditions for cooperation. This conclusion is accurate if the mean degree is small (or $N\gg k$) as shown in Fig. \ref{fig: 8} for each group of parameters. Concretely, the condition for cooperation is sound (sufficient and necessary) when $k=4$ and $8$ in the simulation. However, if we increase the degree to $k=12$ and $16$, this condition in Eq. \ref{eq: condition for cooperation} becomes sufficient but not necessary. The obtained data points and fitting lines are often already greater than zero if $(b/c)$ is greater than the critical value. Here, we note that errors exist caused by the small bias of simulations. With the increase of degree $k$, the power-law activating patterns increase the relaxation of critical cooperation conditions. Therefore, social systems with large average connections do not need the cost-to-benefit ratio equivalent to Eq. \ref{eq: condition for cooperation} to overcome social dilemmas with the power-law activating patterns.

In Fig. \ref{fig: 9}, we further consider mutation-selection dynamics in RRGs and WSNs. With mutation, the network does not reach the absorbing state, and cooperation is favored if the expected number of cooperators is higher than defectors, i.e., the frequency of cooperators is higher than $0.5$. As shown in Figs. \ref{fig: 9(a)} and \ref{fig: 9(b)}, we find that RRGs guarantee more cooperators in the population than WSNs, especially when the degree is relatively large. For example, if $k=4$, the conditions for cooperation in RRG and WSN are both around $b/c=5$. However, if $k=8$ and $12$, the critical conditions for cooperation in RRGs that we obtain by simulation are smaller than those in WSNs. Additionally, the exact solution of evolutionary dynamics with mutation is still an open issue. 
\begin{figure}                                              
\centering
\subfigure[RRG, $N=1000$]{\label{fig: 8(a)}
\includegraphics[scale=0.21]{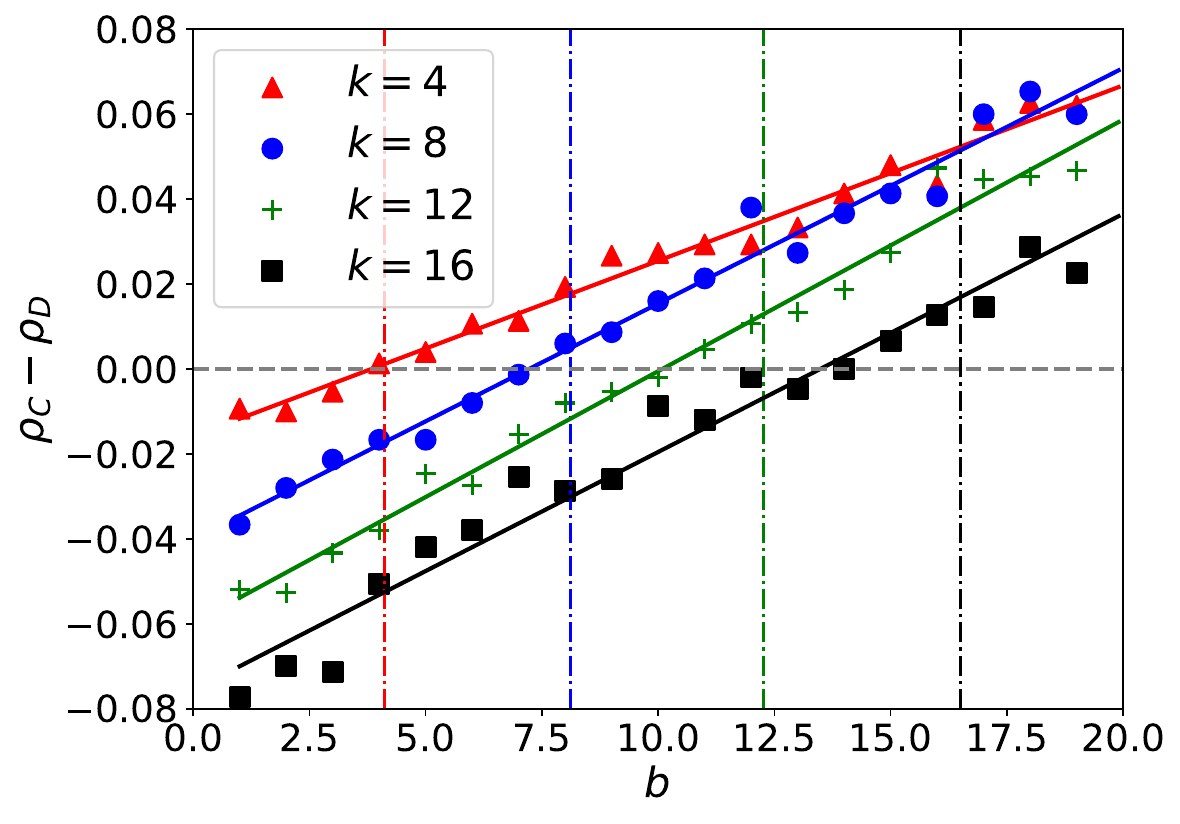}}
\subfigure[WSN, $N=1000$]{\label{fig: 8(b)}
\includegraphics[scale=0.21]{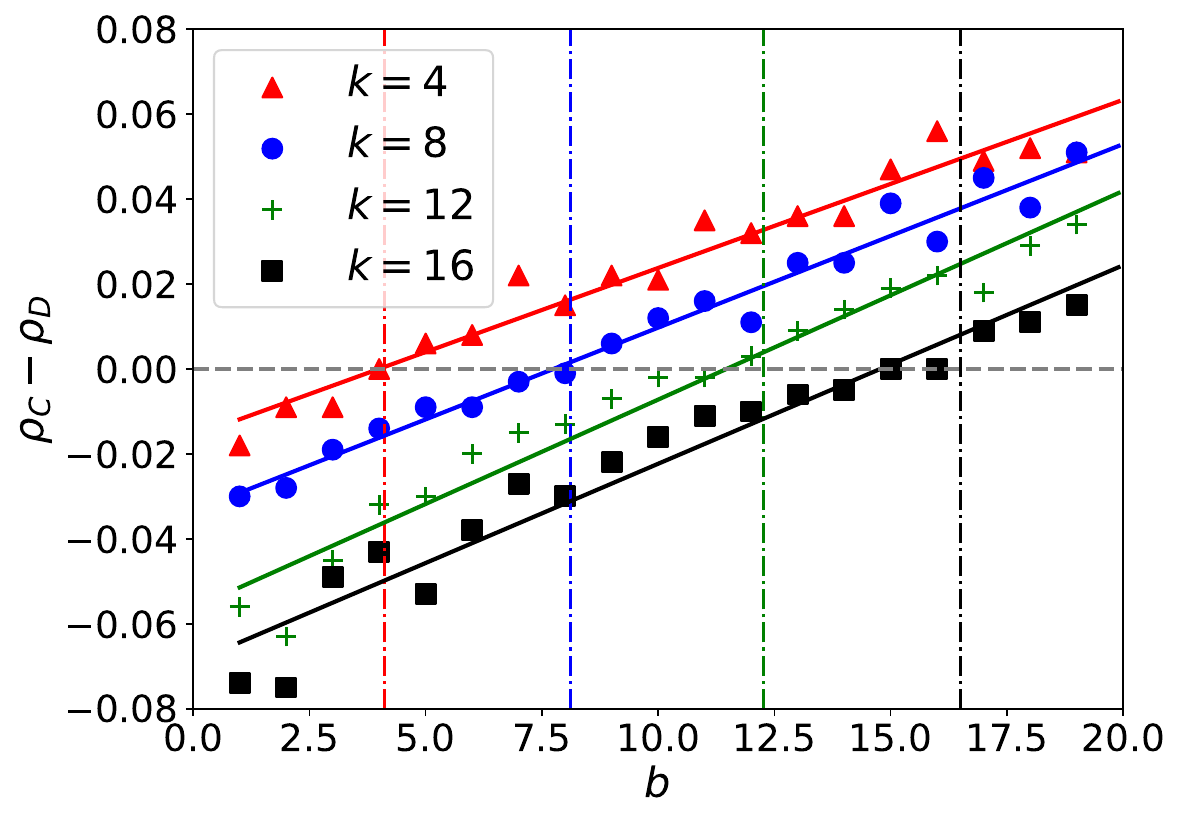}}
\vspace{-3.5mm}

\subfigure[RRG, $N=100$]{\label{fig: 8(c)}
\includegraphics[scale=0.21]{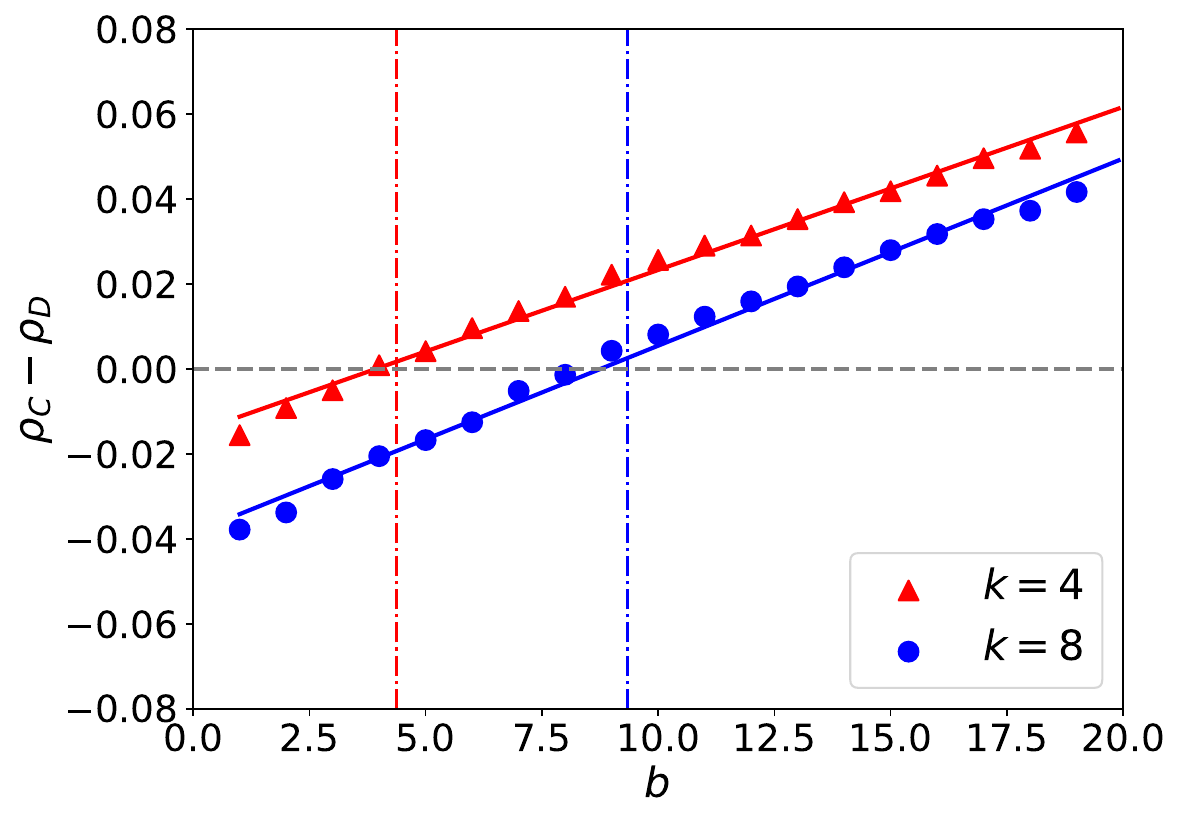}}
\subfigure[WSN, $N=100$]{\label{fig: 8(d)}
\includegraphics[scale=0.21]{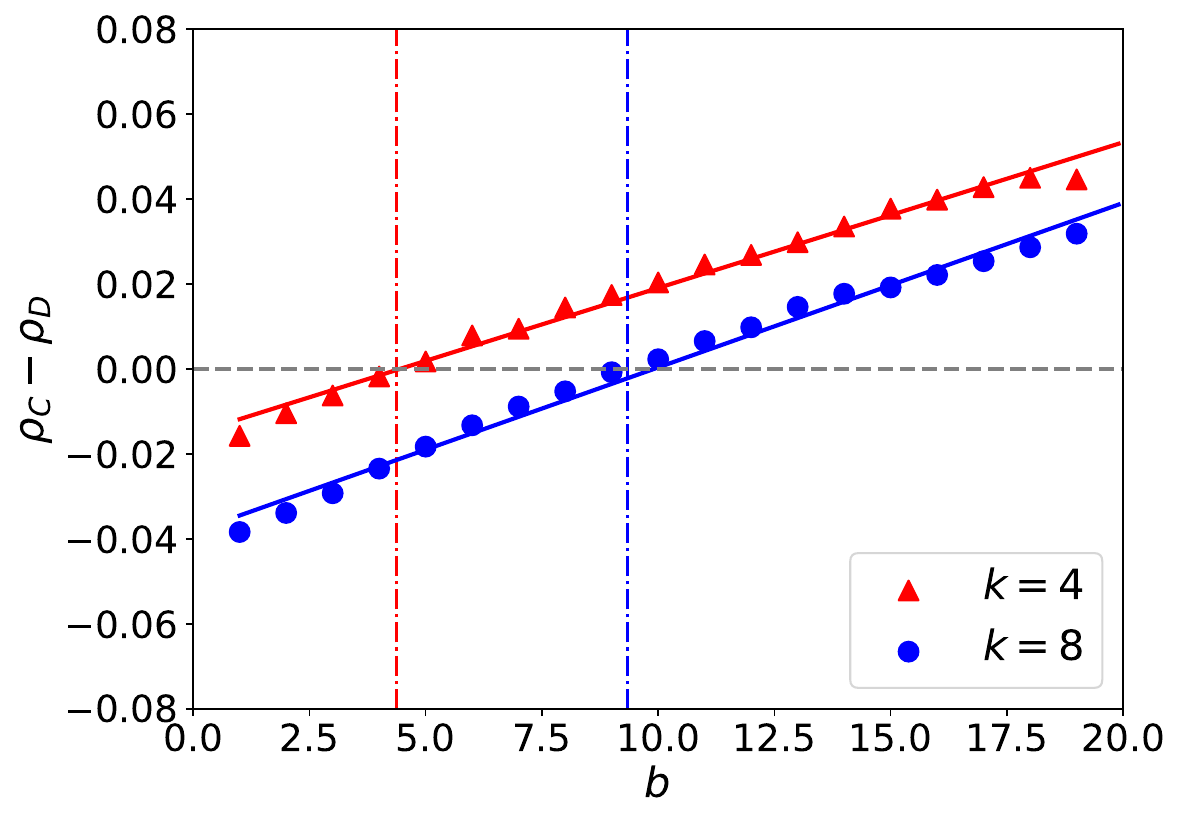}}
\vspace{-3.5mm}
\caption{\textbf{Fixation probabilities of the prisoner's dilemmas in homogeneous networks. } The theoretical condition for cooperation is sufficient but not necessary, especially for the large degree network. We fix $\lambda=3.50$, $\mu=2.60$, $c=1$. (a)-(b) show the results on RRGs and WSNs with $N=1000$ respectively, and (c)-(d) show the results for $N=100$. For variable parameters, we set $b\in(1, 20]$, and $k\in[4, 8, 12, 16]$. Red triangles, blue circles, green crosses, and black squares show the result for $k=4$, $8$, $12$, and $16$ respectively. Each data point is obtained by calculating the frequency of the final pure state in the evolution in the total $2\times 10^3$ and $5\times 10^4$ independent and repeated experiments for $N=1000$ and $N=100$ respectively. The solid lines in the same colors are the fitting lines to show the overall trends, which are obtained by linear regression. The dashed lines in the same colors present cooperation conditions for cooperation in Proposition \ref{proposition: 5}. The results show that our theorem holds for $N\gg k$ and becomes biased with the increase of degree. (Color online)}\label{fig: 8}
\end{figure}

\begin{figure}                                              
\centering
\subfigure[$RRG$]{\label{fig: 9(a)}
\includegraphics[scale=0.21]{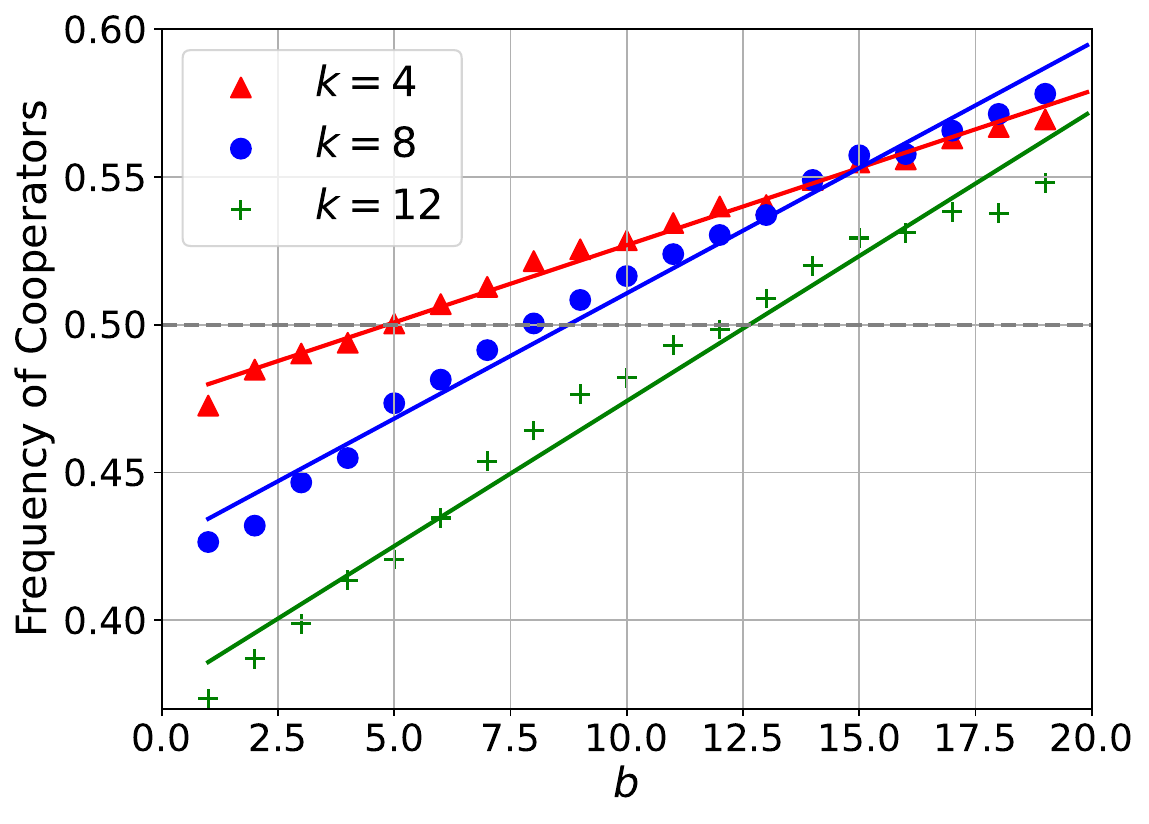}}
\subfigure[$WSN$]{\label{fig: 9(b)}
\includegraphics[scale=0.21]{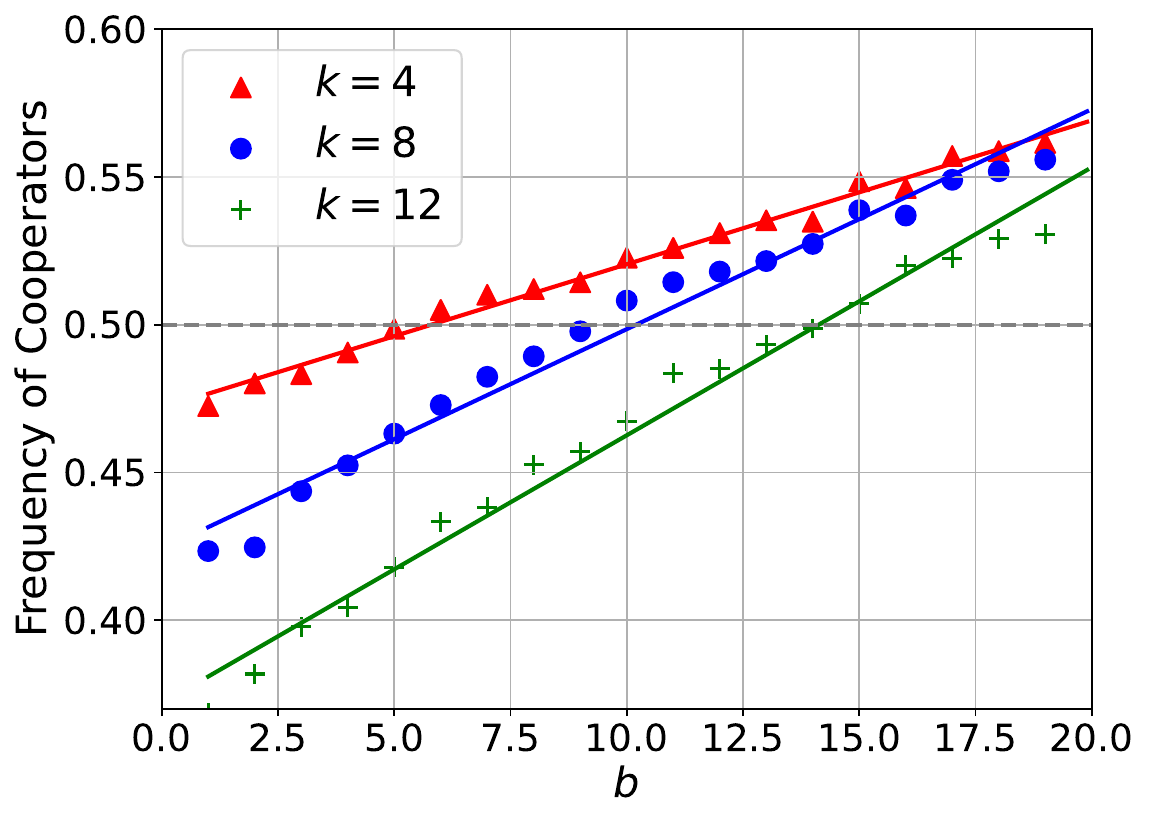}}
\vspace{-3.5mm}
\caption{\textbf{Stationary cooperation frequency of the prisoner's dilemmas in homogeneous networks with mutation. }RRGs promote cooperation with mutation greater than WSNs. In this figure, we set $N=1000$, $v=0.10$, $\lambda=3.50$ and $\mu=2.60$ in RRGs and WSNs. We fix $c=1$ for game parameters and observe the average cooperation frequency for $b\in(1, 20]$. Each data point is obtained by calculating the mean frequency in the total $10^4$ times after the stable state is reached. Red triangles, blue circles, and green crosses present the results for $k=4$, $8$, and $12$ respectively. The fitting lines are shown in the same colors, obtained by linear regression. (Color online)}\label{fig: 9}
\end{figure}

\subsection{Evolution of Cooperation on Real Networks}\label{Sec: III(G)}
\begin{figure}                                              
\centering
\subfigure[Dolphins]{\label{fig: 10(a)}
\includegraphics[scale=0.21]{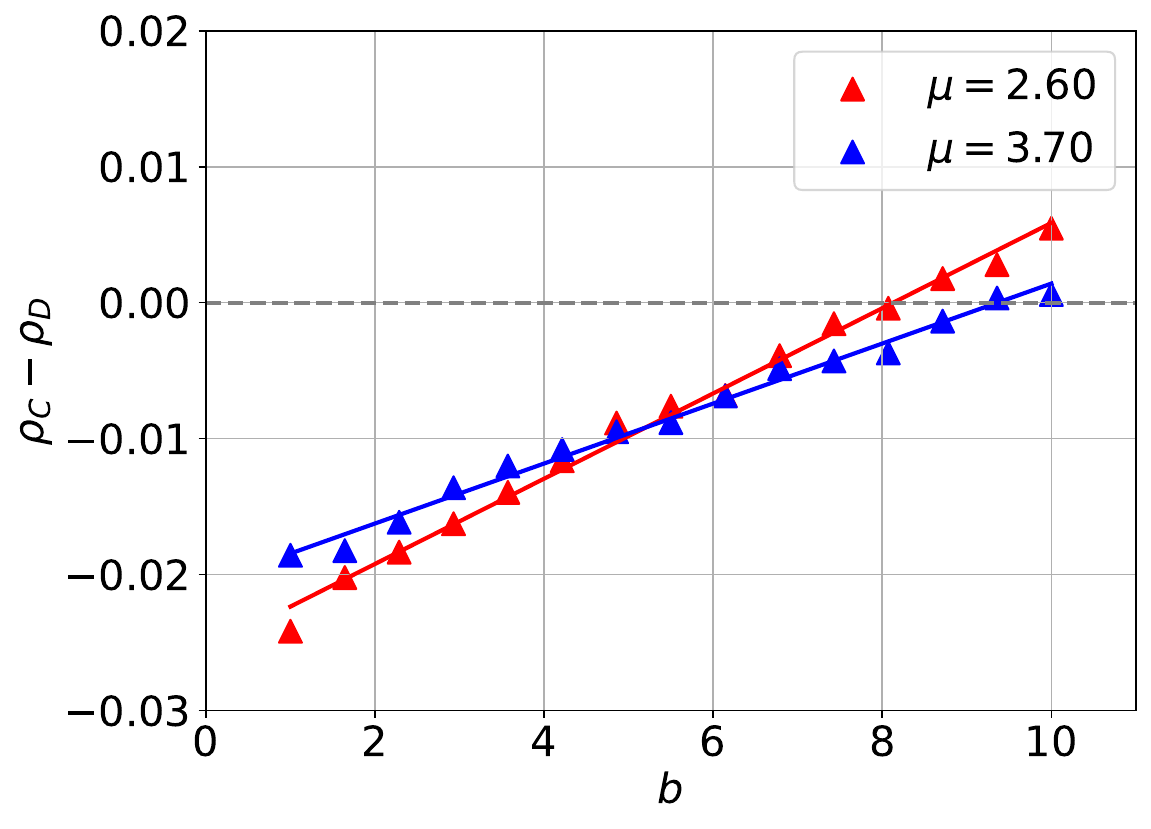}}
\subfigure[USA contiguous]{\label{fig: 10(b)}
\includegraphics[scale=0.21]{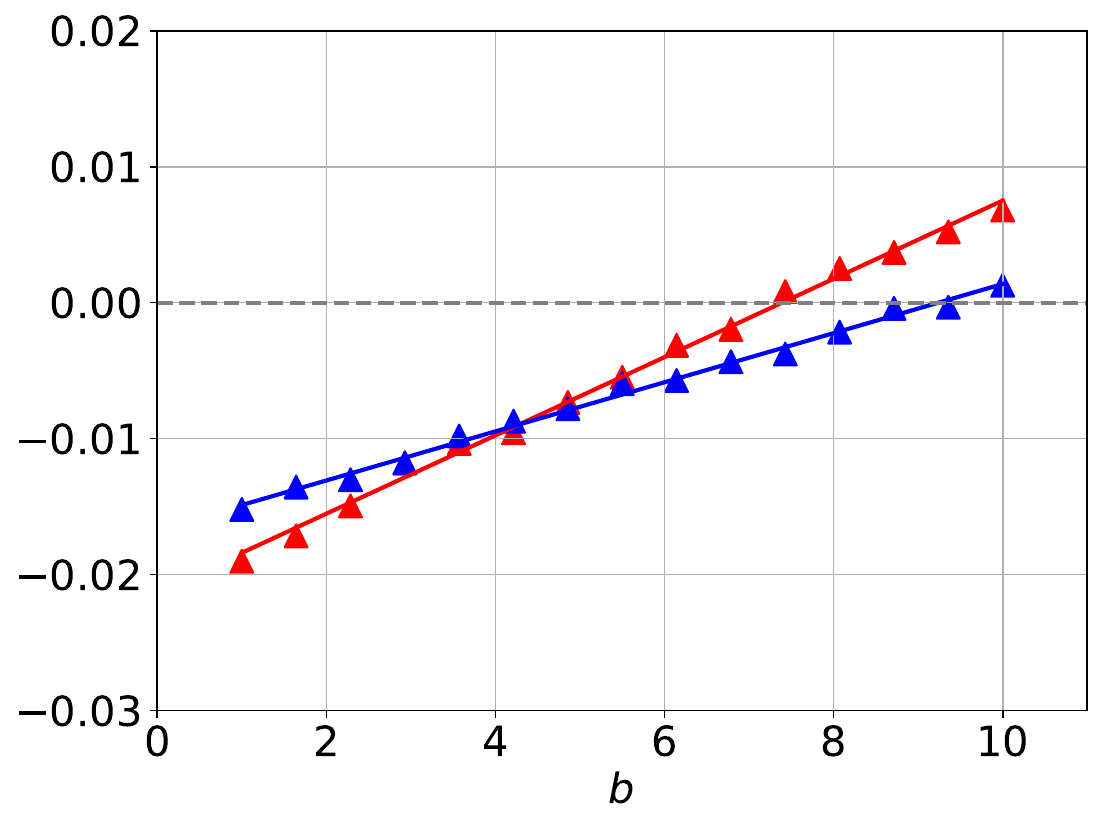}}
\vspace{-3.5mm}

\subfigure[Retweets]{\label{fig: 10(c)}
\includegraphics[scale=0.21]{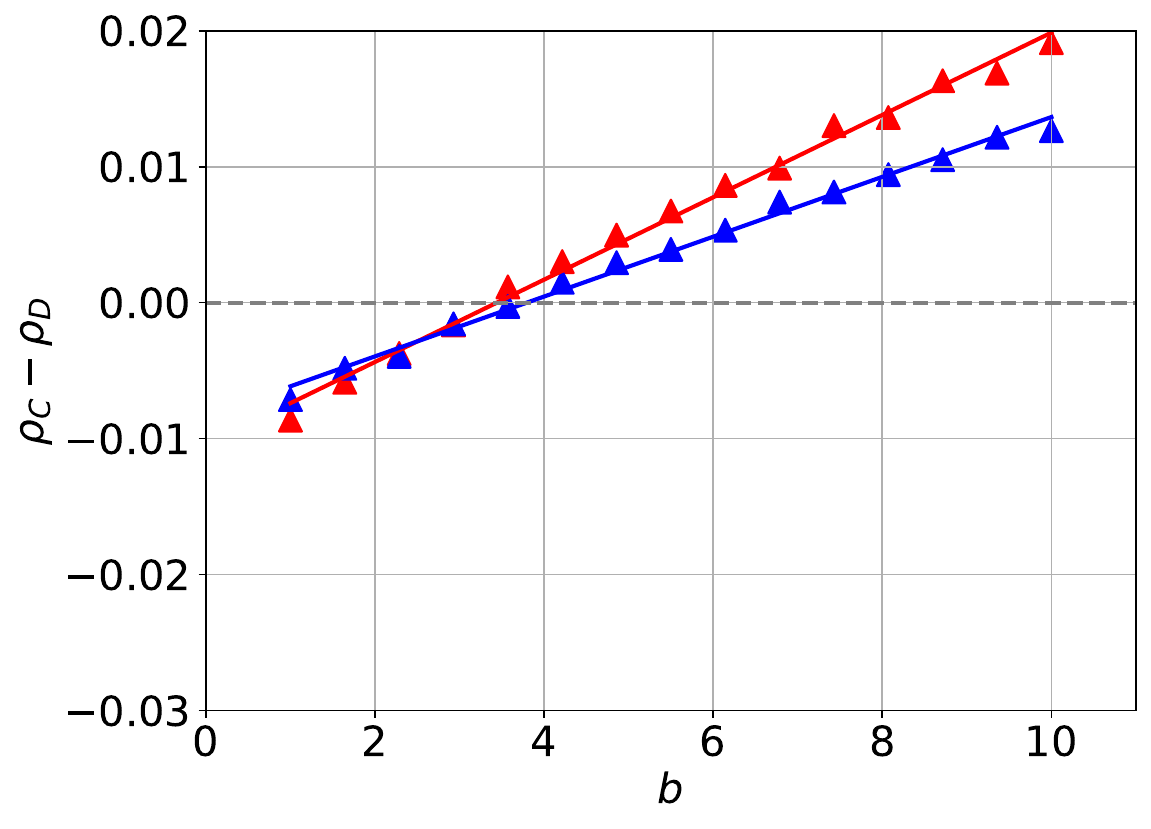}}
\subfigure[Sandia National Labs]{\label{fig: 10(d)}
\includegraphics[scale=0.21]{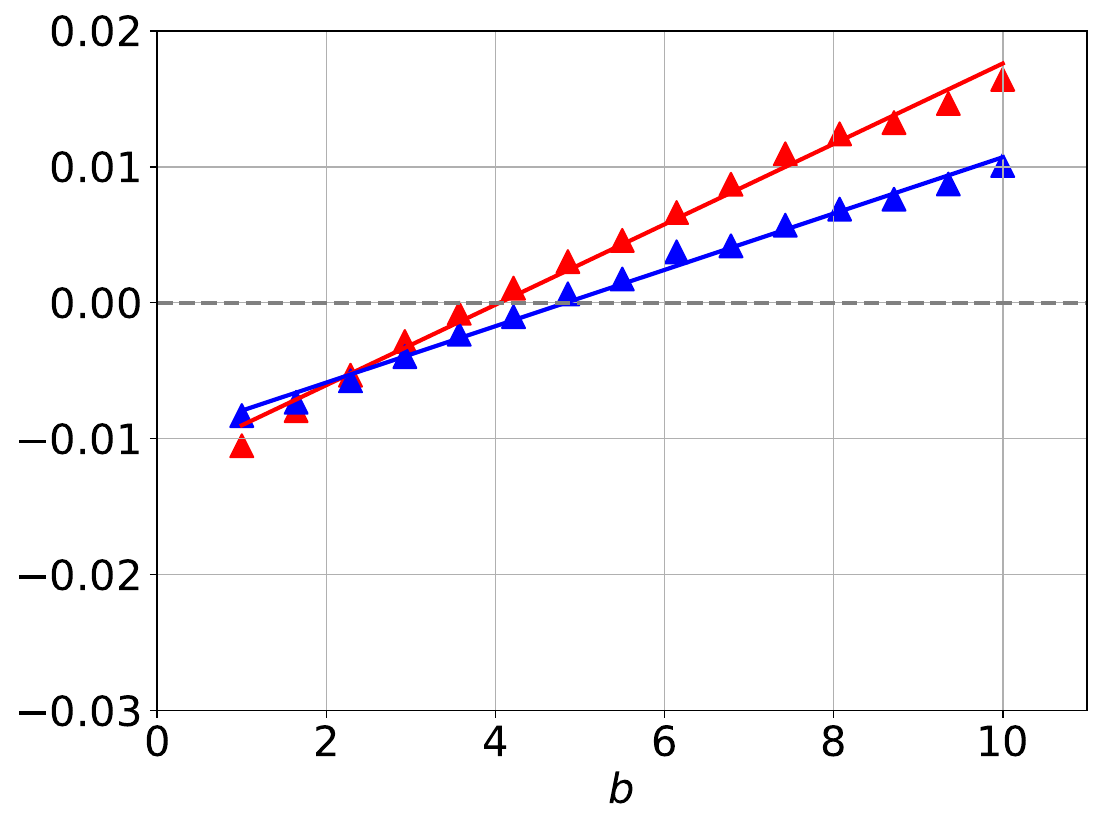}}
\vspace{-3.5mm}
\caption{\textbf{Fixation probabilities of the prisoner's dilemmas in real networks. } We fix $\lambda=3.50$, $\mu\in\{2.60,3.70\}$, and $c=1$. (a) The dolphin network, (b) the USA contiguous border network, (c) the Retweet network, and (d) the collaboration network of Sandia National Laboratories. For each curve, we show $\rho_C-\rho_D$ in $b\in(1,10]$ with $15$ data points. The results for $\mu=2.60$ and $\mu=3.70$ are presented in red and blue triangles respectively. The fixation probabilities for cooperation and defection are obtained by averaging $10^5$ independent experiments. The solid lines in the same colors are the fitting lines to show the overall trends, which are obtained by linear regression. (Color online)} \label{fig: 10}
\end{figure}

In addition to the generative networks, we use four real-world network data sets to further discuss the effect of power-law activating patterns on the fixation of cooperation. Our employed real networks are \cite{nr-aaai15}: i) the dolphin network, the interaction relationship of 62 dolphins, $\vert\mathcal{E}\vert=159$, ii) the USA contiguous network, the border of 49 states in America, $\vert\mathcal{E}\vert=107$, iii) the retweet network, retweets among 96 Twitter users, $\vert\mathcal{E}\vert=117$, and iv) the collaboration network of 86 scientists at Sandia National Laboratories $\vert\mathcal{E}\vert=124$. Using these four networks as the underlying network $\mathcal{G}$, we can study the effect of power-law intermittent interaction among individuals on the evolution dynamics of cooperation. 

From the results in Fig. \ref{fig: 10}, we find that $\mu=2.60$ can yield a less strict condition for the fixation of cooperation than $\mu=3.70$. The cost that each individual should pay to maintain the cooperation is lower for $\mu=2.60$. Additionally, our focal quantity $\rho_C-\rho_D$ increases faster with the growth of $b$ if $\mu=2.60$. According to Proposition \ref{proposition: 1}, the increase of $\mu$ leads to a greater probability of finding a vertex in the quiescent state, which causes fewer neighbors to interact with on average. Namely, a large mean degree of the activated subgraph is beneficial to the fixation of cooperation, which is against previous findings on the condition for cooperation based on network degrees \cite{ohtsuki2006simple,konno2011condition,su2019evolutionary}. A similar trend is also shown in Proposition \ref{proposition: 5}, but the exact solution for the success of cooperation on any network structure is still an open issue in this paper. The intermittent interaction of individuals may be the reason for real-life populations to overcome social dilemmas and achieve cooperation. 
\section{Conclusion}\label{Sec: IV}
\small
In this article, we emphasize the power-law inter-event time of vertices in complex networks and propose power-law activating patterns based on continuous Markov chain theory. In the proposed model, we assume that each vertex in the network switches between two states (activated and quiescent) with power-law rates, i.e., each vertex stays in each state for a random time that follows two independent power-law distributions. We find a homogeneous stationary distribution of activated vertex numbers and give some statistical properties. To examine the dynamic process in the proposed network model, we employ the two-person-two-strategy evolutionary game theory and study the fixation probability and critical cooperation condition. We show that the strategic combination of the population evolves to one of the pure cooperative or defective states, and the power-law activating patterns can influence both the fixation probability and the cooperation condition.

Our most significant findings in this paper are the homogeneous stability of network sizes with heterogeneous activity patterns and its positive effect on cooperation in evolutionary dynamics. We note that this topic can be further studied by importing more general assumptions to fit the real world. For example, the power-law distribution of the inter-event time is the most common one of the many situations. Exponential distributions, power-law distributions with exponential cut-off, and log-normal distributions are also found in real-world data sets. Our study on power-law activating patterns can be extended to any probability distribution. Additionally, the correlation of activation among vertices can lead to different results on both network topology and evolutionary dynamics. As we have discussed, the cooperation condition that we obtained theoretically is sufficient but not necessary. The precise (sufficient and necessary conditions) critical condition for the cooperation of the evolutionary dynamics is still not clear in this paper. In addition to the evolutionary dynamics, other dynamic processes (e.g., epidemic propagation, synchronization, and percolation) are vital in understanding the effects of network structures. Therefore, studying vertex phase switching on these dynamic processes is essential to further capture the effect of inter-event interaction patterns. 
\small
\bibliographystyle{ieeetr}
\small
\small

\ifCLASSOPTIONcaptionsoff
  \newpage
\fi
\begin{IEEEbiography}[{\includegraphics[width=1in,height=1.25in,clip,keepaspectratio]{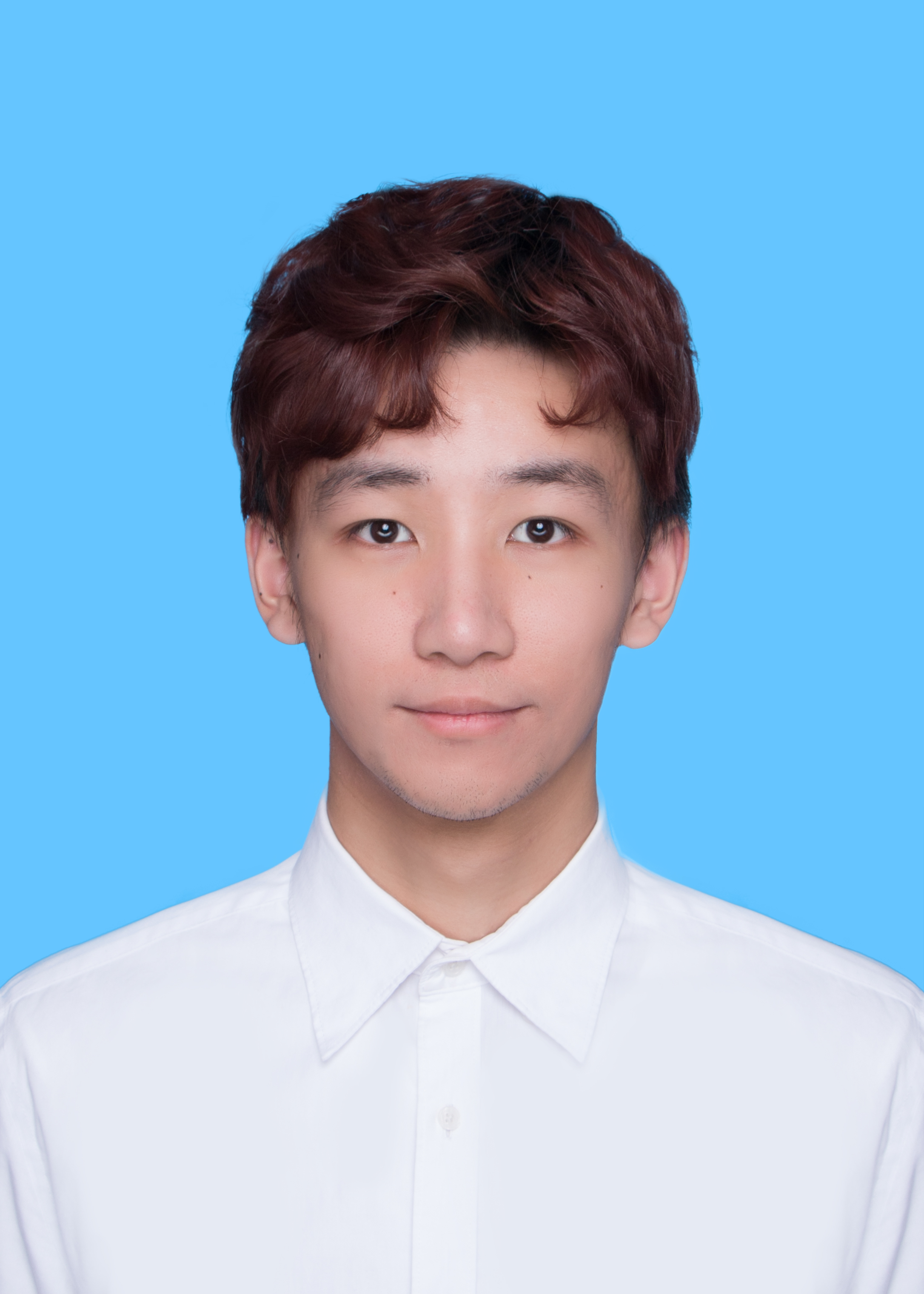}}]{Ziyan Zeng}
received the bachelor’s degree from the College of Artificial Intelligence, Southwest University, Chongqing, China. He is currently pursuing the master’s degree in computer science. His research interests include complex networks, evolutionary games, stochastic processes, and nonlinear science. 
\end{IEEEbiography}
\begin{IEEEbiography}
[{\includegraphics[width=1in,height=1.25in,clip,keepaspectratio]{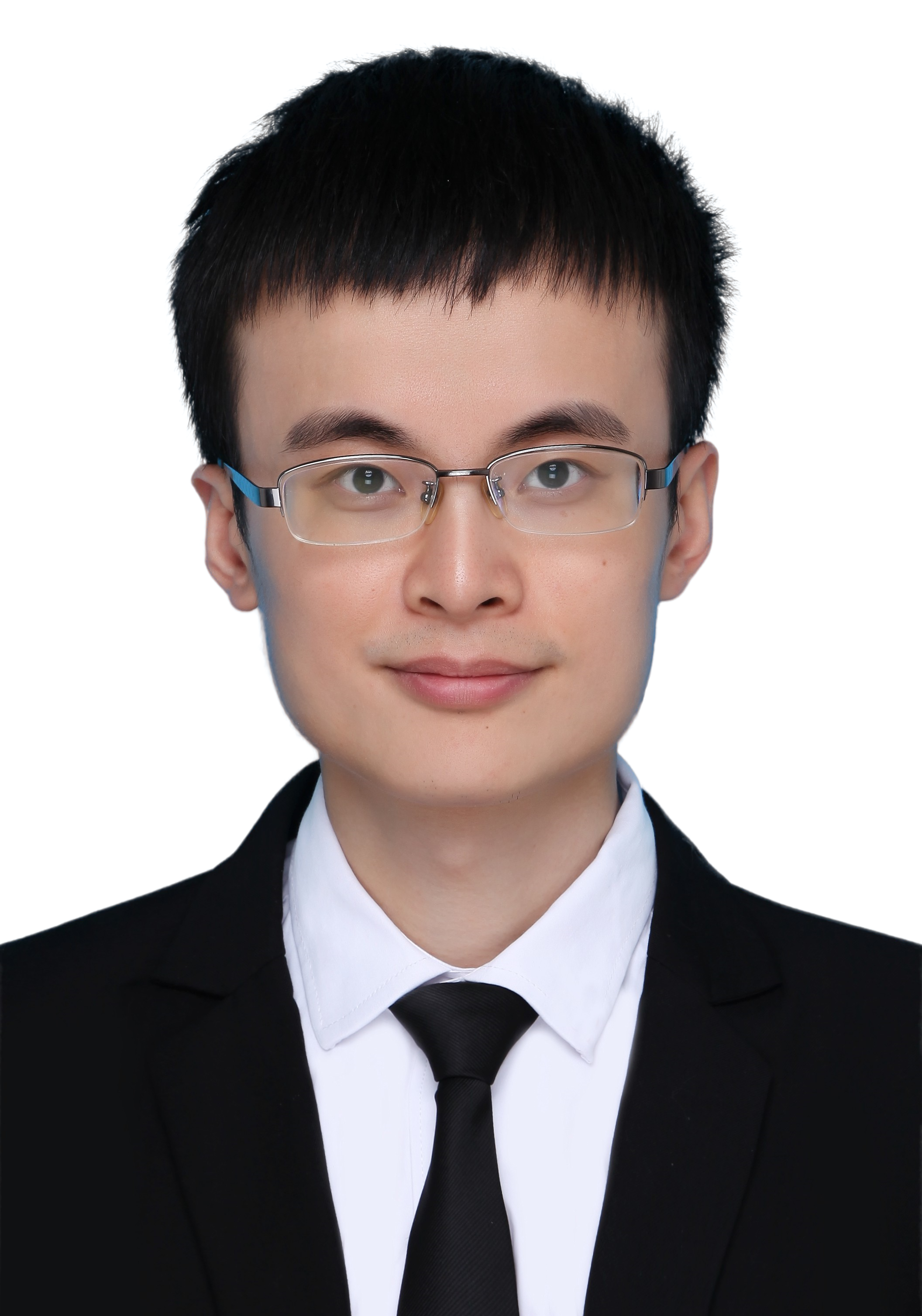}}]{Minyu Feng}
received his Ph.D. degree in Computer Science from a joint program between University of Electronic Science and Technology of China, Chengdu, China, and Humboldt University of Berlin, Berlin, Germany in 2018. Since 2019, he has been an associate professor at the College of Artificial Intelligence, Southwest University, Chongqing, China. 

Dr. Feng is a Senior Member of IEEE, China Computer Federation (CCF), and Chinese Association of Automation (CAA). Currently, he serves as a Subject Editor of Applied Mathematical Modelling and an Editor of International Journal of Mathematics for Industry. He is also a Reviewer for Mathematical Reviews of the American Mathematical Society.

Dr. Feng's research interests include Complex Systems, Evolutionary Game Theory, Computational Social Science, and Mathematical Epidemiology.
\end{IEEEbiography}
\begin{IEEEbiography}
[{\includegraphics[width=1in,height=1.25in,clip,keepaspectratio]{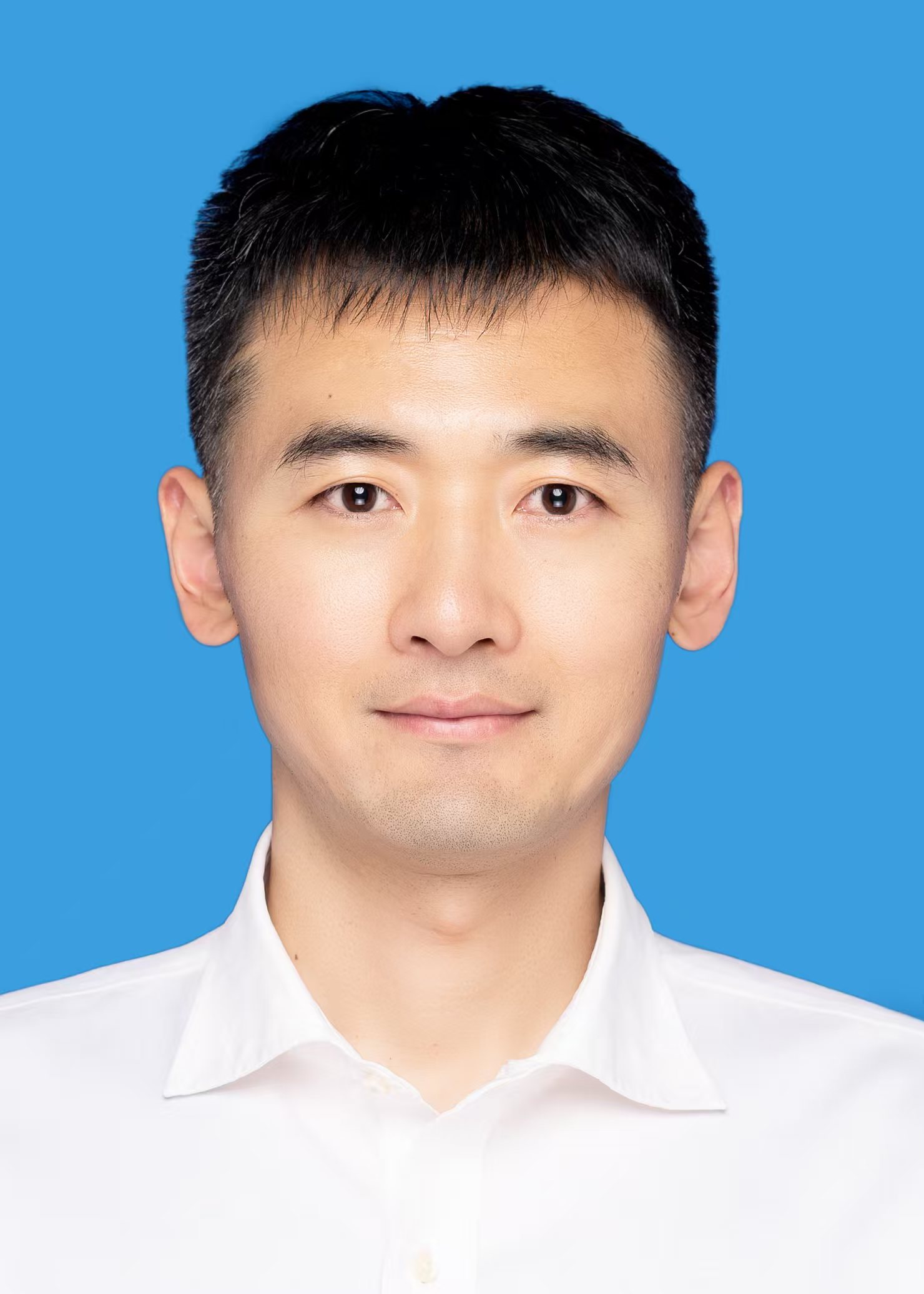}}]{Pengfei Liu}
received the B.Sc., M.Sc., and Ph.D. degrees from University of Electronic Science and Technology of China (UESTC) in 2009, 2012, and 2018, respectively, all in computer science and technology. He is an assistant researcher with UESTC and a Post-Doctoral Researcher with  Yangtze Delta Region Institute (Quzhou), UESTC. His current research interests include complex systems, multi-agent systems, swarm intelligence and edge computing.
\end{IEEEbiography}
\begin{IEEEbiography}
[{\includegraphics[width=1in,height=1.25in,clip,keepaspectratio]{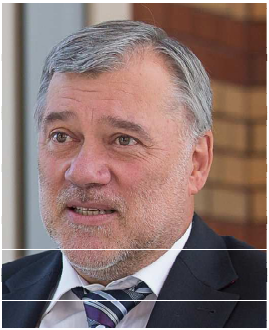}}]{J\"{u}rgen Kurths}
received the B.S. degree in mathematics from the University of Rostock, Rostock, Germany, the Ph.D. degree from the Academy of Sciences, German Democratic Republic, Berlin, Germany, in 1983, the Honorary degree from N.I.Lobachevsky State University, Nizhny Novgorod, Russia, in 2008, and the Honorary degree from Saratow State University, Saratov, Russia, in 2012.

From 1994 to 2008, he was a Full Professor with the University of Potsdam, Potsdam, Germany. Since 2008, he has been a Professor of Nonlinear Dynamics with the Humboldt University of Berlin, Berlin, Germany, and the Chair of the Research Domain
 Complexity Science with the Potsdam Institute for Climate Impact Research,
 Potsdam. He has authored more than 700 papers, which are cited more than
 60000 times (H-index: 111). His main research interests include synchronization, complex networks, time series analysis, and their applications.
 
 Dr. Kurths was the recipient of the Alexander von Humboldt Research
 Award from India, in 2005, and from Poland in 2021, the Richardson Medal of
 the European Geophysical Union in 2013, and the Eight Honorary Doctorates.
 He is a Highly Cited Researcher in Engineering. He is a member of the
 Academia 1024 Europaea. He is an Editor-in-Chief of Chaos and on the
 editorial boards of more than ten journals. He is a Fellow of the American
 Physical Society, the Royal Society of Edinburgh, and the Network
 Science Society. 
\end{IEEEbiography}
\end{document}